\providecommand{\U}[1]{\protect\rule{.1in}{.1in}}
\numberwithin{equation}{section}
\newtheorem{theorem}{Theorem}
\newtheorem{lemma}[theorem]{Lemma}
\newtheorem{proposition}[theorem]{Proposition}
\newtheorem{remark}[theorem]{Remark}
\newenvironment{proof}[1][Proof]{\textbf{#1.} }{\ \rule{0.5em}{0.5em}}
\begin{document}

\bigskip

\bigskip

\bigskip

\begin{center}
\textbf{A CLASS OF DUST-LIKE SELF-SIMILAR SOLUTIONS OF THE MASSLESS
EINSTEIN-VLASOV SYSTEM.}

\bigskip

\bigskip Alan D. Rendall\footnote{Max Planck Institute for Gravitational
Physics, Albert Einstein Institute, Am M\"{u}hlenberg 1, 14476 Potsdam,
Germany}, Juan J. L. Vel\'{a}zquez\footnote{ICMAT
(CSIC-UAM-UC3M-UCM),\ Universidad Complutense, Madrid 28035, Spain.}

\bigskip
\end{center}

\begin{abstract}
In this paper the existence of a class of self-similar solutions of the
Einstein-Vlasov system is proved. The initial data for these solutions are
not smooth, with their particle density being supported in a submanifold of
codimension one. They can be thought of as intermediate between smooth
solutions of the Einstein-Vlasov system and dust. The motivation for
studying them is to obtain insights into possible violation of weak 
cosmic censorship by solutions of the Einstein-Vlasov system. By assuming
a suitable form of the unknowns it is shown that the existence question
can be reduced to that of the existence of a certain type of solution of a 
four-dimensional system of ordinary differential equations depending on two 
parameters. This solution starts at a particular point $P_0$ and converges to 
a stationary solution $P_1$ as the independent variable tends to infinity. 
The existence proof is based on a shooting argument and involves relating 
the dynamics of solutions of the four-dimensional system to that of solutions 
of certain two- and three-dimensional systems obtained from it by limiting 
processes.

\end{abstract}

\bigskip

\section{INTRODUCTION}

\bigskip

\bigskip

\bigskip

It is well known that solutions of the Einstein equations coupled with
suitable models of matter can yield singularities in finite time. The unknowns
in these equations are the spacetime metric and some matter fields. The exact
nature of the latter depends on the physical situation being considered. The
usual terminology in general relativity is that there is said to be a
singularity if the metric fails to be causally geodesically complete, i.e. if
there are timelike or null geodesics which in at least one direction are
inextendible and of finite affine length. The singularity is said to be in the
future or the past according to the incomplete direction of the geodesics. It
is expected on the basis of physical intuition, and known to be true in some
simple cases, that the geodesic incompleteness is associated with the energy
density or some curvature invariants blowing up. For background on this
subject see textbooks such as \cite{hawking}, \cite{wald} and \cite{rendall08}%
. One of the best known types of singularities in general relativity are those
which occur inside black holes. When a black hole is formed by the collapse of
matter it is known that under suitable circumstances an event horizon is
formed which ensures that the singularity can have no influence on distant observers.

Mathematical relativity is the study of the properties of solutions of the
Einstein equations coupled to various matter equations. One of the main
questions in the field is the cosmic censorship hypothesis. There are two
versions of this conjecture called weak and strong cosmic censorship, both of
which were proposed by Roger Penrose. It should be noted that, contrary to
what the names might suggest, the strong version does not imply the weak one.
The results proved in what follows are motivated by weak cosmic censorship and
strong cosmic censorship will not be discussed further here. Weak cosmic
censorship is a statement which concerns isolated systems in general
relativity. Mathematically this means considering solutions of the Einstein
equations which evolve from asymptotically flat initial data. Initial data for
the Einstein equations consist of a Riemannian metric $h_{ab}$, a symmetric
tensor $k_{ab}$ and some matter fields which for the moment will be denoted
generically by $F_{0}$, all defined on a three-dimensional manifold $S$.
Solving the Cauchy problem for the Einstein-matter equations means embedding
the manifold $S$ into a four-dimensional manifold $M$ on which are defined a
Lorentzian metric $g_{\alpha\beta}$ and matter fields $F$ such that $h_{ab}$
and $k_{ab}$ are the pull-backs to $S$ of the induced metric and second
fundamental form of the image of the embedding of $S$ while $F_{0}$ is the
pullback of the matter fields. The metric $g_{\alpha\beta}$ and the matter
fields $F$ are required to satisfy the Einstein-matter equations. A
comprehensive treatment of the Cauchy problem for the Einstein equations can
be found in \cite{ringstrom09}. Initial data on ${\mathbb{R}}^{3}$ are called
asymptotically flat if the metric $h_{ab}$ tends to the flat metric at
infinity in a suitable sense while $k_{ab}$ and $F_{0}$ tend to zero.
Physically this corresponds to concentrating attention on a particular
physical system while ignoring the influence of the rest of the universe.

A solution of the Einstein-matter equations evolving from initial data is said
to be a \textit{development} of that data if each inextendible causal curve
intersects the initial hypersurface precisely once. When this property holds
the initial hypersurface is said to be a Cauchy hypersurface for that
solution. In general, a solution is called globally hyperbolic if it admits a
Cauchy hypersurface. For prescribed data there is a development which is
maximal in the sense that any other development can be embedded into it. It is
unique up to a diffeomorphism which preserves the initial hypersurface.

In a spacetime evolving from asymptotically flat data it is often possible to
define future null infinity ${\mathcal{I}}^{+}$ as a set of ideal endpoints of
complete future-directed null geodesics. We can say that any singularity
occurring does not influence events near infinity if there is no inextendible
causal curve to the future of the initial hypersurface which is incomplete in
the past while intersecting a future-complete null geodesic. The first of
these properties means intuitively that this curve represents a signal which
comes out of a singularity while the second property means that it reaches a
region which can communicate with infinity. If a curve of this type does exist
it is said that a globally naked singularity exists. The past of null
infinity, $J^{-}({\mathcal{I}}^{+})$, is the set of points for which there is
a future-directed causal curve starting there and going to null infinity. The
complement of $J^{-}({\mathcal{I}}^{+})$ is called the black hole region. Its
boundary is called the event horizon and is a null hypersurface in $M$.

There is a notion of completeness of null infinity. A precise definition will
not be given here but roughly speaking it corresponds to the situation where
there are timelike curves contained in $J^{-}({\mathcal{I}}^{+})$ which exist
for a infinite time towards the future. Physically this means that there are
observers which can remain outside the black hole for an unlimited amount of
time. If the maximal globally hyperbolic development of asymptotically flat
initial data always has a complete null infinity then this ensures the absence
of globally naked singularities. For any inextendible causal curve to the
future of the initial surface which goes to null infinity must intersect the
initial hypersurface. Hence it cannot be incomplete in the past. The
completeness of ${\mathcal{I}}^{+}$ ensures that the solution is large enough
to represent the whole future of a system evolving from the initial data under
consideration. The intuitive content of the weak cosmic censorship hypothesis
is that in the time evolution corresponding to initial data for the Einstein
equations coupled to reasonable (non-pathological) matter the existence of a
singularity implies that of an event horizon which covers the singularity and
hides it from distant observers. Often this is weakened to the requirement
that a horizon exists in the case of generic initial data. Up to now this
intuitive picture has only been developed into a precise mathematical
formulation under special circumstances. In general finding the correct
formulation is part of the problem to be solved.

Due to the mathematical complexity of the Einstein equations many of the
studies related to singularity formation for these equations have been carried
out for spherically symmetric solutions. In spherical symmetry the Einstein
vacuum equations are non-dynamical due to Birkhoff's theorem, which says that
any spherically symmetric vacuum solution is locally isometric to the
Schwarzschild solution and, in particular, static. Thus it is essential to
include matter of some kind. A matter model which has proved very useful for
this task is the scalar field. This is a real-valued function $\phi$ which
satisfies the wave equation $\nabla^{\alpha}\nabla_{\alpha}\phi=0$. In this
case the Einstein equations take the form $R_{\alpha\beta}=8\pi\nabla_{\alpha
}\phi\nabla_{\beta}\phi$, where $R_{\alpha\beta}$ is the Ricci curvature of
$g_{\alpha\beta}$. The spherically symmetric Einstein-scalar field equations
were studied in great detail in a series of papers by Demetrios Christodoulou.
This culminated in \cite{christodoulou94} and \cite{christodoulou99}. In
\cite{christodoulou94} it was shown that in this system naked singularities
can evolve from regular asymptotically flat initial data. This represents a
problem for the weak cosmic censorship hypothesis but the conjecture can be
saved by a genericity assumption since it was shown in \cite{christodoulou99}
that generic initial data do not lead to naked singularities.

For the spherically symmetric Einstein-scalar field equations it is known from
the work of Christodoulou \cite{christodoulou86} that small asymptotically 
flat initial data lead to a solution which is geodesically complete and hence 
free of singularities. (In fact this small data result has recently been 
extended to the case without symmetry \cite{lindblad}.) On
the other hand there are certain large initial data for which it is known that
a black hole is formed. The threshold between these two types of behaviour was
studied in influential work by Choptuik \cite{choptuik} and many other papers
since. This area of research is known as critical collapse and is surveyed in
\cite{gundlach07}. It is entirely numerical and heuristic and unfortunately
mathematically rigorous results are not yet available.

The scalar field provides a simple and well-behaved matter model. At the same
time no such field has been experimentally observed and the matter fields of
importance for applications to astrophysics are of other kinds. One
astrophysically relevant matter field which has good mathematical properties
is collisionless matter described by the Vlasov equation. The necessary
definitions are given in the next section. For the moment let it just be noted
that the unknown in the Vlasov equation is a non-negative real-valued function
$f(t,x^{a},v^{b})$ depending on local coordinates $(t,x^{a})$ on $M$ and
velocity variables $v^{b}$. Analogues of a number of the results proved for
the scalar field have been proved for the Einstein-Vlasov system. For small
initial data the solutions are geodesically complete \cite{rein92}. There are
certain large initial data for which a black hole is formed
\cite{andreasson08}. The threshold between these two types of behaviour has
been investigated numerically in \cite{rein98} and \cite{olabarrieta}. A
closely related matter model which has been very popular in theoretical
general relativity is dust, a fluid with vanishing pressure. It is equivalent
to consider distributional solutions of the Vlasov equation of the form
$f(t,x^{a},v^{b})=\rho(t,x^{a})\delta(v^{b}-u^{b}(t,x^{a}))$ where the
$\delta$ is a Dirac distribution. From many points of view dust is relatively
simple to analyse. Unfortunately it has a strong tendency to form
singularities where the energy density blows up, even in the absence of
gravity. For this reason it must be regarded as pathological and of limited
appropriateness for the investigation of cosmic censorship. A detailed
mathematical study of formation of singularities in the Einstein equations
coupled to dust was given in \cite{christodoulou84}. In spherical symmetry
dust particles move as spherical shells. It can easily happen that shells
including a strictly positive total mass come together at one radius and 
this causes the density to blow up. This effect is known as shell-crossing.

The motivation for this paper is the wish to understand cosmic censorship
better for spherically symmetric solutions of the Einstein-Vlasov system. Is
it true that in asymptotically flat spherically symmetric solutions of the
Einstein-Vlasov system there are no naked singularities for generic data so
that collisionless matter is as well-behaved as the scalar field? Could it
even be that the Vlasov equation is better-behaved and that there are no naked
singularities at all? No answers to these questions, positive or negative, are
available although considerable effort has been invested into obtaining a
positive answer. In what follows we try to obtain new insights by approaching
a negative result through an interpolation between dust and smooth solutions
of the Vlasov equation and looking for self-similar solutions. There are some
results on related equations which give some hints. In the case of the
Vlasov-Poisson system, the non-relativistic analogue of the Einstein-Vlasov
system, global existence for general data, not necessarily symmetric, was
proved by Pfaffelmoser \cite{pfaffelmoser} and Lions and Perthame
\cite{lions}. The relativistic Vlasov-Poisson system, which is in some sense
intermediate between the Vlasov-Poisson and Einstein-Vlasov systems, (but not
in all ways) has been shown to have solutions which develop singularities in
finite time. Rather precise information is available about the nature of these
singularities \cite{lemou}.

As a side remark, we mention a paper \cite{shapiro} where it was suggested
that naked singularities are formed in solutions of the Einstein-Vlasov
system. The solutions concerned were axially symmetric but not spherically
symmetric. The work is purely numerical but trying to understand what it means
for the analytical problem leads to the conclusion that the solutions computed
in \cite{shapiro} were dust solutions rather than smooth solutions of the
Einstein-Vlasov system. This is discussed in \cite{rendall92}. There are also
reasons for doubting that the numerical results really show the formation of a
naked singularity \cite{wald91}.

A class of distributional solutions of the Einstein-Vlasov system intermediate
between smooth solutions and dust is given by the Einstein clusters
\cite{einstein}. These are spherically symmetric and static, i.e. there exists
a timelike Killing vector field which is orthogonal to spacelike
hypersurfaces. It is supposed that the support of $f$ consists of $v^{a}$ such
that the geodesics with these initial data are tangent to the spheres of
constant distance from the centre of symmetry on these spacelike
hypersurfaces. This means that the radial velocity and its time derivative in
the geodesic equation are zero. These are in general two independent
conditions on the data at a given time. A wider class, the generalized
Einstein clusters \cite{datta}, \cite{bondi}, is obtained as follows. In the
case of the Einstein clusters taking the union of the spheres at a fixed
distance from the centre defines a \ foliation of the spacetime by timelike
hypersurfaces and the condition on the support means that the four-velocity of
a particle with the given initial data is everywhere tangent to these timelike
hypersurfaces. The generalized Einstein clusters are obtained by dropping the
condition of staticity and replacing the family of timelike hypersurfaces
invariant under the timelike Killing vector field by another foliation by
timelike hypersurfaces which intersect any Cauchy surface in spheres and whose
equation of motion follows from the Vlasov equation. Once again the
four-velocity of a particle in the support of $f$ is tangent to these
hypersurfaces at all times. An analytical formulation of this definition will
be given in the next section. It should be noted that the generalized Einstein
clusters exhibit shell-crossing singularities and thus can still be thought of
as pathological. We are interested in them as an intermediate step towards
better-behaved matter models.

There are two major differences between the generalized Einstein clusters and
the solutions studied in this paper. In the case of Einstein clusters the
value of the angular momentum of the particles $F$ is uniquely determined by
the distance $r$ to the centre of symmetry. By contrast, in the solutions
studied in this paper the angular momentum takes a continuous range of values
for each value of $r.$ The second difference is that in the case of Einstein
clusters at each spacetime point the component of the velocity vector $v^{a}$
of a particle in the direction of the vector $\partial_{r}$ takes on only one
value. In the case of the solutions obtained in this paper the component along
the direction $\partial_{r}$ of the velocity vector takes on two different
values at most spacetime points. This only fails at some exceptional values of
$r$ at a given time. This difference in the structure of the generalized
Einstein clusters and the solutions considered in this paper is what gives
some plausibility to the idea that the solutions described here could be a big
step towards better-behaved matter models. From the physical point of view, in
the case of the generalized Einstein clusters the material particle with the
smallest value of $r$ would not experience any gravitational field, and
therefore could not approach the centre $r=0$ unless its angular momentum
vanished. In the solutions studied in this paper, since two radial velocities
are allowed at each spacetime point, the material particle with the smallest
value of $r$ changes in time. This allows the occurrence of a collective
collapse of the whole distribution of particles towards the origin with some
of them coming closer and closer to the center as the value of some suitable
time coordinate $t$ increases.

Self-similar solutions of the massless Einstein-Vlasov system have also been
considered in the paper \cite{martingarcia}. There are several differences
between the approach in \cite{martingarcia} and the one considered in this
paper. The first one is the choice of the rescaling group under which the
solutions are invariant. The massless Einstein-Vlasov system is invariant
under a two-dimensional group of rescalings. The choice of a particular
one-dimensional rescaling group has been made in this paper by imposing that
the distribution function $f$ for the particles remains always of order one
(see Sections \ref{Section1}, \ref{Section2}). This condition is natural,
because the function $f$ is invariant along characteristic curves. On the
contrary, the choice of one-dimensional rescaling group for the solutions in
\cite{martingarcia} imposes that $f$ becomes unbounded near the singularity
for the particles within the self-similar region, something that can be
achieved assuming that the distribution of matter is singular near the
light-cone. The second difference between the solutions in \cite{martingarcia}
and those in this paper is that the solutions in \cite{martingarcia} can be
thought of as self-similar perturbations of the flat Minkowski space. As a
matter of fact they have been computed by means of a perturbative iteration
procedure that takes flat space as a starting point and where the terms in the
resulting series have been computed numerically. By contrast, the solutions of
this paper are obtained by means of a shooting procedure in which a parameter
that measures the amount of energy in the self-similar region is of order one.
The approach in this paper uses purely analytical methods and does not rely on
numerical computations. On the other hand, in order to simplify the arguments, 
we have restricted the analysis in this paper to the study of dust-like
solutions, an assumption that was not made in \cite{martingarcia}.

The plan of the paper is as follows. We will first reduce the problem of
finding self-similar solutions of the Einstein-Vlasov system to an ODE problem
that can be transformed into a four-dimensional system using suitable changes
of variables. Using these transformations it will be seen that the
construction of the desired self-similar solutions reduces to finding a
particular orbit in the corresponding four-dimensional space connecting a
certain point with a steady state that has a three-dimensional stable
manifold. The existence of such an orbit will be shown by adjusting a
parameter that measures the density of particles in a particular perturbative
limit. The precise limit under consideration, which has the goal of making the
problem feasible using analytical methods, corresponds to assuming that the
radius of the region empty of particles, measured in the natural self-similar
variables, is small.

\section{\label{Section1}THE EINSTEIN-VLASOV SYSTEM IN\protect\linebreak%
\ SCHWARZSCHILD COORDINATES.}

\bigskip

We do not use exactly the classical Schwarzschild coordinates, but a slight
modification of them that normalizes the time to be the proper time at the
center $r=0$. The metric is given by (cf. \cite{Rein}):
\begin{equation}
ds^{2}=-e^{2\mu\left(  t,r\right)  }dt^{2}+e^{2\lambda\left(  t,r\right)
}dr^{2}+r^{2}\left(  d\theta^{2}+\sin^{2}\theta d\varphi^{2}\right)  .
\label{met1}%
\end{equation}
If we restrict our attention to spherically symmetric solutions it is
convenient to use the quantities (cf. \cite{Rein}):%
\[
r=\left\vert x\right\vert \;,\;w=\frac{x\cdot v}{r}\;,\;F=\left\vert x\wedge
v\right\vert ^{2}%
\]
to parametrize the velocity variables. In particular $F$ is constant along
characteristics. Writing the particle density as
\[
f=f\left(  r,w,F,t\right)
\]
the Einstein-Vlasov system for spherically symmetric solutions in these
coordinates becomes:
\begin{equation}
\partial_{t}f+e^{\mu-\lambda}\frac{w}{E}\partial_{r}f-\left(  \lambda
_{t}w+e^{\mu-\lambda}\mu_{r}E-e^{\mu-\lambda}\frac{F}{r^{3}E}\right)
\partial_{w}f=0 \label{S1E1}%
\end{equation}
where:
\begin{equation}
E=\sqrt{1+w^{2}+\frac{F}{r^{2}}} \label{S1E2}%
\end{equation}
and the functions $\lambda,\;\mu$ that characterize the gravitational field
satisfy:
\begin{align}
e^{-2\lambda}\left(  2r\lambda_{r}-1\right)  +1  &  =8\pi r^{2}\rho
,\label{S1E3}\\
e^{-2\lambda}\left(  2r\mu_{r}+1\right)  -1  &  =8\pi r^{2}p \label{S1E4}%
\end{align}
with boundary conditions:
\begin{align}
\mu\left(  0\right)   &  =0\;\;,\;\lambda\left(  0\right)  =0,\;\label{S1E5}\\
\lambda\left(  \infty\right)   &  =0. \label{S1E6}%
\end{align}
On the other hand $\rho$ and $p$ are given by:
\begin{align}
\rho &  =\rho\left(  r,t\right)  =\frac{\pi}{r^{2}}\int_{-\infty}^{\infty
}\left[  \int_{0}^{\infty}EfdF\right]  dw,\label{S1E7}\\
p  &  =p\left(  r,t\right)  =\frac{\pi}{r^{2}}\int_{-\infty}^{\infty}\left[
\int_{0}^{\infty}\frac{w^{2}}{E}fdF\right]  dw. \label{S1E8}%
\end{align}

With these basic equations in hand it is possible to give some details
concerning generalized Einstein clusters, as promised in the introduction.
These are not required to understand the main results of the paper but help to
put those results into a wider context. A distributional solution of the
Vlasov equation whose support is a smooth submanifold $\Sigma$ has the
property that $\Sigma$ is a union of characteristics of the equation. A simple
example is that of dust where the support is the graph of a function
$u^{a}(t,x)$ of the form $W(t,r)\frac{x^{a}}{r}$. When expressed in terms of
polar coordinates this becomes the graph of a function $W(t,r)$ augmented by
the condition $F=0$. Here the function $W$ solves the equations
\begin{align}
&  \frac{dR}{dt}=e^{\mu(t,R)-\lambda(t,R)}\frac{W}{E},\\
&  \frac{dW}{dt}=-(\dot\lambda(t,R)W+e^{\mu(t,R)-\lambda(t,R)}\mu^{\prime
}(t,R)E)
\end{align}
where $E=\sqrt{1+W^{2}}$.

Now consider the generalized Einstein clusters. They are only defined under
the condition of spherical symmetry. They can be thought of as defining a
matter model which can be used in the spherically symmetric Einstein-matter
equations. Here they will be described in terms of Schwarzschild coordinates.
The basic unknown is a function $R(t,r)$ which satisfies $R(0,r)=r$. It is the
area radius at time $t$ of the shell which had area radius $r$ at time $0$. As
input we require a function $F(r)$ which is the angular momentum of the
particles on the shell which was at radius $r$ at time zero and $N(r)$ which
is the density of particles per shell evaluated on the shell which had area
radius $r$ at $t=0$. For some purposes it is more convenient to use $R$ as a
radial coordinate instead of $r$ and this is what was done in the original
papers \cite{datta} and \cite{bondi}. For a given shell at a given time the
angular momentum and radial velocity of the particles are fixed and so the
intersection of the support of the solution with the fibre of the mass shell
over the point with coordinates $(t,r)$ has codimension two. The following
equations should be satisfied:
\begin{align}
&  \frac{dR}{dt}=e^{\mu-\lambda}\frac{W}{E},\\
&  \frac{dW}{dt}=-\left(  \dot{\lambda}W+e^{\mu-\lambda}\mu^{\prime}E
-e^{\mu-\lambda}\frac{F}{R^{3}E}\right)
\end{align}
where $E=\sqrt{1+W^{2}+\frac{F}{R^{2}}}$ and the functions $\lambda$ and $\mu$
are to be evaluated at the point $(t,R)$. These are the full characteristic
equations for the Vlasov equation. The difference in the coupled system comes
from the fact that the expressions for the components of the energy-momentum
tensor are different in the two cases. In the case of Einstein clusters the
characteristics of interest have $W=0$ and $\frac{dW}{dt}=0$. It follows
immediately that $\frac{dR}{dt}=0$. In that case the angular momentum is
related to the geometry by the relation $F=\frac{r^{3}\mu^{\prime}}%
{1-r\mu^{\prime}}$.

\bigskip

The equations which have been written up to now describe particles of unit
mass. We are interested in the construction of solutions of (\ref{S1E1}%
)-(\ref{S1E8}) supported in a region where $\left(  w^{2}+\frac{F}{r^{2}%
}\right)  $ takes large values near the formation of the singularity. This
suggests replacing (\ref{S1E2}) by:
\begin{equation}
E=\sqrt{w^{2}+\frac{F}{r^{2}}}. \label{S1E9}%
\end{equation}
The system (\ref{S1E1}), (\ref{S1E3})-(\ref{S1E8}), (\ref{S1E9}) is invariant
under the rescaling:
\begin{equation}
r\rightarrow\theta r\;\;,\;\;t\rightarrow\theta t\;\ \ \text{for\ }%
t<0\;,\;\;w\rightarrow\frac{1}{\sqrt{\theta}}w\;\;,\;\;F\rightarrow\theta F
\label{S1E10}%
\end{equation}
for any $\theta>0.$ It is then natural to look for solutions of (\ref{S1E1}),
(\ref{S1E3})-(\ref{S1E9}) invariant under the rescaling (\ref{S1E10}). They
will be the self-similar solutions in which we will be interested in this paper.

The system obtained when (\ref{S1E2}) is replaced by (\ref{S1E9}) can be
interpreted as describing particles of zero rest mass. The rationale for this
assumption is that near the singularity the derived solution will satisfy
$w^{2}+\frac{F}{r^{2}}>>1,$ and therefore it could be expected that it is 
possible to treat the whole Einstein-Vlasov system with massive particles as a
perturbation of the massless problem.

In what follows we will consider solutions of (\ref{S1E1}), (\ref{S1E3}%
)-(\ref{S1E9}) where $f$ is not a bounded function, but a measure concentrated
on some hypersurfaces that will be described in detail later. As was mentioned
in the introduction there is a class of distributional solutions of the
Einstein-Vlasov system which are equivalent to what is usually known in the
literature as dust. From this point of view the solutions considered in this
paper are intermediate between dust and smooth solutions and hence will be
called dust-like solutions. Note, however, that in contrast to dust they do
have some velocity dispersion. The dimension of the support of $f$ in the
tangent space at a given spacetime point is zero for dust, one for generalized
Einstein clusters, two for the solutions in this paper and three for smooth
solutions. For the solutions here it will be possible to describe the
distribution of velocities for the particles at a given point using a function
depending on one coordinate, while a general distribution of velocities
compatible with the assumption of spherical symmetry would depend on two coordinates.

\bigskip

\section{\label{Section2}SELF-SIMILAR SOLUTIONS}

\bigskip

In this section we formulate the system of equations satisfied by the
solutions of (\ref{S1E1}), (\ref{S1E3})-(\ref{S1E8}), (\ref{S1E9}) that are
invariant under the transformation (\ref{S1E10}). We will call these
self-similar solutions in what follows. It is convenient, as a first step, in
order to transform (\ref{S1E1}), (\ref{S1E3})-(\ref{S1E9}) to a more
convenient form to define a new variable:
\begin{equation}
v=\frac{w}{\sqrt{F}}. \label{S3E1}%
\end{equation}
We will assume in the rest of the paper that $f=0$ for $\left(
r,v,F,t\right)  =\left(  r,v,0,t\right)  $ in order to avoid singularities in
(\ref{S3E1}). Moreover, we can even assume a more stringent condition on $f,$
namely $f=0$ for $0\leq F\leq\delta_{0}$ for some $\delta_{0}>0.$ Concerning
the support in the $r$ coordinate, the solutions constructed in this paper
will vanish for $r\leq y_{0}\left(  -t\right)  $ for some $y_{0}>0.$

Making the change of variables $\left(  r,w,F,t\right)  \rightarrow\left(
r,v,F,t\right)  $ and denoting the new distribution function by $f$ with a
slight abuse of notation we can transform the system (\ref{S1E1}),
(\ref{S1E7})-(\ref{S1E9}) into:%

\begin{align}
&  \partial_{t}f+e^{\mu-\lambda}\frac{v}{\tilde{E}}\partial_{r}f-\left(
\lambda_{t}v+e^{\mu-\lambda}\mu_{r}\tilde{E}-e^{\mu-\lambda}\frac{1}%
{r^{3}\tilde{E}}\right)  \partial_{v}f =0,\label{S3E2}\\
&  \tilde{E} =\sqrt{v^{2}+\frac{1}{r^{2}}},\label{S3E3}\\
&  \rho=\frac{\pi}{r^{2}}\int_{-\infty}^{\infty}\tilde{E}\left[  \int
_{0}^{\infty}fFdF\right]  dv,\label{S3E4}\\
&  p =\frac{\pi}{r^{2}}\int_{-\infty}^{\infty}\frac{v^{2}}{\tilde{E}}\left[
\int_{0}^{\infty}fFdF\right]  dv. \label{S3E5}%
\end{align}
Notice that the change of variables (\ref{S3E1}) eliminates the dependence on
the variable $F$ for the characteristic curves associated to the Vlasov
equation (cf. (\ref{S3E2})). Moreover, the functions $\rho$ and $p$ and
therefore the functions $\lambda,\;\mu$ characterizing the gravitational
fields depend on $f$ only through the reduced distribution function:
\begin{equation}
\zeta\left(  r,v,t\right)  \equiv\int_{0}^{\infty}fFdF. \label{S3E6}%
\end{equation}
In particular, it is possible to write a closed problem for the reduced
distribution function that can be obtained multiplying (\ref{S3E2}) by $F$ and
integrating with respect to this variable:
\begin{align}
&  \partial_{t}\zeta+e^{\mu-\lambda}\frac{v}{\tilde{E}}\partial_{r}%
\zeta-\left(  \lambda_{t}v+e^{\mu-\lambda}\mu_{r}\tilde{E}-e^{\mu-\lambda
}\frac{1}{r^{3}\tilde{E}}\right)  \partial_{v}\zeta=0,\label{S3E7}\\
&  \tilde{E} =\sqrt{v^{2}+\frac{1}{r^{2}}},\label{S3E8}\\
&  \rho=\frac{\pi}{r^{2}}\int_{-\infty}^{\infty}\tilde{E}\zeta dv,\label{S3E9}%
\\
&  p =\frac{\pi}{r^{2}}\int_{-\infty}^{\infty}\frac{v^{2}}{\tilde{E}}\zeta dv.
\label{S3E10}%
\end{align}
The system (\ref{S3E7})-(\ref{S3E10}) complemented with (\ref{S1E3}),
(\ref{S1E4}) is a closed system of equations.

We will now study the class of self-similar solutions of the system
(\ref{S1E3}), (\ref{S1E4}), (\ref{S3E2})-(\ref{S3E5}). These are the functions
having the functional dependence:
\begin{align}
f\left(  r,v,F,t\right)   &  =G\left(  y,V,\Phi\right)  \;\;,\;\;\mu\left(
r,t\right)  =U\left(  y\right)  \;\;,\;\;\lambda\left(  r,t\right)
=\Lambda\left(  y\right)  ,\label{S3E11}\\
y  &  =\frac{r}{\left(  -t\right)  }\;\;,\;\;V=\left(  -t\right)
v\;\;,\;\;\Phi=\frac{F}{\left(  -t\right)  }.\;\; \label{S3E12}%
\end{align}
The solutions of (\ref{S1E3}), (\ref{S1E4}), (\ref{S3E2})-(\ref{S3E5}) with
this functional dependence satisfy:
\begin{align}
&  yG_{y}-VG_{V}+\Phi G_{\Phi}+e^{U-\Lambda}\frac{V}{\hat{E}}G_{y}\nonumber\\
&  -\left(  y\Lambda_{y}V+e^{U-\Lambda}U_{y}\hat{E}-e^{U-\Lambda}\frac
{1}{y^{3}\hat{E}}\right)  G_{V}\nonumber\\
&  =0 \label{S4E9}%
\end{align}
where:
\begin{equation}
\hat{E}=\sqrt{V^{2}+\frac{1}{y^{2}}} \label{S4E2}%
\end{equation}
and
\begin{align}
e^{-2\Lambda}\left(  2y\Lambda_{y}-1\right)  +1  &  =8\pi y^{2}\tilde{\rho
},\label{S4E3}\\
e^{-2\Lambda}\left(  2yU_{y}+1\right)  -1  &  =8\pi y^{2}\tilde{p}
\label{S4E4}%
\end{align}
with boundary conditions:
\begin{equation}
U=0\;\;,\;\Lambda=0\;\;\;\text{at\ \ }y=0. \label{S4E6}%
\end{equation}
Here:
\begin{align}
\tilde{\rho}  &  =\frac{\pi}{y^{2}}\int_{-\infty}^{\infty}\hat{E}\left[
\int_{0}^{\infty}G\Phi d\Phi\right]  dV,\label{S4E7}\\
\tilde{p}  &  =\frac{\pi}{y^{2}}\int_{-\infty}^{\infty}\frac{V^{2}}{\hat{E}%
}\left[  \int_{0}^{\infty}G\Phi d\Phi\right]  dV. \label{S4E8}%
\end{align}

\bigskip

The function $G$ which is a solution of (\ref{S4E9})-(\ref{S4E8}) is constant
along the characteristic curves of (\ref{S4E9}) which are given by:
\begin{align}
\frac{dy}{d\sigma}  &  =y+e^{U-\Lambda}\frac{V}{\sqrt{V^{2}+\frac{1}{y^{2}}}%
}=y+e^{U-\Lambda}\frac{Vy}{\sqrt{V^{2}y^{2}+1}},\label{S4E10}\\
\frac{dV}{d\sigma}  &  =-V-\left(  y\Lambda_{y}V+\frac{e^{U-\Lambda}U_{y}}%
{y}\sqrt{V^{2}y^{2}+1}-e^{U-\Lambda}\frac{1}{y^{2}\sqrt{V^{2}y^{2}+1}}\right)
,\label{S4E11}\\
\frac{d\Phi}{d\sigma}  &  =\Phi. \label{S4E12}%
\end{align}
In these equations $\sigma$ is just a parameter that is used to parametrize
the characteristic curves. Its precise definition will be given later in some
specific cases.

\bigskip

The equations (\ref{S4E10})-(\ref{S4E12}) can be integrated explicitly for any
pair of functions $U=U\left(  y\right)  ,\;\Lambda=\Lambda\left(  y\right)  $.
Indeed, the first two equations can be rewritten as:
\begin{align}
\frac{dy}{d\sigma}  &  =e^{-\Lambda}\frac{\partial H}{\partial V}%
,\label{S5E1}\\
\frac{dV}{d\sigma}  &  =-e^{-\Lambda}\frac{\partial H}{\partial y}
\label{S5E2}%
\end{align}
where:
\begin{equation}
H\equiv\frac{e^{U}}{y}\sqrt{V^{2}y^{2}+1}+yVe^{\Lambda}. \label{S5E3}%
\end{equation}
The trajectories in the $\left(  y,V\right)  $-plane associated to the
solutions of (\ref{S4E10}), (\ref{S4E11}) are contained in the level sets:
\begin{equation}
H=h. \label{S5E4}%
\end{equation}

We will also need the self-similar formulation of the integrated form of the
equation (\ref{S3E7}). In this case the function $\zeta$ in (\ref{S3E6}) has
the functional dependence:
\[
\zeta\left(  r,v,t\right)  =\left(  -t\right)  ^{2}\Theta\left(  y,V\right)
.
\]
Notice that:
\begin{equation}
\Theta\left(  y,V\right)  =\int_{0}^{\infty}G\Phi d\Phi. \label{S5E4a}%
\end{equation}
The function $\Theta$ satisfies:%

\begin{align}
&  y\Theta_{y}-V\Theta_{V}-2\Theta+e^{U-\Lambda}\frac{V}{\hat{E}}\Theta
_{y}\nonumber\\
&  -\left(  y\Lambda_{y}V+e^{U-\Lambda}U_{y}\hat{E}-e^{U-\Lambda}\frac
{1}{y^{3}\hat{E}}\right)  \Theta_{V}\nonumber\\
&  =0 \label{S5E8}%
\end{align}
and:
\begin{align}
\tilde{\rho}  &  =\frac{\pi}{y^{2}}\int_{-\infty}^{\infty}\hat{E}\Theta
dV,\label{S5E6}\\
\tilde{p}  &  =\frac{\pi}{y^{2}}\int_{-\infty}^{\infty}\frac{V^{2}}{\hat{E}%
}\Theta dV. \label{S5E7}%
\end{align}
The characteristic curves associated to (\ref{S5E8}) are (\ref{S4E10}),
(\ref{S4E11}) and:
\begin{equation}
\frac{d\Theta}{d\sigma}=2\Theta. \label{S5E9}%
\end{equation}

\bigskip

\section{\label{SectionSSS}SINGULAR SELF-SIMILAR SOLUTIONS: \protect\linebreak
GENERAL PROPERTIES.}

\bigskip

The main goal of this paper is to construct a family of distributional
solutions of (\ref{S4E9})-(\ref{S4E8}) for which $G=G\left(  y,V,\Phi\right)
$ is a measure supported on some surfaces in the three-dimensional space with
coordinates $\left(  y,V,\Phi\right)  .$ In this section we will describe in a
heuristic manner the argument yielding the construction of such solutions. The
arguments will be made rigorous in the rest of the paper. The key idea behind
the argument is that the problem can be transformed into a system of ordinary
differential equations for the particular class of solutions described in this section.

Taking into account that the singularities of the distribution $G$ might be
expected to be propagated by characteristics it is natural to look for
solutions of (\ref{S4E9})-(\ref{S4E8}) of the form:
\begin{equation}
G\left(  y,V,\Phi\right)  =A\left(  y,V,\Phi\right)  \delta\left(  H\left(
y,V\right)  -h\right)  \label{S6E4}%
\end{equation}
satisfying (\ref{S4E9}) in the sense of distributions. Let us assume that
$A,\;H$ have the differentiability properties required for all the following
formal computations. Plugging (\ref{S6E4}) into (\ref{S4E9}) we obtain:
\begin{align*}
&  \left(  a\left(  y,V\right)  A_{y}+b\left(  y,V\right)  A_{V}+\Phi A_{\Phi
}\right)  \delta\left(  H-h\right) \\
&  +A\left(  a\left(  y,V\right)  H_{y}+b\left(  y,V\right)  H_{V}\right)
\delta^{\prime}\left(  H-h\right) \\
&  =0
\end{align*}
where:
\begin{align}
a\left(  y,V\right)   &  \equiv y+e^{U-\Lambda}\frac{V}{\hat{E}}=e^{\Lambda
}\frac{\partial H}{\partial V},\label{S6E4a1}\\
b\left(  y,V\right)   &  \equiv-V-\left(  y\Lambda_{y}V+e^{U-\Lambda}U_{y}%
\hat{E}-e^{U-\Lambda}\frac{1}{y^{3}\hat{E}}\right)  =-e^{\Lambda}%
\frac{\partial H}{\partial y}. \label{S6E4a2}%
\end{align}
Notice that $a\left(  y,V\right)  H_{y}+b\left(  y,V\right)  H_{V}=0.$ Then:
\[
\left(  a\left(  y,V\right)  A_{y}+b\left(  y,V\right)  A_{V}+\Phi A_{\Phi
}\right)  \delta\left(  H-h\right)  =0.
\]
This equation is satisfied if:
\begin{equation}
a\left(  y,V\right)  A_{y}+b\left(  y,V\right)  A_{V}+\Phi A_{\Phi}=0
\label{S6E4a}%
\end{equation}
on the surface $\left\{  H=h\right\}  \times\mathbb{R}^{+}.$ Let us assume
that the curve $\left\{  H=h\right\}  $ can be parametrized, at least locally,
using a parameter $\sigma$ satisfying:
\begin{align}
y  &  =y\left(  \sigma\right)  \;\;,\;\;V=V\left(  \sigma\right)  ,\nonumber\\
\frac{dy\left(  \sigma\right)  }{d\sigma}  &  =a\left(  y\left(
\sigma\right)  ,V\left(  \sigma\right)  \right)  \;\;,\;\;\frac{dV\left(
\sigma\right)  }{d\sigma}=b\left(  y\left(  \sigma\right)  ,V\left(
\sigma\right)  \right)  . \label{S6E4b}%
\end{align}
Then the function $A$ can be written on the surface $\left\{  H=h\right\}
\times\mathbb{R}^{+}$ as a function of the variables $\left(  \sigma
,\Phi\right)  .$ We can write:
\begin{equation}
A\left(  y\left(  \sigma\right)  ,V\left(  \sigma\right)  ,\Phi\right)
=\bar{A}\left(  \sigma,\Phi\right)  \;\;\text{for }\left(  y,V,\sigma\right)
\in\left\{  H=h\right\}  \times\mathbb{R}^{+} \label{S6E4bb}%
\end{equation}
and using (\ref{S6E4b}) we can rewrite (\ref{S6E4a}) as:
\begin{equation}
\bar{A}_{\sigma}+\Phi\bar{A}_{\Phi}=0. \label{S6E4c}%
\end{equation}

Since the curves $\left\{  H=h\right\}  $ can be determined in terms of
$\Theta$ alone it is convenient to compute this distribution explicitly. If
$G$ has the form (\ref{S6E4}) the distribution $\Theta$ defined in
(\ref{S5E4a}) is given by:
\begin{equation}
\Theta\left(  y,V\right)  =\beta\delta\left(  H-h\right)  \label{S6E4c1}%
\end{equation}
where:
\[
\beta=\int_{0}^{\infty}A\Phi d\Phi.
\]
Since $A$ is given by (\ref{S6E4bb}) it follows that:
\begin{equation}
\beta=\beta\left(  \sigma\right)  =\int_{0}^{\infty}\bar{A}\left(  \sigma
,\Phi\right)  \Phi d\Phi\;\;\text{for\ \ }\left(  y,V\right)  \in\left\{
H=h\right\}  . \label{S6E4e}%
\end{equation}
We can compute $\beta\left(  \sigma\right)  $ along the curve $\left\{
H=h\right\}  .$ To this end we multiply (\ref{S6E4c}) by $\Phi$ and integrate
in the $\Phi$ variable in the interval $\left[  0,\infty\right)  .$ Then:
\[
\beta_{\sigma}=2\beta.
\]
The function $\beta$ then takes the form:
\begin{equation}
\beta\left(  \sigma\right)  =\beta_{0}e^{2\sigma} \label{S6E4d}%
\end{equation}
for some $\beta_{0}\geq0.$

In the rest of the paper we prove that there exist functions $\bar{A}\left(
\sigma,\Phi\right)  $ as in (\ref{S6E4bb}) and curves $\left\{  H=h\right\}  $
with $\Lambda,\;U$ solving (\ref{S4E3})-(\ref{S4E6}) and $\tilde{\rho
},\;\tilde{p}$ as in (\ref{S4E7}), (\ref{S4E8}) such that (\ref{S6E4}) solves
(\ref{S4E9}) in the sense of distributions.

\bigskip

\section{SINGULAR SELF-SIMILAR SOLUTIONS:\protect\linebreak DESCRIBING THEIR
SUPPORT.\label{SelfSimD}}

\bigskip

In this section we describe in a precise manner the form of the curved surface
containing the support of the distribution $G$ for the self-similar solutions
constructed in this paper. Such a surface is contained in the surface
$S=\gamma\times\mathbb{R}^{+},$ where $\gamma\subset\left\{  \left(
y,V\right)  :y>0\;,\;V\in\mathbb{R}\right\}  $ is an unbounded curve, at a
strictly positive distance from the line $\left\{  y=0\right\}  \equiv\left\{
\left(  y,V\right)  :y=0\;,\;V\in\mathbb{R}\right\}  $ with a discontinuity in
its curvature at the point $\left(  y_{0},V_{0}\right)  \in\gamma$ placed at
the minimum distance from the line $\left\{  y=0\right\}  .$ In order to avoid
such irregular curves it is more convenient to assume that the curve $\gamma$
is the union of two analytic curves $\gamma_{1}$ and $\gamma_{2}$ that can be
parametrized in the form:
\begin{equation}
\gamma_{i}=\left\{  \left(  y,V\right)  :y_{0}<y<\infty\;,\;V=V_{i}\left(
y\right)  \right\}  \;\;,\;\;i=1,2 \label{S7E0}%
\end{equation}
where the functions $V_{i}\left(  y\right)  $ are analytic and satisfy:%

\begin{align}
\lim_{y\rightarrow y_{0}^{+}}V_{1}\left(  y\right)   &  =\lim_{y\rightarrow
y_{0}^{+}}V_{2}\left(  y\right)  =V_{0}=-\frac{1}{\sqrt{1-y_{0}^{2}}%
},\label{S7E1}\\
V_{1}\left(  y\right)   &  <V_{2}\left(  y\right)  \;\;\text{for\ \ }%
y_{0}<y<\infty\label{S7E2}%
\end{align}
for some $y_{0}\in\left(  0,1\right)  .$ Since the curves $\gamma_{i}$ are
contained in the curve $\left\{  H=h\right\}  $ it follows that the functions
$V_{i}\left(  y\right)  $ are the two roots of the equation:
\begin{equation}
\frac{e^{U}}{y}\sqrt{V^{2}y^{2}+1}+yVe^{\Lambda}=h \label{S7E2a}%
\end{equation}
assuming that such roots exist. Then:%
\begin{align}
V_{1}\left(  y\right)   &  =\frac{1}{\left(  e^{2U}-y^{2}e^{2\Lambda}\right)
}\left[  -ye^{\Lambda}h-\sqrt{\left(  ye^{\Lambda}h\right)  ^{2}-\left(
e^{2U}-y^{2}e^{2\Lambda}\right)  \left(  \frac{e^{2U}}{y^{2}}-h^{2}\right)
}\right]  ,\label{S7E3}\\
V_{2}\left(  y\right)   &  =\frac{1}{\left(  e^{2U}-y^{2}e^{2\Lambda}\right)
}\left[  -ye^{\Lambda}h+\sqrt{\left(  ye^{\Lambda}h\right)  ^{2}-\left(
e^{2U}-y^{2}e^{2\Lambda}\right)  \left(  \frac{e^{2U}}{y^{2}}-h^{2}\right)
}\right]  . \label{S7E4}%
\end{align}
Notice that for such solutions the support of $G$ in (\ref{S6E4}) is contained
in the half-plane $\left\{  y\geq y_{0}\right\}  .$ Therefore, $\rho\left(
y\right)  =p\left(  y\right)  =0$ for $y<y_{0}.$ Then (\ref{S4E3}%
)-(\ref{S4E6}) imply $U\left(  y\right)  =\Lambda\left(  y\right)  =0$ for
$y<y_{0}. $

Under suitable regularity assumptions for the curves $\gamma_{i}$ near the
point $\left(  y_{0},V_{0}\right)  $ that will be made precise below the
functions $U$ and $\Lambda$ are continuous at the point $y=y_{0}$. In such a
case (\ref{S7E2a}) implies:
\begin{equation}
h=\frac{\sqrt{V_{0}^{2}y_{0}^{2}+1}}{y_{0}}+y_{0}V_{0}=\frac{\sqrt{1-y_{0}%
^{2}}}{y_{0}}. \label{S7E5}%
\end{equation}
We will prove later that it is possible to construct the desired curves
$\gamma_{i}\;,\;i=1,2$, defined by means of (\ref{S7E0}) with the property
that the following limits exist:
\begin{equation}
\lim_{y\rightarrow y_{0}^{+}}\frac{V_{i}\left(  y\right)  -V_{0}}%
{\sqrt{y-y_{0}}}=K_{i}\;\;,\;\;K_{i}\in\mathbb{R}\text{\ \ ,\ \ }%
i=1,2\;,\;K_{1}<K_{2}. \label{S7E6}%
\end{equation}
Moreover, the quotients of the functions $\Lambda$ and $U$ by $\sqrt{y-y_{0}}$
also tend to finite limits. Let:
\begin{align}
\lim_{y\rightarrow y_{0}^{+}}\frac{\Lambda\left(  y\right)  }{\sqrt{y-y_{0}}}
&  =\theta_{1}\in\mathbb{R},\label{S7E6a}\\
\lim_{y\rightarrow y_{0}^{+}}\frac{U\left(  y\right)  }{\sqrt{y-y_{0}}}  &
=\theta_{2}\in\mathbb{R}. \label{S7E6b}%
\end{align}
We parametrize the curve $\gamma=\left\{  H=h\right\}  $ as in the previous
section using a parameter $\sigma.$ We will denote a parameter of this kind on
the curves $\gamma_{1},\;\gamma_{2}$ by $\sigma_{1},\;\sigma_{2}$
respectively. Due to (\ref{S6E4a1}), (\ref{S6E4b}), (\ref{S7E0}) it follows
that:
\begin{equation}
\frac{d\sigma_{i}}{dy}=\frac{1}{a\left(  y,V_{i}\left(  y\right)  \right)
}=\frac{1}{y+e^{U-\Lambda}\frac{V_{i}\left(  y\right)  y}{\sqrt{\left(
V_{i}\left(  y\right)  \right)  ^{2}y^{2}+1}}}\;\;,\;\;i=1,2. \label{S7E7}%
\end{equation}
We will normalize the parameters $\sigma_{i}=\sigma_{i}\left(  y\right)  $ by
means of the condition:
\begin{equation}
\sigma_{i}\left(  y_{0}\right)  =0\;\;,\;\;i=1,2. \label{S7E8}%
\end{equation}

Finally we remark that in order to obtain the functions $U$ and $\Lambda$ we
need to prescribe the distribution $\Theta$ defined by (\ref{S5E4a}). Using
(\ref{S6E4c1}), (\ref{S6E4d}) it then follows that:
\begin{align}
\Theta\left(  y,V\right)   &  =\frac{\beta_{0}\chi_{\left\{  y>y_{0}\right\}
}e^{2\sigma_{1}\left(  y\right)  }}{\left|  \frac{\partial H}{\partial
V}\left(  y,V_{1}\left(  y\right)  \right)  \right|  }\delta\left(
V-V_{1}\left(  y\right)  \right) \nonumber\\
&  +\frac{\beta_{0}\chi_{\left\{  y>y_{0}\right\}  }e^{2\sigma_{2}\left(
y\right)  }}{\left|  \frac{\partial H}{\partial V}\left(  y,V_{2}\left(
y\right)  \right)  \right|  }\delta\left(  V-V_{2}\left(  y\right)  \right)
\label{S7E9}%
\end{align}
where $\chi_{\left\{  y>y_{0}\right\}  }$ is the characteristic function of
the half-plane $\left\{  y>y_{0}\right\}  .$ Using (\ref{S5E6}), (\ref{S5E7})
it follows that:
\begin{align}
\tilde{\rho}\left(  y\right)   &  =\frac{\pi\beta_{0}\chi_{\left\{
y>y_{0}\right\}  }}{y^{3}}\left[  \frac{e^{2\sigma_{1}\left(  y\right)  }%
}{\left|  \frac{\partial H}{\partial V}\left(  y,V_{1}\left(  y\right)
\right)  \right|  }\sqrt{\left(  V_{1}\left(  y\right)  \right)  ^{2}y^{2}%
+1}\right. \nonumber\\
&  \left.  +\frac{e^{2\sigma_{2}\left(  y\right)  }}{\left|  \frac{\partial
H}{\partial V}\left(  y,V_{2}\left(  y\right)  \right)  \right|  }%
\sqrt{\left(  V_{2}\left(  y\right)  \right)  ^{2}y^{2}+1}\right]
,\label{S7E10a}\\
\tilde{p}\left(  y\right)   &  =\frac{\pi\beta_{0}\chi_{\left\{
y>y_{0}\right\}  }}{y}\left[  \frac{e^{2\sigma_{1}\left(  y\right)  }}{\left|
\frac{\partial H}{\partial V}\left(  y,V_{1}\left(  y\right)  \right)
\right|  }\frac{\left(  V_{1}\left(  y\right)  \right)  ^{2}}{\sqrt{\left(
V_{1}\left(  y\right)  \right)  ^{2}y^{2}+1}}\right. \nonumber\\
&  \left.  +\frac{e^{2\sigma_{2}\left(  y\right)  }}{\left|  \frac{\partial
H}{\partial V}\left(  y,V_{2}\left(  y\right)  \right)  \right|  }%
\frac{\left(  V_{2}\left(  y\right)  \right)  ^{2}}{\sqrt{\left(  V_{2}\left(
y\right)  \right)  ^{2}y^{2}+1}}\right]  \label{S7E10b}%
\end{align}
and the functions $U$ and $\Lambda$ can then be obtained using the equations
(\ref{S4E3}), (\ref{S4E4}).

Due to the dust-like character of the solutions considered in this paper, they
exhibit a singular behaviour for $\tilde{\rho}$ and $\tilde{p}$ at the radius
$y=y_{0}.$ This singularity is due to the fact that at this point the radial
velocity of the particles, in self-similar variables, vanishes. However, since
the motion of the trajectories after they reach the singularity continues in a
smooth way, and since $\tilde{\rho}$ and $\tilde{p}$ are integrable near this
radius, this singularity can be expected to disappear if the dust-like
assumption is relaxed and some thickness is given to the support of the
distribution function in the phase space.

The main result of this paper is the following:

\bigskip

\begin{theorem}
\label{Th1} There exists $\varepsilon_{0}>0$ small such that, for any
$y_{0}\in\left(  0,\varepsilon_{0}\right)  $ there exist a value of $\beta
_{0}>0$ and two curves $\gamma_{1},\;\gamma_{2}$ that can be parametrized as
in (\ref{S7E0}) with the functions $V_{1}\left(  y\right)  ,\;V_{2}\left(
y\right)  $ as in (\ref{S7E3}), (\ref{S7E4}) satisfying (\ref{S7E1}),
(\ref{S7E2}), (\ref{S7E6}), the functions $U,\;\Lambda$ satisfying
(\ref{S4E3}), (\ref{S4E4}) and (\ref{S7E6a}), (\ref{S7E6b}) with $\tilde{\rho
},\;\tilde{p}$ as in (\ref{S7E10a}), (\ref{S7E10b}) and $\sigma_{1}%
,\;\sigma_{2}$ solving (\ref{S7E7}), (\ref{S7E8}).
\end{theorem}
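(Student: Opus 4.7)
The plan is to reduce the coupled problem (\ref{S4E3})--(\ref{S4E4}), (\ref{S7E3})--(\ref{S7E10b}), (\ref{S7E7})--(\ref{S7E8}) to a four-dimensional autonomous system of ODEs in a rescaled radial variable, and then produce the required connecting orbit by a shooting argument in the single density parameter $\beta_{0}$, exploiting the smallness of $y_{0}$ to control the flow via lower-dimensional limiting subsystems.

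First I would close the system. The Einstein equations (\ref{S4E3})--(\ref{S4E4}) solve algebraically for $\Lambda_{y}$ and $U_{y}$ in terms of $U,\Lambda,y,\tilde\rho,\tilde p$, while (\ref{S7E10a})--(\ref{S7E10b}) express $\tilde\rho,\tilde p$ explicitly through $U,\Lambda,y$ and the characteristic parameters $\sigma_{1},\sigma_{2}$. The branches $V_{1}(y),V_{2}(y)$ are in turn determined algebraically by (\ref{S7E3})--(\ref{S7E4}) with $h$ fixed by (\ref{S7E5}), so the independent unknowns reduce to $(U,\Lambda,\sigma_{1},\sigma_{2})$. Introducing the logarithmic variable $\tau=\log(y/y_{0})$, together with dimensionless combinations such as $y\Lambda_{y}$, $yU_{y}$, the difference $\sigma_{2}-\sigma_{1}$, and a quantity of the form $\beta_{0}\,y^{-2}e^{-2\Lambda}e^{2\sigma_{1}}$ that absorbs the overall scale, one obtains a four-dimensional autonomous flow whose fixed points describe self-similar asymptotic power-law behaviour as $y\to\infty$.

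Second I would analyze the two distinguished points. The starting point $P_{0}$ corresponds to $y\to y_{0}^{+}$ with $U=\Lambda=0$ and the square-root expansions (\ref{S7E6}), (\ref{S7E6a}), (\ref{S7E6b}). Expanding (\ref{S7E2a}) to leading order in $\sqrt{y-y_{0}}$ yields explicit relations between $K_{1},K_{2},\theta_{1},\theta_{2}$ and $y_{0},\beta_{0}$, so the orbit enters the phase space from a well-defined datum $P_{0}=P_{0}(y_{0},\beta_{0})$. The target $P_{1}$ is a stationary point of the autonomous system; I would compute its linearization and verify that exactly one eigenvalue is unstable, so that the stable manifold of $P_{1}$ is three-dimensional and landing on it is a codimension-one condition on $P_{0}$.

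Third, I would run the shooting argument. With $y_{0}$ small and fixed, only $\beta_{0}$ is free. The goal is to exhibit a dichotomy: for $\beta_{0}$ below some threshold the trajectory exits one side of the stable manifold (behaving like a near-vacuum self-similar configuration, where $U,\Lambda$ decay and the solution escapes toward a trivial attractor), while for $\beta_{0}$ above another threshold it exits on the other side (the source terms dominate and some component of the solution blows up in finite $\tau$, or the discriminant in (\ref{S7E3})--(\ref{S7E4}) changes sign, terminating the parametrization). By continuous dependence on $\beta_{0}$ and connectedness of the parameter interval, some intermediate value yields an orbit on the stable manifold of $P_{1}$, giving the required solution. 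The smallness of $y_{0}$ enters crucially: as $y_{0}\to 0$, certain coupling terms drop out of the four-dimensional system, producing the two- and three-dimensional limiting flows mentioned in the abstract, whose global dynamics can be analyzed completely; the full system is then treated as a regular perturbation, and the threshold values of $\beta_{0}$ are inherited from the limits.

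The principal obstacle I expect is the global phase-space analysis between $P_{0}$ and $P_{1}$. Specifically, one must construct an invariant region in which the parametrization by $(V_{1},V_{2})$ stays valid (the discriminant in (\ref{S7E3})--(\ref{S7E4}) positive and $e^{2U}>y^{2}e^{2\Lambda}$), rule out finite-$\tau$ blowup of $U,\Lambda$, exclude accumulation on spurious fixed points or periodic orbits of the four-dimensional flow, and verify that the two extreme regimes in $\beta_{0}$ really lie on opposite sides of the stable manifold. Identifying a monotone Lyapunov-type quantity adapted to the flow, together with a careful perturbative comparison to the $y_{0}=0$ limiting systems where trajectories can be followed explicitly, will form the technical heart of the argument.
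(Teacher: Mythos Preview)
Your outline is correct and matches the paper's strategy: reduction to a four-dimensional autonomous system in $\log(y/y_{0})$, identification of a hyperbolic equilibrium $P_{1}$ with three-dimensional stable manifold, and a shooting argument in the density parameter (the paper uses $\theta=16\pi^{2}\beta_{0}/y_{0}$) carried out first at $y_{0}=0$ and then perturbed. The paper's specific autonomous variables are $Q_{i}=(y_{0}/y)e^{\sigma_{i}}$ taken \emph{separately}, together with $G=e^{-2\Lambda}$ and a quantity $Z$ built from $u=U-\log(y/y_{0})$; the crucial simplification you do not name explicitly is that at $y_{0}=0$ the system acquires a rotational symmetry in $(Q_{1},Q_{2})$, so one passes to a three-dimensional system in $(q,G,Z)$ with $q^{2}=\tfrac12(Q_{1}^{2}+Q_{2}^{2})$, rescales out $\theta$, and runs the shooting dichotomy there (the two sides being $Z\to+\infty$ versus $Z\to-\infty$), after which the implicit function theorem recovers the result for small $y_{0}>0$.
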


\bigskip

Using Theorem \ref{Th1} it is possible to obtain distributional solutions of
the problem (\ref{S4E9})-(\ref{S4E8}). In order to make the definition of the
distribution $G$ in (\ref{S6E4}) precise we use (\ref{S6E4bb}), (\ref{S6E4c}).
Let us prescribe a smooth function $\bar{A}_{0}\left(  \Phi\right)  $ in
$\Phi\in\left(  0,\infty\right)  .$ Taking into account (\ref{S6E4c}) we can
then define:
\[
\bar{A}\left(  \sigma,\Phi\right)  =\bar{A}_{0}\left(  e^{-\sigma}\Phi\right)
.
\]
Using the structure of the curves $\gamma_{1},\;\gamma_{2}$ it would then
follow that the distribution $G$ in (\ref{S6E4}) would be given by:
\begin{align}
G\left(  y,V,\Phi\right)   &  =\frac{\bar{A}_{0}\left(  e^{-\sigma_{1}\left(
y\right)  }\Phi\right)  \chi_{\left\{  y>y_{0}\right\}  }}{\left|
\frac{\partial H}{\partial V}\left(  y,V_{1}\left(  y\right)  \right)
\right|  }\delta\left(  V-V_{1}\left(  y\right)  \right) \nonumber\\
&  +\frac{\bar{A}_{0}\left(  e^{-\sigma_{2}\left(  y\right)  }\Phi\right)
\chi_{\left\{  y>y_{0}\right\}  }}{\left|  \frac{\partial H}{\partial
V}\left(  y,V_{2}\left(  y\right)  \right)  \right|  }\delta\left(
V-V_{2}\left(  y\right)  \right)  . \label{S7E11}%
\end{align}
We then have the following result:

\bigskip

\begin{theorem}
\label{Th2} Suppose that the function $\bar{A}_{0}\left(  \cdot\right)  \in
C_{0}^{1}\left(  0,\infty\right)  $ satisfies
\begin{equation}
\int_{0}^{\infty}\bar{A}_{0}\left(  \Phi\right)  \Phi d\Phi=\beta_{0}.
\label{S7E11a}%
\end{equation}
Let us define a Radon measure $G\in\mathcal{M}\left(  \mathbb{R}^{+}%
\times\mathbb{R}\times\mathbb{R}^{+}\right)  $ by means of (\ref{S7E11}) with
the functions $V_{1}\left(  \cdot\right)  ,\;V_{2}\left(  \cdot\right)
,\;\sigma_{1}\left(  \cdot\right)  ,\;\sigma_{2}\left(  \cdot\right)  \;$
as in Theorem \ref{Th1}. Then the functions $\tilde{\rho},\;\tilde{p}$ defined
(\ref{S4E7}), (\ref{S4E8}) belong to the spaces $L_{loc}^{p}\left(
0,\infty\right)  $ for $1\leq p<2.$ The functions $\Lambda,\;U$ defined by
means of (\ref{S4E3})-(\ref{S4E6}) belong to $W_{loc}^{1,p}\left(
0,\infty\right)  $ for $1\leq p<2.$ The measure $G$ satisfies (\ref{S4E9}) in
the sense of distributions.
\end{theorem}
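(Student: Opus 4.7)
\textbf{Proof plan for Theorem \ref{Th2}.}

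The plan is to verify each conclusion in turn, starting with explicit formulas for the macroscopic quantities $\tilde\rho,\tilde p$, then extracting their local integrability, then using the resulting ODEs for $\Lambda,U$, and finally testing the distributional equation (\ref{S4E9}) on smooth compactly supported test functions. First, I would substitute the definition (\ref{S7E11}) of $G$ into the integrals (\ref{S4E7}),~(\ref{S4E8}). The two $\delta$-contributions decouple and, in the inner $\Phi$-integral along $\gamma_i$, the change of variables $\Psi=e^{-\sigma_i(y)}\Phi$ gives
\begin{equation*}
\int_{0}^{\infty}\bar A_{0}(e^{-\sigma_i(y)}\Phi)\,\Phi\,d\Phi
= e^{2\sigma_i(y)}\int_{0}^{\infty}\bar A_{0}(\Psi)\,\Psi\,d\Psi = \beta_{0}\,e^{2\sigma_i(y)},
\end{equation*}
using (\ref{S7E11a}). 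This recovers exactly the formulas (\ref{S7E10a}),~(\ref{S7E10b}) that were used to define $\tilde\rho,\tilde p$ in Theorem \ref{Th1}, so $\tilde\rho,\tilde p$ of Theorem \ref{Th2} coincide with the ones already constructed there.

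Next, I would extract the local $L^{p}$-regularity from the structure at the turning point $y=y_{0}$. Away from $y_{0}$ the expressions in (\ref{S7E10a}),~(\ref{S7E10b}) are smooth and bounded on compact subsets of $(y_{0},\infty)$, so the only issue is near $y=y_{0}$. Since $V_{1}(y_{0})=V_{2}(y_{0})=V_{0}$ and $H(y,V)-h$ has a simple root in $V$ at each $V_{i}(y)$ that collapse at $y_0$, the expansion (\ref{S7E6}) with $K_{1}<K_{2}$ gives
\begin{equation*}
\Bigl|\tfrac{\partial H}{\partial V}(y,V_{i}(y))\Bigr|
= \bigl|H_{VV}(y_{0},V_{0})\bigr|\,|V_{i}(y)-V_{0}|\,(1+o(1))
\sim c\,|K_{i}|\sqrt{y-y_{0}}
\end{equation*}
as $y\downarrow y_{0}$. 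Hence $\tilde\rho(y)$ and $\tilde p(y)$ blow up like $(y-y_{0})^{-1/2}$, which is integrable to the $p$-th power locally precisely when $p<2$. This gives $\tilde\rho,\tilde p\in L^{p}_{\mathrm{loc}}(0,\infty)$ for $1\le p<2$. The equations (\ref{S4E3}),~(\ref{S4E4}) then read as first order ODEs for $\Lambda$ and $U$ whose right-hand sides are $L^{p}_{\mathrm{loc}}$ in $y$; integrating from the boundary conditions (\ref{S4E6}) and using a standard Picard iteration in a fixed-point formulation gives unique absolutely continuous solutions with $\Lambda_y,U_y\in L^{p}_{\mathrm{loc}}$, i.e. $\Lambda,U\in W^{1,p}_{\mathrm{loc}}(0,\infty)$ for $1\le p<2$.

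For the distributional identity I would fix a test function $\varphi\in C_{c}^{\infty}(\mathbb{R}^{+}\times\mathbb{R}\times\mathbb{R}^{+})$, decompose the pairing of $G$ with the adjoint operator of (\ref{S4E9}) into the contributions of $\gamma_{1}$ and $\gamma_{2}$, and evaluate each one by foliating the respective curve by $\sigma_i$. Writing $a,b$ as in (\ref{S6E4a1}),~(\ref{S6E4a2}) and using the $\delta$-in-$V$ layer, each contribution reduces to a one-dimensional integral along $\gamma_i$ of the form $\int\bar A\,(\varphi_\sigma+\Phi\varphi_\Phi)\,d\sigma\,d\Phi$ plus lower-order terms produced by differentiating the pre-factor $|\partial H/\partial V|^{-1}$; the latter vanish because $(a,b)$ is tangent to $\{H=h\}$. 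Integrating by parts in $(\sigma,\Phi)$ and using that $\bar A(\sigma,\Phi)=\bar A_{0}(e^{-\sigma}\Phi)$ satisfies (\ref{S6E4c}) exactly kills the bulk term. Here one must carefully check that all products of $U_y,\Lambda_y$ with objects on the support of $G$ make sense despite the $W^{1,p}$-only regularity: this is where I expect the main obstacle to lie. The reason it should work is that the only singularity of $U_y,\Lambda_y$ is at $y=y_{0}$ where the curves $\gamma_i$ merge tangentially; the integral of $\bar A$ times such coefficients along $\gamma_i$ then involves factors like $(y-y_{0})^{-1/2}$ against Jacobians coming from $d\sigma_i/dy=1/a(y,V_i(y))$ in (\ref{S7E7}), which are themselves bounded on each $\gamma_i$ away from $y_0$ and integrable near $y_0$. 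I would handle this cleanly by approximating $G$ by mollifications of $\bar A_{0}$ and the curves, verifying the identity for the smooth problem by direct computation, and passing to the limit using dominated convergence together with the $L^{p}$, $p<2$, bounds established above.
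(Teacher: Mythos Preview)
Your treatment of the regularity of $\tilde\rho,\tilde p,\Lambda,U$ is essentially the paper's argument: the change of variables $\Psi=e^{-\sigma_i}\Phi$ reproduces (\ref{S7E10a})--(\ref{S7E10b}), and the $(y-y_0)^{-1/2}$ blow-up of $|\partial H/\partial V(y,V_i(y))|^{-1}$ (which the paper records as Proposition~\ref{Prop1}) gives $L^p_{\mathrm{loc}}$ for $p<2$; the paper then simply writes the explicit integral formulas for $\Lambda,U$ in place of your Picard argument.

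The gap is in the distributional identity. You describe integrating by parts in $(\sigma,\Phi)$ so that (\ref{S6E4c}) ``kills the bulk term,'' but you never account for the boundary contribution at $\sigma_i=0$, i.e.\ at $y=y_0$. This is not a technicality that disappears: each branch $\gamma_i$ produces a non-zero boundary term there, and the whole mechanism by which $J=0$ in the paper is that these two boundary terms cancel against each other because $V_1(y_0^+)=V_2(y_0^+)=V_0$, $\sigma_1(y_0)=\sigma_2(y_0)=0$, and the two contributions come with opposite signs (reflecting that $\partial H/\partial V$ has opposite sign on $\gamma_1$ and $\gamma_2$). Concretely, the paper shows by direct algebra that
\[
\frac{F(y,V_i(y),\Phi e^{\sigma_i(y)})\,e^{\sigma_i(y)}}{\partial H/\partial V(y,V_i(y))}
=-\frac{d}{dy}\Bigl(e^{\sigma_i(y)-\Lambda(y)}\varphi\bigl(y,V_i(y),\Phi e^{\sigma_i(y)}\bigr)\Bigr),
\]
so that the entire integral collapses to the evaluation at $y_0$, and the cancellation between $i=1$ and $i=2$ is the punchline. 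Your outline misses this and instead locates the ``main obstacle'' in the low regularity of $U_y,\Lambda_y$; in fact the paper absorbs those derivatives by rewriting everything in terms of $e^{-\Lambda}\varphi$ (see the manipulation leading to (\ref{S8E7})), after which no approximation or dominated-convergence argument is needed. Your claim that the terms coming from differentiating $|\partial H/\partial V|^{-1}$ ``vanish because $(a,b)$ is tangent to $\{H=h\}$'' is also not correct as stated; tangency gives $aH_y+bH_V=0$, not that the derivative of $H_V$ along the curve vanishes. The clean route is the exact-derivative identity above together with the boundary cancellation at $y_0$.
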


\begin{remark}
The space $C_{0}^{1}\left(  0,\infty\right)  $ is the space of compactly
supported continuously differentiable functions and the space $\mathcal{M}%
\left(  \mathbb{R}^{+}\times\mathbb{R}\times\mathbb{R}^{+}\right)  $ is the
space of Radon measures on $\mathbb{R}^{+}\times\mathbb{R}\times\mathbb{R}%
^{+}$. It is not necessary to require $A_{0}\left(  \cdot\right)  $ to be
compactly supported. Actually this condition could be replaced by assumptions
of fast enough decay near the origin and infinity.
\end{remark}

\begin{remark}
It is worth noticing that the functions $\tilde{\rho},\;\tilde{p}$ associated
to the distribution $G$ have an integrable singularity as $y\rightarrow
y_{0}^{+}.$
\end{remark}

In the rest of this section we will prove Theorem \ref{Th2}. Theorem \ref{Th1}
will be proved in the remaining sections of the paper using a shooting
argument and refined asymptotics of the solutions for $y_{0}$ small. The
following auxiliary result will be used in the proof of Theorem \ref{Th2} and
it will be proved in Section \ref{ProT1}. We remark that Theorem \ref{Th2}
will not be used in either the proof of Theorem \ref{Th1} or that of
Proposition \ref{Prop1} below.

\bigskip

\begin{proposition}
\label{Prop1}The curves $\gamma_{1},\;\gamma_{2}$ whose existence has been
proved in Theorem \ref{Th1} satisfy the following conditions:
\begin{equation}
\lim_{y\rightarrow y_{0}^{+}}\frac{\frac{\partial H}{\partial V}\left(
y,V_{1}\left(  y\right)  \right)  }{\sqrt{y-y_{0}}}=L_{1}\;\;,\;\;\lim
_{y\rightarrow y_{0}^{+}}\frac{\frac{\partial H}{\partial V}\left(
y,V_{2}\left(  y\right)  \right)  }{\sqrt{y-y_{0}}}=L_{2} \label{S8E1}%
\end{equation}
for some constants $L_{1}<L_{2}$.
\end{proposition}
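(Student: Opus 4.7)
The plan is to observe that Proposition~\ref{Prop1} reduces to a direct Taylor expansion of $\partial H/\partial V$ near the corner point $(y_0, V_0)$: this derivative vanishes there for purely algebraic reasons, and the asymptotics (\ref{S7E6})--(\ref{S7E6b}) furnished by Theorem~\ref{Th1} are exactly what is needed to extract the coefficient of $\sqrt{y-y_0}$.

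First I would compute
\[
\frac{\partial H}{\partial V}(y, V) = \frac{e^U\, V y}{\sqrt{V^2 y^2 + 1}} + y\, e^\Lambda
\]
and evaluate at $(y_0, V_0)$. Using $V_0 = -1/\sqrt{1-y_0^2}$, so that $\sqrt{V_0^2 y_0^2 + 1} = 1/\sqrt{1-y_0^2}$, together with $U(y_0) = \Lambda(y_0) = 0$ (which follows from $U = \Lambda = 0$ on $\{y \le y_0\}$ plus the continuity at $y_0$ supplied by Theorem~\ref{Th1}), a direct substitution gives
\[
\frac{\partial H}{\partial V}(y_0, V_0) = V_0\, y_0 \sqrt{1 - y_0^2} + y_0 = -y_0 + y_0 = 0.
\]
This vanishing is precisely what makes the $\sqrt{y - y_0}$ normalization in (\ref{S8E1}) nontrivial and fixes the power law.

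Next, setting $s = \sqrt{y - y_0}$, I would insert the expansions $V_i(y) = V_0 + K_i s + o(s)$, $U(y) = \theta_2 s + o(s)$, $\Lambda(y) = \theta_1 s + o(s)$ and $y - y_0 = s^2$ into $\partial H/\partial V$. Letting $g(y,V) = Vy/\sqrt{V^2 y^2 + 1}$, a short calculation yields $\partial_V g(y_0, V_0) = y_0 (1 - y_0^2)^{3/2}$ and $\partial_y g(y_0, V_0) = -(1 - y_0^2)$; the $\partial_y g$ contribution is of order $s^2$ and hence absorbed into $o(s)$. Consequently $g(y, V_i(y)) = -y_0 + y_0 (1 - y_0^2)^{3/2} K_i\, s + o(s)$. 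Multiplying by $e^U = 1 + \theta_2 s + o(s)$ and adding $y\, e^\Lambda = y_0 + y_0 \theta_1 s + o(s)$, the constant terms cancel and one obtains
\[
\frac{\partial H}{\partial V}(y, V_i(y)) = y_0 \bigl[(1 - y_0^2)^{3/2} K_i + \theta_1 - \theta_2\bigr]\, s + o(s).
\]
Dividing by $s$ and passing to the limit gives $L_i = y_0 \bigl[(1 - y_0^2)^{3/2} K_i + \theta_1 - \theta_2\bigr]$, and the strict inequality $L_1 < L_2$ is immediate from $K_1 < K_2$ together with $y_0(1 - y_0^2)^{3/2} > 0$.

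The only delicate point is bookkeeping: one must check that the three $o(s)$ remainders in the separate expansions of $V_i$, $U$ and $\Lambda$ combine consistently through the smooth nonlinearity $g$ at $(y_0, V_0)$. This is automatic because $g$ is $C^\infty$ on a neighbourhood of this point (as $V_0^2 y_0^2 + 1 > 0$), so each product, sum, and composition preserves the $o(s)$ order. Since the three asymptotic statements are precisely those produced by Theorem~\ref{Th1}, no further analytic input is required, and the proposition reduces to this single Taylor expansion.
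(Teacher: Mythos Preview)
Your proof is correct. You perform a direct first-order Taylor expansion of $\partial H/\partial V$ at $(y_0,V_0)$ in the variable $\sqrt{y-y_0}$, feeding in the asymptotics (\ref{S7E6})--(\ref{S7E6b}) for $V_i$, $\Lambda$, $U$ supplied by Theorem~\ref{Th1}; the vanishing of the leading term and the extraction of the $\sqrt{y-y_0}$ coefficient are handled cleanly, and the inequality $L_1<L_2$ follows from $K_1<K_2$.

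The paper takes a shorter, more algebraic route. In the reformulated variables (\ref{D1E1}) it has already established the identity $y_0 e^{\Lambda}\sqrt{\zeta_i^{2}+1}+\zeta_i e^{u}=\mp e^{u}Z$, which, after rewriting $\partial H/\partial V$ with $\zeta_i=yV_i$ and $e^{U}=(y/y_0)e^{u}$, gives the closed form
\[
\frac{\partial H}{\partial V}\bigl(y,V_i(y)\bigr)=\mp\,\frac{y\,e^{u}\,Z}{y_0\sqrt{\zeta_i^{2}+1}}.
\]
The asymptotics (\ref{J2E1})--(\ref{J2E2}), namely $Z\sim\sqrt{2s}$ and $s\sim(y-y_0)/y_0$, then immediately produce the limits, with the bonus that $L_1=-L_2$ explicitly. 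So the paper leverages the $Z$-variable machinery (introduced precisely to desingularize the square-root behaviour at $y_0$) to avoid any Taylor bookkeeping, whereas your argument stays in the original variables and is self-contained once (\ref{S7E6})--(\ref{S7E6b}) are granted. Both reach the same conclusion; the paper's route is quicker because the algebraic identity has already been derived for other purposes.
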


\begin{proof}
[Proof of Theorem \ref{Th2}]Using (\ref{S5E4a}), (\ref{S7E11}) and
(\ref{S7E11a}) we obtain:
\begin{align}
\Theta\left(  y,V\right)   &  =\int_{0}^{\infty}G\Phi d\Phi=\frac{\beta
_{0}e^{2\sigma_{1}\left(  y\right)  }\chi_{\left\{  y>y_{0}\right\}  }%
}{\left|  \frac{\partial H}{\partial V}\left(  y,V_{1}\left(  y\right)
\right)  \right|  }\delta\left(  V-V_{1}\left(  y\right)  \right) \nonumber\\
&  +\frac{\beta_{0}e^{2\sigma_{2}\left(  y\right)  }\chi_{\left\{
y>y_{0}\right\}  }}{\left|  \frac{\partial H}{\partial V}\left(
y,V_{2}\left(  y\right)  \right)  \right|  }\delta\left(  V-V_{2}\left(
y\right)  \right)  .
\end{align}
We can then compute $\tilde{\rho},\;\tilde{p}$ using (\ref{S5E6}),
(\ref{S5E7}):
\begin{align}
\tilde{\rho}\left(  y\right)   &  =\frac{\pi\beta_{0}\chi_{\left\{
y>y_{0}\right\}  }}{y^{3}}\left[  \frac{e^{2\sigma_{1}\left(  y\right)  }%
\sqrt{1+y^{2}\left(  V_{1}\left(  y\right)  \right)  ^{2}}}{\left|
\frac{\partial H}{\partial V}\left(  y,V_{1}\left(  y\right)  \right)
\right|  }+\frac{e^{2\sigma_{2}\left(  y\right)  }\sqrt{1+y^{2}\left(
V_{2}\left(  y\right)  \right)  ^{2}}}{\left|  \frac{\partial H}{\partial
V}\left(  y,V_{2}\left(  y\right)  \right)  \right|  }\right]  ,\label{Tr1}\\
\tilde{p}\left(  y\right)   &  =\frac{\pi\beta_{0}\chi_{\left\{
y>y_{0}\right\}  }}{y}\left[  \frac{e^{2\sigma_{1}\left(  y\right)  }\left(
V_{1}\left(  y\right)  \right)  ^{2}}{\left|  \frac{\partial H}{\partial
V}\left(  y,V_{1}\left(  y\right)  \right)  \right|  \sqrt{1+y^{2}\left(
V_{1}\left(  y\right)  \right)  ^{2}}}\right. \nonumber\\
&  \left.  +\frac{e^{2\sigma_{2}\left(  y\right)  }\left(  V_{2}\left(
y\right)  \right)  ^{2}}{\left|  \frac{\partial H}{\partial V}\left(
y,V_{2}\left(  y\right)  \right)  \right|  \sqrt{1+y^{2}\left(  V_{2}\left(
y\right)  \right)  ^{2}}}\right]  . \label{Tp1}%
\end{align}
Using (\ref{S7E6})-(\ref{S7E6b}), (\ref{S8E1}), we obtain:
\begin{equation}
\left\vert \tilde{\rho}\left(  y\right)  \right\vert +\left\vert \tilde
{p}\left(  y\right)  \right\vert \leq\frac{C\chi_{\left\{  y>y_{0}\right\}  }%
}{\sqrt{y-y_{0}}} \label{S8E2}%
\end{equation}
whence the estimate $\tilde{\rho},\;\tilde{p}\in L_{loc}^{p}\left(
0,\infty\right)  ,\;1\leq p<2$, in the theorem follows. On the other hand
(\ref{S4E3})-(\ref{S4E6}) imply:
\begin{align}
\Lambda &  =-\frac{1}{2}\log\left(  1-\frac{8\pi}{y}\int_{y_{0}}^{y}\xi
^{2}\tilde{\rho}\left(  \xi\right)  d\xi\right)  ,\label{S8E3}\\
U  &  =\int_{y_{0}}^{y}\frac{\left[  \left(  8\pi\xi^{2}\tilde{p}\left(
\xi\right)  +1\right)  e^{2\Lambda\left(  \xi\right)  }-1\right]  }{2\xi}d\xi.
\label{S8E4}%
\end{align}
Due to Theorem \ref{Th1} the functions $\Lambda,\;U$ are bounded for any
finite value $y>0.$ On the other hand, (\ref{S8E3}), (\ref{S8E4}) imply
$\Lambda,\;U\in W_{loc}^{1,p}\left(  0,\infty\right)  ,\;1\leq p<2.$

In order to conclude the proof of Theorem \ref{Th2} it only remains to prove
that $G$ solves (\ref{S4E9}) in the sense of distributions. This is equivalent
to showing that:
\begin{align}
&  \int_{\mathbb{R}^{+}\times\mathbb{R}\times\mathbb{R}^{+}}\left[  -\left(
y\varphi\right)  _{y}+\left(  V\varphi\right)  _{V}-\left(  \Phi
\varphi\right)  _{\Phi}-\left(  e^{U-\Lambda}\frac{V}{\hat{E}}\varphi\right)
_{y}\right. \nonumber\\
&  \left.  +\left(  \left(  y\Lambda_{y}V+e^{U-\Lambda}U_{y}\hat
{E}-e^{U-\Lambda}\frac{1}{y^{3}\hat{E}}\right)  \varphi\right)  _{V}\right]
GdydVd\Phi\nonumber\\
&  =0 \label{S8E5}%
\end{align}
for any $\varphi=\varphi\left(  y,V,\Phi\right)  \in C_{0}^{\infty}\left(
\mathbb{R}^{+}\times\mathbb{R}\times\mathbb{R}^{+}\right)  .$ Using
(\ref{S7E11}) we can rewrite (\ref{S8E5}) as:
\begin{align}
&  J\equiv\sum_{i=1}^{2}\int_{y_{0}}^{\infty}\int_{0}^{\infty}\left[  -\left(
y\varphi\right)  _{y}+\left(  V\varphi\right)  _{V}-\left(  \Phi
\varphi\right)  _{\Phi}-\left(  e^{U-\Lambda}\frac{V}{\hat{E}}\varphi\right)
_{y}\right.  +\label{S8E5a}\\
&  \left.  \left.  \left(  \left(  y\Lambda_{y}V+e^{U-\Lambda}U_{y}\hat
{E}-e^{U-\Lambda}\frac{1}{y^{3}\hat{E}}\right)  \varphi\right)  _{V}\right]
\right|  _{\left(  y,V_{i}\left(  y\right)  ,\Phi\right)  }\frac{\bar{A}%
_{0}\left(  e^{-\sigma_{i}\left(  y\right)  }\Phi\right)  }{\left|
\frac{\partial H}{\partial V}\left(  y,V_{i}\left(  y\right)  \right)
\right|  }d\Phi dy\nonumber\\
&  =0\nonumber
\end{align}
and making the change of variables $e^{-\sigma_{i}\left(  y\right)  }%
\Phi\rightarrow\Phi$ we can transform $J$\ into:
\begin{align}
&  J\equiv\sum_{i=1}^{2}\int_{y_{0}}^{\infty}\int_{0}^{\infty}\left[  -\left(
y\varphi\right)  _{y}+\left(  V\varphi\right)  _{V}-\left(  \Phi
\varphi\right)  _{\Phi}-\left(  e^{U-\Lambda}\frac{V}{\hat{E}}\varphi\right)
_{y}\right.  +\nonumber\\
&  \left.  \left.  \left(  \left(  y\Lambda_{y}V+e^{U-\Lambda}U_{y}\hat
{E}-e^{U-\Lambda}\frac{1}{y^{3}\hat{E}}\right)  \varphi\right)  _{V}\right]
\right|  _{\left(  y,V_{i}\left(  y\right)  ,\Phi e^{\sigma_{i}\left(
y\right)  }\right)  }\frac{\bar{A}_{0}\left(  \Phi\right)  e^{\sigma
_{i}\left(  y\right)  }}{\left|  \frac{\partial H}{\partial V}\left(
y,V_{i}\left(  y\right)  \right)  \right|  }dyd\Phi\label{S8E6}%
\end{align}
Notice that we can write:
\begin{align*}
F  &  \equiv-\left(  y\varphi\right)  _{y}+\left(  V\varphi\right)
_{V}-\left(  \Phi\varphi\right)  _{\Phi}-\left(  e^{U-\Lambda}\frac{V}{\hat
{E}}\varphi\right)  _{y}\\
&  +\left(  \left(  y\Lambda_{y}V+e^{U-\Lambda}U_{y}\hat{E}-e^{U-\Lambda}%
\frac{1}{y^{3}\hat{E}}\right)  \varphi\right)  _{V}\\
&  =-y\varphi_{y}+V\varphi_{V}-\left(  \Phi\varphi\right)  _{\Phi}-\left(
e^{U-\Lambda}\frac{Vy}{\sqrt{1+V^{2}y^{2}}}\varphi\right)  _{y}\\
&  +\left(  \left(  y\Lambda_{y}V+\frac{e^{U-\Lambda}U_{y}}{y}\sqrt
{1+V^{2}y^{2}}-e^{U-\Lambda}\frac{1}{y^{2}\sqrt{1+V^{2}y^{2}}}\right)
\varphi\right)  _{V}%
\end{align*}
and, using Leibniz's rule:
\begin{align*}
F  &  =-y\Lambda_{y}\varphi-ye^{\Lambda}\left(  e^{-\Lambda}\varphi\right)
_{y}+V\varphi_{V}-\Phi\varphi_{\Phi}-\varphi-U_{y}e^{U-\Lambda}\frac{Vy}%
{\sqrt{1+V^{2}y^{2}}}\varphi\\
&  -e^{U-\Lambda}\frac{V}{\sqrt{1+V^{2}y^{2}}}\varphi+e^{U-\Lambda}\frac
{V^{3}y^{2}}{\left(  1+V^{2}y^{2}\right)  ^{\frac{3}{2}}}\varphi-e^{U}%
\frac{Vy}{\sqrt{1+V^{2}y^{2}}}\left(  e^{-\Lambda}\varphi\right)  _{y}\\
&  +\left(  y\Lambda_{y}+U_{y}e^{U-\Lambda}\frac{Vy}{\sqrt{1+V^{2}y^{2}}%
}+e^{U-\Lambda}\frac{V}{\left(  1+V^{2}y^{2}\right)  ^{\frac{3}{2}}}\right)
\varphi\\
&  +\left(  y\Lambda_{y}V+\frac{e^{U-\Lambda}U_{y}}{y}\sqrt{1+V^{2}y^{2}%
}-e^{U-\Lambda}\frac{1}{y^{2}\sqrt{1+V^{2}y^{2}}}\right)  \varphi_{V}.
\end{align*}
After some cancellations:
\begin{align}
&  F\left(  y,V,\Phi\right)  =-\left(  ye^{\Lambda}+e^{U}\frac{Vy}%
{\sqrt{1+V^{2}y^{2}}}\right)  \left(  e^{-\Lambda}\varphi\right)  _{y}-\Phi
e^{\Lambda}\left(  e^{-\Lambda}\varphi\right)  _{\Phi}-\varphi\nonumber\\
&  +\left(  y\Lambda_{y}Ve^{\Lambda}+Ve^{\Lambda}+\frac{e^{U}U_{y}}{y}%
\sqrt{1+V^{2}y^{2}}-e^{U}\frac{1}{y^{2}\sqrt{1+V^{2}y^{2}}}\right)  \left(
e^{-\Lambda}\varphi\right)  _{V}. \label{S8E7}%
\end{align}
Then (\ref{S8E6}) can be rewritten as:
\[
\sum_{i=1}^{2}\int_{y_{0}}^{\infty}\int_{0}^{\infty}F\left(  y,V_{i}\left(
y\right)  ,\Phi e^{\sigma_{i}\left(  y\right)  }\right)  \frac{\bar{A}%
_{0}\left(  \Phi\right)  e^{\sigma_{i}\left(  y\right)  }}{\left|
\frac{\partial H}{\partial V}\left(  y,V_{i}\left(  y\right)  \right)
\right|  }d\Phi dy=0.
\]
Due to Proposition \ref{Prop1} as well as the fact that the curves $\gamma
_{1},\;\gamma_{2}$ are globally defined it follows that:
\[
\left|  \frac{\partial H}{\partial V}\left(  y,V_{i}\left(  y\right)  \right)
\right|  =\left(  -1\right)  ^{i-1}\frac{\partial H}{\partial V}\left(
y,V_{i}\left(  y\right)  \right)  .
\]
Then:
\begin{equation}
J=\sum_{i=1}^{2}\left(  -1\right)  ^{i-1}\int_{y_{0}}^{\infty}\int_{0}%
^{\infty}F\left(  y,V_{i}\left(  y\right)  ,\Phi e^{\sigma_{i}\left(
y\right)  }\right)  \frac{\bar{A}_{0}\left(  \Phi\right)  e^{\sigma_{i}\left(
y\right)  }}{\frac{\partial H}{\partial V}\left(  y,V_{i}\left(  y\right)
\right)  }d\Phi dy. \label{S8E8}%
\end{equation}
Using (\ref{S5E3}) and (\ref{S8E7}) we obtain:
\begin{align*}
&  \frac{F\left(  y,V,\Phi\right)  }{ye^{\Lambda}+e^{U}\frac{Vy}{\sqrt
{V^{2}y^{2}+1}}} =-\left(  e^{-\Lambda}\varphi\right)  _{y}-\frac{\Phi
}{y+e^{U-\Lambda}\frac{Vy}{\sqrt{V^{2}y^{2}+1}}}\left(  e^{-\Lambda}%
\varphi\right)  _{\Phi}\\
&  -\frac{1}{y+e^{U-\Lambda}\frac{Vy}{\sqrt{V^{2}y^{2}+1}}}\left(
e^{-\Lambda}\varphi\right) \\
&  +\frac{1}{y+e^{U-\Lambda}\frac{Vy}{\sqrt{V^{2}y^{2}+1}}}\left(
y\Lambda_{y}V+V+\frac{e^{U-\Lambda}U_{y}}{y}\sqrt{1+V^{2}y^{2}}\right)
\left(  e^{-\Lambda}\varphi\right)  _{V}\\
&  -\frac{e^{U-\Lambda}}{y+e^{U-\Lambda}\frac{Vy}{\sqrt{V^{2}y^{2}+1}}}%
\frac{1}{y^{2}\sqrt{1+V^{2}y^{2}}}\left(  e^{-\Lambda}\varphi\right)  _{V}.
\end{align*}
Equations (\ref{S6E4a2}), (\ref{S6E4a}) and (\ref{S7E7}) give:
\begin{align}
&  \frac{dV_{i}}{dy}\left(  y\right)  =-\frac{1}{y+e^{U-\Lambda}\frac
{Vy}{\sqrt{V^{2}y^{2}+1}}}\left(  y\Lambda_{y}V+V+\frac{e^{U-\Lambda}U_{y}}%
{y}\sqrt{1+V^{2}y^{2}}\right. \nonumber\\
&  \left.  -e^{U-\Lambda}\frac{1}{y^{2}\sqrt{1+V^{2}y^{2}}}\right)  .
\end{align}
Therefore
\begin{align*}
&  F\left(  y,V_{i}\left(  y\right)  ,\Phi e^{\sigma_{i}\left(  y\right)
}\right)  \frac{e^{\sigma_{i}\left(  y\right)  }}{\frac{\partial H}{\partial
V}\left(  y,V_{i}\left(  y\right)  \right)  }\\
&  =e^{\sigma_{i}\left(  y\right)  } \left[  -\left(  e^{-\Lambda}%
\varphi\right)  _{y}-\Phi\frac{d\sigma_{i}}{dy}\left(  e^{-\Lambda}%
\varphi\right)  _{\Phi}\right. \\
&  \left.  \left.  -\frac{d\sigma_{i}}{dy}\left(  e^{-\Lambda}\varphi\right)
-\frac{dV_{i}}{dy}\left(  y\right)  \left(  e^{-\Lambda}\varphi\right)
\right]  \right|  _{V\left(  y,V_{i}\left(  y\right)  ,\Phi e^{\sigma
_{i}\left(  y\right)  }\right)  }.
\end{align*}
It then follows, using the chain rule that:%

\[
\frac{d}{dy}\left(  e^{\sigma_{i}\left(  y\right)  }e^{-\Lambda\left(
y\right)  }\varphi\left(  y,V_{i}\left(  y\right)  ,\Phi e^{\sigma_{i}\left(
y\right)  }\right)  \right)  =-F\left(  y,V_{i}\left(  y\right)  ,\Phi
e^{\sigma_{i}\left(  y\right)  }\right)  \frac{e^{\sigma_{i}\left(  y\right)
}}{\frac{\partial H}{\partial V}\left(  y,V_{i}\left(  y\right)  \right)  }.
\]
Formula (\ref{S8E8}) then becomes:
\[
J=\sum_{i=1}^{2}\left(  -1\right)  ^{i-1}\int_{y_{0}}^{\infty}\int_{0}%
^{\infty}\bar{A}_{0}\left(  \Phi\right)  \frac{d}{dy}\left(  e^{\sigma
_{i}\left(  y\right)  }e^{-\Lambda\left(  y\right)  }\varphi\left(
y,V_{i}\left(  y\right)  ,\Phi e^{\sigma_{i}\left(  y\right)  }\right)
\right)  d\Phi dy
\]
or, equivalently:
\[
J=\sum_{i=1}^{2}\left(  -1\right)  ^{i-1}\int_{0}^{\infty}\bar{A}_{0}\left(
\Phi\right)  \left(  e^{\sigma_{i}\left(  y_{0}\right)  }e^{-\Lambda\left(
y_{0}\right)  }\varphi\left(  y_{0},V_{i}\left(  y_{0}^{+}\right)  ,\Phi
e^{\sigma_{i}\left(  y_{0}\right)  }\right)  \right)  d\Phi
\]
and using (\ref{S7E6a}), (\ref{S7E6b}), (\ref{S7E8}):
\[
J\equiv\sum_{i=1}^{2}\left(  -1\right)  ^{i-1}\int_{0}^{\infty}\bar{A}%
_{0}\left(  \Phi\right)  \varphi\left(  y_{0},V_{i}\left(  y_{0}\right)
,\Phi\right)  d\Phi.
\]
Due to the fact that $\varphi\left(  y_{0},V_{i}\left(  y_{0}^{+}\right)
,\Phi\right)  =\varphi\left(  y_{0},V_{0},\Phi\right)  $ for $i=1,2$ we have
$J=0$ and (\ref{S8E5a}) follows. This concludes the proof of the theorem.
\end{proof}

\bigskip

We now remark that it is possible to derive some detailed information about
the behaviour of the curves $\gamma_{1},\;\gamma_{2}$ as $y\rightarrow\infty.$

\bigskip

\begin{theorem}
\label{Th3}Suppose that the curves $\gamma_{1},\;\gamma_{2}$ are as in Theorem
\ref{Th1}. Then, the following asymptotic formulas hold:
\begin{align*}
U  &  =\log\left(  \frac{y}{y_{0}}\right)  +\log\left(  \sqrt{1-y_{0}^{2}%
}\right)  +o\left(  1\right)  \;\text{as\ \ }y\rightarrow\infty,\\
\Lambda &  \rightarrow\log\left(  \sqrt{3}\right)  \;\text{as\ \ }%
y\rightarrow\infty,\\
V_{1}  &  =-\frac{2y_{0}\sqrt{3\left(  1-y_{0}^{2}\right)  }}{\left(
1-4y_{0}^{2}\right)  y}(1+o(1))\;\;\text{as\ \ }y\rightarrow\infty,\\
V_{2}  &  =-\frac{\sqrt{1-y_{0}^{2}}}{\sqrt{3}y_{0}}\frac{C_{1}}{y}\left(
\frac{y_{0}}{y}\right)  ^{2}(1+o(1))\;\;\text{as\ \ }y\rightarrow\infty
\end{align*}
for a suitable constant $C_{1}\in\mathbb{R.}$
\end{theorem}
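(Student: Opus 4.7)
The strategy is to exploit the reduction of the self-similar problem to the four-dimensional autonomous ODE system whose existence of a heteroclinic orbit is the content of Theorem \ref{Th1}. Since that theorem asserts convergence to a stationary point $P_1$ as the independent variable tends to infinity, the heart of the matter is to compute $P_1$ explicitly and then linearise to extract the rates. I would begin by passing to the logarithmic variable $\tau = \log(y/y_0)$ and to dimensionless unknowns that stay bounded in that regime: namely $\Lambda$ itself, $Z_i := y V_i(y)$, and a surrogate for $U$ such as $\tilde U := U - \log(y/y_0)$. Setting all $\tau$-derivatives to zero in the reduced system and combining them with the constraints \eqref{S4E3}--\eqref{S4E4} at the fixed point forces $e^{2\Lambda_\infty}=3$, which is the second assertion. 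Feeding this into the Einstein equation \eqref{S4E4} with a pressure limit $8\pi y^2 \tilde p \to 0$ forces $2yU_y \to 2$, which gives $\tilde U \to \log\sqrt{1-y_0^2}$ up to an additive constant fixed by matching. That is the first assertion.

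For the $V_i$ asymptotics the key observation is that the level set equation $H(y,V)=h$, divided by $y$ and written in terms of $Z=yV$, reduces in the limit to
\[
\alpha\sqrt{Z^2+1} + \sqrt{3}\, Z = \alpha, \qquad \alpha := \frac{\sqrt{1-y_0^2}}{y_0} = h.
\]
Squaring and factoring gives the two roots $Z=0$ and $Z = -2y_0\sqrt{3(1-y_0^2)}/(1-4y_0^2)$, the second of which is exactly the claimed leading coefficient of $V_1$. The $V_2$ branch corresponds to $Z \to 0$, so its rate must be derived by a refined expansion: writing $U = \log(y/y_0) + \log\alpha + u(y)$, $\Lambda = \log\sqrt{3} + \lambda(y)$ with $u,\lambda = o(1)$, the level set equation rearranges to
\[
\alpha(e^u - 1) + \tfrac12 \alpha e^u Z_2^2 + \sqrt{3}\, e^\lambda Z_2 = 0,
\]
so that at leading order $Z_2 \sim -\alpha u/\sqrt{3}$. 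Feeding the $V_1 \sim A/y$ asymptotics into \eqref{Tr1}--\eqref{Tp1} and into \eqref{S4E3}--\eqref{S4E4} shows that $\lambda$ and $u$ both decay like $1/y^{2}$, whence $Z_2 \sim C_1 y_0^2 \sqrt{1-y_0^2}/(\sqrt 3 \, y^2)$ for a constant $C_1$ determined by the fixed orbit. Translating this back into $V_2 = Z_2/y$ produces the claimed cubic decay with the stated prefactor.

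The main obstacle is the rigorous justification of the power-law decay rates, which amounts to linearising the reduced four-dimensional system at $P_1$ and identifying the eigenvalues and eigenvectors of the Jacobian. One must verify that $P_1$ is hyperbolic with a three-dimensional stable manifold (as asserted in the abstract), that the trajectory supplied by Theorem \ref{Th1} lies on that manifold, and that the eigenmode governing the $V_2$ direction has the precise eigenvalue $-2$ in the $\tau$-variable so as to produce the $y^{-2}$ correction (equivalently the $y^{-3}$ for $V_2$); one also needs to rule out logarithmic corrections from possible resonances among the eigenvalues. Given that the existence proof in Theorem \ref{Th1} already uses the stable-manifold structure at $P_1$ as a central ingredient, this linearisation calculation should be a structured, if slightly lengthy, verification rather than a novel difficulty, and the constant $C_1$ then emerges as the component of the limiting orbit along the slow $V_2$-eigendirection.
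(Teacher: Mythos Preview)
Your proposal is correct and follows essentially the same route as the paper. The paper introduces exactly the variables you describe, $s=\log(y/y_{0})$, $u=U-\log(y/y_{0})$, $\zeta_{i}=yV_{i}$, identifies the steady state $P_{1}$ with $\Lambda_{\infty}=\log\sqrt{3}$, $u_{\infty}=\log\sqrt{1-y_{0}^{2}}$, $\zeta_{1,\infty}=-2y_{0}\sqrt{3(1-y_{0}^{2})}/(1-4y_{0}^{2})$, $\zeta_{2,\infty}=0$, and then reads off the asymptotics from the linearisation at $P_{1}$ (your anticipated eigenvalue $-2$ is precisely the paper's $\gamma_{2}$); the only cosmetic difference is that the paper has already packaged the Jacobian computation into a separate result (Theorem~\ref{linear}), so its proof of Theorem~\ref{Th3} reduces to a two-line back-translation via the change of variables.
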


Notice that the asymptotic behaviour of the solutions in Theorem \ref{Th3}
shows that the support of these solutions approaches the line $\left\{
V=0\right\}  $ away from the self-similar region (i.e. for $y\rightarrow
\infty$). This is the one of the main differences between the solutions
described in this paper and the ones in \cite{martingarcia}.

It is relevant to notice that the spacetime described by the solutions in
Theorem \ref{Th3} exhibits curvature singularities and not just coordinate
singularities. To this end we use Kretschmann scalar (cf. \cite{rendall08}):%
\[
R^{\alpha\beta\gamma\delta}R_{\alpha\beta\gamma\delta}=4K^{2}+\frac{16m^{2}%
}{r^{6}}+12r^{-2}\nabla_{a}\nabla_{b}r\nabla^{a}\nabla^{b}r
\]
where $K$ is the gaussian curvature of the quotient of the spacetime by the
symmetry group and $m$ is the Hawking mass that can be computed by means of:%
\[
m=\frac{r}{2}\left(  1-\partial_{a}r\partial^{a}r\right)  .
\]
Combining (\ref{met1}), (\ref{S3E11}), (\ref{S3E12}) we obtain the following
self-similar form for the Hawking mass:%
\[
m=\frac{r}{2}\left(  1-e^{-2\Lambda\left(  \frac{r}{\left(  -t\right)
}\right)  }\right)
\]
and therefore, it follows from Theorem \ref{Th3} that:%
\[
m\sim\frac{r}{3}\ \ \text{for\ \ }\frac{r}{\left(  -t\right)  }\text{
sufficiently large}.
\]
On the other hand, the last term in the Kretschmann scalar can be written as
(cf. \cite{DR}, Appendix A):
\[
24r^{-2}\left(  \frac{1}{2r}\left(  k-\nabla_{b}r\nabla^{c}r\right)  +2\pi
r\mathrm{tr}T\right)  ^{2}+96\pi^{2}\left(  T_{ab}-\frac{\mathrm{tr}T}%
{2}g_{ab}\right)  \left(  T^{ab}-\frac{\mathrm{tr}T}{2}g^{ab}\right)  .
\]
The last term turns out to be positive for any matter model satisfying the
dominant energy condition, which includes in particular the case of Vlasov
matter. Therefore $R^{\alpha\beta\gamma\delta}R_{\alpha\beta\gamma\delta}%
\geq\frac{16m^{2}}{r^{6}}$ and so the curvature becomes singular as
$r\rightarrow0$ for a fixed large value of $\frac{r}{\left(  -t\right)  }.$

We remark that the solutions which have been derived do not provide an example
of violation of the cosmic censorship hypothesis for Vlasov matter, because
the spacetimes concerned are not asymptotically flat as $r\rightarrow\infty.$
Moreover, it turns out that the region contained inside the light cone
reaching the singular point at $r=0,\ t=0^{-}$ in the spacetime described by
Theorem \ref{Th3} is dependent on the data on the whole region with $0\leq
r<\infty.$ This implies that a gluing of this spacetime with another one
causally disconnected \ from the singular point is not possible, because this
would require doing some gluing along regions where $r=\infty.$ In order to
check these statements it is convenient to rewrite the metric (\ref{met1}) in
double null coordinates. Notice that (\ref{met1}), (\ref{S3E11})
and (\ref{S3E12}) yield the following self-similar structure for the metric:%
\[
ds^{2}=-e^{2U\left(  \frac{r}{\left(  -t\right)  }\right)  }dt^{2}%
+e^{2\Lambda\left(  \frac{r}{\left(  -t\right)  }\right)  }dr^{2}+r^{2}\left(
d\theta^{2}+\sin^{2}\theta d\varphi^{2}\right)  .
\]
The double null coordinates are then just the constants of integration
associated to the pair of differential equations:%
\begin{align*}
-e^{U\left(  \frac{r}{\left(  -t\right)  }\right)  }dt+e^{\Lambda\left(
\frac{r}{\left(  -t\right)  }\right)  }dr  &  =0,\\
e^{U\left(  \frac{r}{\left(  -t\right)  }\right)  }dt+e^{\Lambda\left(
\frac{r}{\left(  -t\right)  }\right)  }dr  &  =0.
\end{align*}
The solutions of these equations can be written in terms of two integration
constants $u$ and $v$ that will define the double null coordinates. The
particular choice of coordinates has been made in order to obtain $u$ and $v$
taking values in compact sets:%
\begin{align*}
\mathrm{arctanh}\left(  u\right)   &  =\log\left(  -t\right)  +\int_{0}%
^{y}\frac{e^{\Lambda\left(  \xi\right)  -U\left(  \xi\right)  }}{1+\xi
e^{\Lambda\left(  \xi\right)  -U\left(  \xi\right)  }}d\xi\\
\mathrm{arctanh}\left(  v\right)   &  =\log\left(  -t\right)  -\int_{0}%
^{y}\frac{e^{\Lambda\left(  \xi\right)  -U\left(  \xi\right)  }}{1-\xi
e^{\Lambda\left(  \xi\right)  -U\left(  \xi\right)  }}d\xi
\end{align*}
In the region close to the centre (i.e. $y<<1$) the structure of the metric is
similar to Minkowski. On the other hand, Theorem \ref{Th3} yields the
following asymptotics for $r>>\left(  -t\right)  :$%
\begin{align*}
\mathrm{arctanh}\left(  v\right)   &  \sim\log\left(  -t\right)  +\frac
{\sqrt{3}y_{0}}{\sqrt{1-y_{0}^{2}}}\log\left(  \frac{r}{\left(  -t\right)
}\right)  ,\\
\mathrm{arctanh}\left(  u\right)   &  \sim-\log\left(  -t\right)  +\frac
{\sqrt{3}y_{0}}{\sqrt{1-y_{0}^{2}}}\log\left(  \frac{r}{\left(  -t\right)
}\right)  .
\end{align*}
The light cone approaching the singular point is described in these
coordinates by the line $u=1.$ Notice along such a line, for $v$ of order one
we would have $r=\infty,$ whence the assertion above follows.

For these reasons a spacetime behaving asymptotically as Minkowski cannot be
obtained gluing the self-similar solution obtained in this paper with a
spacetime causally disconnected from the singular point. This kind of gluing
might be possible for non self-similar solutions of the Einstein equations
behaving asymptotically near the singular point like those described in this
paper. However, such an analysis is beyond the scope of this paper.

\section{PROOF OF THEOREM \ref{Th1}.\label{ProT1}}

\bigskip

The strategy used to prove Theorem \ref{Th1} is the following. We first
transform the original problem (\ref{S4E3}), (\ref{S4E4}), (\ref{S7E1}),
(\ref{S7E2}), (\ref{S7E3}), (\ref{S7E4}), (\ref{S7E6})-(\ref{S7E8}),
(\ref{S7E10a}), (\ref{S7E10b}) into a family of four-dimensional autonomous
systems depending on the parameter $\beta_{0}$ by means of a change of
variables. It will be shown that proving Theorem \ref{Th1} is equivalent to
finding an orbit for this system connecting two specific points $P_{0}%
,\;P_{1}$ of the four-dimensional phase space. The point $P_{1}$ is a unstable
saddle point with an associated three-dimensional stable manifold
$\mathcal{M}=\mathcal{M}\left(  \beta_{0}\right)  $ that can be described in
detail in the limit $y_{0}\rightarrow0$. A shooting argument will show that
for a suitable choice of the parameter $\beta_{0}$ the manifold $\mathcal{M}%
\left(  \beta_{0}\right)  $ contains the point $P_{0}.$ In the rest of this
section we give the details of this argument.

\subsection{Reduction of the problem to an autonomous system.\label{reformul}}

Instead of the set of variables $\left(  y,U,\Lambda,V_{i},\sigma_{i}\right)
$ it is more convenient to use the set of variables $\left(  s,u,\Lambda
,\zeta_{i},Q_{i}\right)  $ where:%

\begin{equation}
s=\log\left(  \frac{y}{y_{0}}\right)  \;\;,\;\;U=\log\left(  \frac{y}{y_{0}%
}\right)  +u\;\;,\;\;\zeta_{i}=yV_{i}\;\;,\;\;Q_{i}=\frac{y_{0}}{y}%
e^{\sigma_{i}}\;,\;i=1,2. \label{D1E1}%
\end{equation}
Then, the evolution equations (\ref{S4E3}), (\ref{S4E4}), (\ref{S7E2a}),
(\ref{S7E7}) become:
\begin{align}
e^{u}\sqrt{\zeta_{i}^{2}+1}+y_{0}\zeta_{i}e^{\Lambda}  &  =\sqrt{1-y_{0}^{2}%
}\;\;,\;\;i=1,2\;\;,\;\;\zeta_{1}<\zeta_{2},\label{D1E6}\\
\frac{dQ_{i}}{ds}  &  =-\frac{e^{u}Q_{i}\zeta_{i}}{\left[  y_{0}e^{\Lambda
}\sqrt{\left(  \zeta_{i}\right)  ^{2}+1}+\zeta_{i}e^{u}\right]  }%
\;\;,\;\;i=1,2,\label{D1E7}\\
e^{-2\Lambda}\left(  2\Lambda_{s}-1\right)  +1  &  =\frac{\theta}{2}\left[
\frac{Q_{1}^{2}\left[  \zeta_{1}^{2}+1\right]  }{\left|  \zeta_{1}e^{u}%
+y_{0}e^{\Lambda}\sqrt{\zeta_{1}^{2}+1}\right|  }+\frac{Q_{2}^{2}\left[
\zeta_{2}^{2}+1\right]  }{\left|  \zeta_{2}e^{u}+y_{0}e^{\Lambda}\sqrt
{\zeta_{2}^{2}+1}\right|  }\right]  ,\label{D1E8}\\
e^{-2\Lambda}\left(  2u_{s}+3\right)  -1  &  =\frac{\theta}{2}\left[
\frac{Q_{1}^{2}\left(  \zeta_{1}\right)  ^{2}}{\left|  \zeta_{1}e^{u}%
+y_{0}e^{\Lambda}\sqrt{\zeta_{1}^{2}+1}\right|  }+\frac{Q_{2}^{2}\left(
\zeta_{2}\right)  ^{2}}{\left|  \zeta_{2}e^{u}+y_{0}e^{\Lambda}\sqrt{\zeta
_{2}^{2}+1}\right|  }\right]  \label{D1E9}%
\end{align}
where%
\begin{equation}
\theta=\frac{16\pi^{2}\beta_{0}}{y_{0}}. \label{D1E9a}%
\end{equation}
The initial conditions (\ref{S4E6}), (\ref{S7E8}) imply:
\begin{equation}
u=0\;\;,\;\;\Lambda=0\;\;,\;\;Q_{i}=1\;\;,\;\;i=1,2\;\;\;\text{at \ }s=0.
\label{D1E10}%
\end{equation}
Notice that the system (\ref{D1E7})-(\ref{D1E9}) with $\zeta_{i}$ as in
(\ref{D1E6}) is a four-dimensional autonomous system of equations for the
unknown functions $\left(  Q_{1},Q_{2},\Lambda,u\right)  .$ Notice however
that the system seems to becomes singular if the variables $\left(
Q_{1},Q_{2},\Lambda,u\right)  $ approach the values in (\ref{D1E10}) due to
the vanishing of the denominators in (\ref{D1E8}), (\ref{D1E9}). To treat
these singularities we rewrite the terms $\left[  y_{0}e^{\Lambda}%
\sqrt{\left(  \zeta_{i}\right)  ^{2}+1}+\zeta_{i}e^{u}\right]  .$ Notice that
(\ref{D1E6}) implies:
\begin{align}
\zeta_{i}  &  =\frac{1}{\left(  1-y_{0}^{2}e^{2\left(  \Lambda-u\right)
}\right)  }\left[  -y_{0}\sqrt{1-y_{0}^{2}}e^{\Lambda-2u}\mp Z\right]
\;,\;i=1,2,\label{L1E1}\\
Z  &  =\sqrt{\left(  e^{-2u}\left(  1-y_{0}^{2}\right)  -1\right)  \left(
1-y_{0}^{2}e^{2\left(  \Lambda-u\right)  }\right)  +y_{0}^{2}\left(
1-y_{0}^{2}\right)  e^{2\left(  \Lambda-2u\right)  }}. \label{L1E2}%
\end{align}
Then:
\[
y_{0}e^{\Lambda}\sqrt{\left(  \zeta_{i}\right)  ^{2}+1}+\zeta_{i}e^{u}=\mp
e^{u}Z\;\;,\;i=1,2
\]
and the system of equations (\ref{D1E7})-(\ref{D1E9}) becomes:
\begin{align}
\frac{dQ_{1}}{ds}  &  =\frac{Q_{1}\zeta_{1}}{Z},\;\label{L1E3}\\
\frac{dQ_{2}}{ds}  &  =-\frac{Q_{2}\zeta_{2}}{Z},\;\label{L1E4}\\
e^{-2\Lambda}\left(  2\Lambda_{s}-1\right)  +1  &  =\frac{\theta e^{-u}}%
{2}\left[  \frac{Q_{1}^{2}}{Z}\left[  \zeta_{1}^{2}+1\right]  +\frac{Q_{2}%
^{2}}{Z}\left[  \zeta_{2}^{2}+1\right]  \right]  ,\label{L1E5}\\
e^{-2\Lambda}\left(  2u_{s}+3\right)  -1  &  =\frac{\theta e^{-u}}{2}\left[
\frac{Q_{1}^{2}\zeta_{1}^{2}}{Z}+\frac{Q_{2}^{2}\zeta_{2}^{2}}{Z}\right]  .
\label{L1E6}%
\end{align}
We now eliminate the variables $\Lambda,\;u$ in (\ref{D1E7})-(\ref{D1E9}) and
replace them by the functions $Z$ and $G$ where $Z$ is as in (\ref{L1E2}) and
$G$ is defined by means of:
\begin{equation}
G=e^{-2\Lambda}. \label{L1E6a}%
\end{equation}
Then (\ref{L1E5}) becomes:
\begin{equation}
G_{s}=1-G-\frac{\theta e^{-u}}{2}\left[  \frac{Q_{1}^{2}}{Z}\left[  \zeta
_{1}^{2}+1\right]  +\frac{Q_{2}^{2}}{Z}\left[  \zeta_{2}^{2}+1\right]
\right]  . \label{L1E7}%
\end{equation}
On the other hand (\ref{L1E2}) implies:
\begin{equation}
e^{-2u}=\frac{Z^{2}+1}{\left[  \left(  1-y_{0}^{2}\right)  +y_{0}%
^{2}e^{2\Lambda}\right]  }=\frac{\left(  Z^{2}+1\right)  G}{\left[
G+y_{0}^{2}\left(  1-G\right)  \right]  } \label{L1E7a}%
\end{equation}
whence:
\[
u=-\frac{1}{2}\log\left(  \frac{\left(  Z^{2}+1\right)  G}{\left[  G+y_{0}%
^{2}\left(  1-G\right)  \right]  }\right)  .
\]
Differentiating this formula we obtain:
\[
u_{s}=-\frac{ZZ_{s}}{\left(  Z^{2}+1\right)  }-\frac{y_{0}^{2}}{2}\frac{G_{s}%
}{\left[  G+y_{0}^{2}\left(  1-G\right)  \right]  G}.
\]
Eliminating $u_{s}$ from this formula using (\ref{L1E6}), (\ref{L1E7}) we
obtain:
\begin{equation}
ZZ_{s}=\left(  \frac{3}{2}-\frac{1}{2G}-\Delta\right)  \left(  Z^{2}+1\right)
\label{L1E8}%
\end{equation}
where:%
\[
\Delta\equiv\frac{y_{0}^{2}}{2}\frac{G_{s}}{\left[  G+y_{0}^{2}\left(
1-G\right)  \right]  G}+\frac{\theta e^{-u}}{4G}\left[  \frac{Q_{1}^{2}\left(
\zeta_{1}\right)  ^{2}}{Z}+\frac{Q_{2}^{2}\left(  \zeta_{2}\right)  ^{2}}%
{Z}\right]  .
\]
Using (\ref{L1E7}) it then follows, after some computations, that:%
\begin{align}
&  4GZ\left[  G+y_{0}^{2}\left(  1-G\right)  \right]  \Delta=2\left(
1-G\right)  y_{0}^{2}Z\label{L1E8a}\\
&  +\theta e^{-u}\left[  -y_{0}^{2}\left[  Q_{1}^{2}\left[  \zeta_{1}%
^{2}+1\right]  +Q_{2}^{2}\left[  \zeta_{2}^{2}+1\right]  \right]  \right.
\nonumber\\
&  \left.  +\left[  Q_{1}^{2}\left(  \zeta_{1}\right)  ^{2}+Q_{2}^{2}\left(
\zeta_{2}\right)  ^{2}\right]  \left[  G+y_{0}^{2}\left(  1-G\right)  \right]
\right]  .\nonumber
\end{align}
The last bracket in (\ref{L1E8a}) can be rewritten as:%
\begin{align}
&  \left[  -y_{0}^{2}\left[  Q_{1}^{2}\left[  \zeta_{1}^{2}+1\right]
+Q_{2}^{2}\left[  \zeta_{2}^{2}+1\right]  \right]  +\left[  Q_{1}^{2}\left(
\zeta_{1}\right)  ^{2}+Q_{2}^{2}\left(  \zeta_{2}\right)  ^{2}\right]  \left[
G+y_{0}^{2}\left(  1-G\right)  \right]  \right] \nonumber\label{L1E8b}\\
&  =Q_{1}^{2}\left[  \zeta_{1}^{2}-y_{0}^{2}\left(  \zeta_{1}^{2}+1\right)
\right]  +Q_{2}^{2}\left[  \zeta_{2}^{2}-y_{0}^{2}\left(  \zeta_{2}%
^{2}+1\right)  \right] \nonumber\\
&  +\left(  1-y_{0}^{2}\right)  \left[  Q_{1}^{2}\left(  \zeta_{1}\right)
^{2}+Q_{2}^{2}\left(  \zeta_{2}\right)  ^{2}\right]  \left(  G-1\right)  .
\end{align}
Using (\ref{L1E1}) we obtain:%
\begin{align}
&  \left[  \zeta_{i}^{2}-y_{0}^{2}\left(  \zeta_{i}^{2}+1\right)  \right]
=\frac{\left(  1-y_{0}^{2}\right)  Z^{2}}{\left(  1-y_{0}^{2}e^{2\left(
\Lambda-u\right)  }\right)  ^{2}}\pm\frac{2y_{0}\left(  1-y_{0}^{2}\right)
^{\frac{3}{2}}e^{\Lambda-2u}}{\left(  1-y_{0}^{2}e^{2\left(  \Lambda-u\right)
}\right)  ^{2}}Z\nonumber\\
&  +y_{0}^{2}\left[  \frac{\left(  1-y_{0}^{2}\right)  ^{2}e^{2\left(
\Lambda-2u\right)  }}{\left(  1-y_{0}^{2}e^{2\left(  \Lambda-u\right)
}\right)  ^{2}}-1\right]  \;\;,\;\;i=1,2. \label{L1E8c}%
\end{align}
Plugging (\ref{L1E8c}) into (\ref{L1E8b}) it then follows that:%
\begin{align*}
&  \left[  -y_{0}^{2}\left[  Q_{1}^{2}\left[  \zeta_{1}^{2}+1\right]
+Q_{2}^{2}\left[  \zeta_{2}^{2}+1\right]  \right]  +\left[  Q_{1}^{2}\left(
\zeta_{1}\right)  ^{2}+Q_{2}^{2}\left(  \zeta_{2}\right)  ^{2}\right]  \left[
G+y_{0}^{2}\left(  1-G\right)  \right]  \right] \\
&  =\frac{\left(  1-y_{0}^{2}\right)  Z^{2}}{\left(  1-y_{0}^{2}e^{2\left(
\Lambda-u\right)  }\right)  ^{2}}\left(  Q_{1}^{2}+Q_{2}^{2}\right)
+\frac{2y_{0}\left(  1-y_{0}^{2}\right)  ^{\frac{3}{2}}e^{\Lambda-2u}}{\left(
1-y_{0}^{2}e^{2\left(  \Lambda-u\right)  }\right)  ^{2}}\left(  Q_{1}%
^{2}-Q_{2}^{2}\right)  Z\\
&  +y_{0}^{2}\left[  \frac{\left(  1-y_{0}^{2}\right)  ^{2}e^{2\left(
\Lambda-2u\right)  }}{\left(  1-y_{0}^{2}e^{2\left(  \Lambda-u\right)
}\right)  ^{2}}-1\right]  \left(  Q_{1}^{2}+Q_{2}^{2}\right) \\
&  +\left(  1-y_{0}^{2}\right)  \left[  Q_{1}^{2}\left(  \zeta_{1}\right)
^{2}+Q_{2}^{2}\left(  \zeta_{2}\right)  ^{2}\right]  \left(  G-1\right)
\end{align*}
and using (\ref{L1E8a}) we arrive at:
\begin{align}
&  \Delta=\frac{\left(  1-G\right)  y_{0}^{2}}{2G\left[  G+y_{0}^{2}\left(
1-G\right)  \right]  }\nonumber\\
&  +\frac{\theta e^{-u}}{4G\left[  G+y_{0}^{2}\left(  1-G\right)  \right]
}\left[  \frac{\left(  1-y_{0}^{2}\right)  Z}{\left(  1-y_{0}^{2}e^{2\left(
\Lambda-u\right)  }\right)  ^{2}}\left(  Q_{1}^{2}+Q_{2}^{2}\right)  \right.
\nonumber\\
&  \left.  +\frac{2y_{0}\left(  1-y_{0}^{2}\right)  ^{\frac{3}{2}}
e^{\Lambda-2u}}{\left(  1-y_{0}^{2}e^{2\left(  \Lambda-u\right)  }\right)
^{2}}\left(  Q_{1}^{2}-Q_{2}^{2}\right)  +\frac{1}{Z}\Phi\right]
\label{L1E8d}%
\end{align}
where:%
\begin{align}
&  \Phi=y_{0}^{2}\left[  \frac{\left(  1-y_{0}^{2}\right)  ^{2}e^{2\left(
\Lambda-2u\right)  }}{\left(  1-y_{0}^{2}e^{2\left(  \Lambda-u\right)
}\right)  ^{2}}-1\right]  \left(  Q_{1}^{2}+Q_{2}^{2}\right) \nonumber\\
&  +\left(  1-y_{0}^{2}\right)  \left[  Q_{1}^{2}\left(  \zeta_{1}\right)
^{2}+Q_{2}^{2}\left(  \zeta_{2}\right)  ^{2}\right]  \left(  G-1\right)  .
\label{L1E9}%
\end{align}
In order to obtain analytic solutions it is convenient to introduce the change
of variables:%
\begin{equation}
ds=2GZd\chi\;\;,\;\;\chi=0\;\;\text{at\ \ }s=0. \label{L1E9aa}%
\end{equation}
Then the system (\ref{L1E3}), (\ref{L1E4}), (\ref{L1E7}), (\ref{L1E8})
becomes:%
\begin{align}
\frac{dQ_{1}}{d\chi}  &  =2GQ_{1}\zeta_{1},\label{F1E1}\\
\frac{dQ_{2}}{d\chi}  &  =-2GQ_{2}\zeta_{2},\label{F1E2}\\
\frac{dG}{d\chi}  &  =2G\left[  Z\left(  1-G\right)  -\frac{\theta e^{-u}}%
{2}\left[  Q_{1}^{2}\left[  \zeta_{1}^{2}+1\right]  +Q_{2}^{2}\left[
\zeta_{2}^{2}+1\right]  \right]  \right]  ,\label{F1E3}\\
\frac{dZ}{d\chi}  &  =\left(  3G-1-2G\Delta\right)  \left(  Z^{2}+1\right)
\label{F1E4}%
\end{align}
with the initial conditions:%
\begin{equation}
Q_{1}=Q_{2}=1\;\;,\;\;G=1\;\;,\;\;Z=0\;\;,\;\text{at}\;\chi=0. \label{L1E9e}%
\end{equation}
We can further simplify $\Phi$ in (\ref{L1E9}) using (\ref{L1E1}):
\begin{align}
\Phi &  =y_{0}^{2}\left[  \frac{\left(  1-y_{0}^{2}\right)  ^{2}e^{2\left(
\Lambda-2u\right)  }}{\left(  1-y_{0}^{2}e^{2\left(  \Lambda-u\right)
}\right)  ^{2}}-1\right]  \left(  Q_{1}^{2}+Q_{2}^{2}\right) \label{L1E9f}\\
&  +\frac{\left(  1-y_{0}^{2}\right)  \left(  G-1\right)  }{\left(
1-y_{0}^{2}e^{2\left(  \Lambda-u\right)  }\right)  ^{2}}\left[  \left[
y_{0}^{2}\left(  1-y_{0}^{2}\right)  e^{2\left(  \Lambda-2u\right)  }%
+Z^{2}\right]  \left(  Q_{1}^{2}+Q_{2}^{2}\right)  \right] \nonumber\\
&  \left.  +2y_{0}\sqrt{1-y_{0}^{2}}Ze^{\Lambda-2u}\left(  Q_{1}^{2}-Q_{2}%
^{2}\right)  \right]  .\nonumber
\end{align}
In order to identify the behaviour of $\Phi$ as $Z\rightarrow0$ we write the
terms in brackets on the right-hand side of (\ref{L1E9f}) as:%
\begin{align*}
&  \left[  \frac{\left(  1-y_{0}^{2}\right)  ^{2}e^{2\left(  \Lambda
-2u\right)  }}{\left(  1-y_{0}^{2}e^{2\left(  \Lambda-u\right)  }\right)
^{2}}-1\right] \\
&  =\frac{1}{\left(  1-y_{0}^{2}e^{2\left(  \Lambda-u\right)  }\right)  ^{2}%
}\left[  \left(  1-y_{0}^{2}\right)  ^{2}\left(  e^{2\left(  \Lambda
-2u\right)  }-1\right)  +2\left(  1-y_{0}^{2}\right)  y_{0}^{2}\left(
e^{2\left(  \Lambda-u\right)  }-1\right)  \right. \\
&  \left.  -y_{0}^{4}\left(  e^{2\left(  \Lambda-u\right)  }-1\right)
^{2}\right]  .
\end{align*}
Then (\ref{L1E9f}) becomes:%
\begin{align}
&  \Phi=\frac{y_{0}^{2}\left(  1-y_{0}^{2}\right)  \left(  Q_{1}^{2}+Q_{2}%
^{2}\right)  }{\left(  1-y_{0}^{2}e^{2\left(  \Lambda-u\right)  }\right)
^{2}}\left[  \left(  1-y_{0}^{2}\right)  \left(  e^{2\left(  \Lambda
-2u\right)  }-1\right)  +2y_{0}^{2}\left(  e^{2\left(  \Lambda-u\right)
}-1\right)  \right. \nonumber\\
&  \left.  +\left(  1-y_{0}^{2}\right)  \left(  G-1\right)  e^{2\left(
\Lambda-2u\right)  }\right] \nonumber\\
&  +\frac{\left(  1-y_{0}^{2}\right)  \left(  G-1\right)  }{\left(
1-y_{0}^{2}e^{2\left(  \Lambda-u\right)  }\right)  ^{2}}\left[  Z^{2}\left(
Q_{1}^{2}+Q_{2}^{2}\right)  +2y_{0}\sqrt{1-y_{0}^{2}}Ze^{\Lambda-2u}\left(
Q_{1}^{2}-Q_{2}^{2}\right)  \right] \nonumber\\
&  -\frac{y_{0}^{6}\left(  Q_{1}^{2}+Q_{2}^{2}\right)  }{\left(  1-y_{0}%
^{2}e^{2\left(  \Lambda-u\right)  }\right)  ^{2}}\left(  e^{2\left(
\Lambda-u\right)  }-1\right)  ^{2}. \label{L1E9g}%
\end{align}
In order to simplify this formula we write, using (\ref{L1E6a}),
(\ref{L1E7a}):
\begin{align*}
&  \left[  \left(  1-y_{0}^{2}\right)  \left(  e^{2\left(  \Lambda-2u\right)
}-1\right)  +2y_{0}^{2}\left(  e^{2\left(  \Lambda-u\right)  }-1\right)
+\left(  1-y_{0}^{2}\right)  \left(  G-1\right)  e^{2\left(  \Lambda
-2u\right)  }\right] \\
&  =-\left(  1-y_{0}^{2}\right)  \left(  1-e^{-4u}\right)  +2y_{0}^{2}\left(
e^{2\left(  \Lambda-u\right)  }-1\right) \\
&  =-\left(  1-y_{0}^{2}\right)  \left(  1-\left(  \frac{G}{\left[
G+y_{0}^{2}\left(  1-G\right)  \right]  }\right)  ^{2}\right)  +2y_{0}%
^{2}\left(  \frac{1}{\left[  G+y_{0}^{2}\left(  1-G\right)  \right]
}-1\right) \\
&  +\left(  1-y_{0}^{2}\right)  \left(  2Z^{2}+Z^{4}\right)  \left(  \frac
{G}{\left[  G+y_{0}^{2}\left(  1-G\right)  \right]  }\right)  ^{2}+2y_{0}%
^{2}\left(  \frac{Z^{2}}{\left[  G+y_{0}^{2}\left(  1-G\right)  \right]
}\right),
\end{align*}%
\[
\left(  e^{2\left(  \Lambda-u\right)  }-1\right)  =\frac{Z^{2}+\left(
1-G\right)  \left(  1-y_{0}^{2}\right)  }{\left[  G+y_{0}^{2}\left(
1-G\right)  \right]  }.
\]
Plugging these formulas into (\ref{L1E9g}) we obtain, after some computations:%
\begin{align}
&  \frac{\Phi}{Z} =\frac{y_{0}^{2}\left(  1-y_{0}^{2}\right)  \left(
Q_{1}^{2}+Q_{2}^{2}\right)  }{\left(  1-y_{0}^{2}e^{2\left(  \Lambda-u\right)
}\right)  ^{2}}\left[  \left(  1-y_{0}^{2}\right)  \left(  2Z+Z^{3}\right)
\left(  \frac{G}{\left[  G+y_{0}^{2}\left(  1-G\right)  \right]  }\right)
^{2}\right. \nonumber\\
&  \left.  +2y_{0}^{2}\left(  \frac{Z}{\left[  G+y_{0}^{2}\left(  1-G\right)
\right]  }\right)  \right] \nonumber\\
&  +\frac{\left(  1-y_{0}^{2}\right)  \left(  G-1\right)  }{\left(
1-y_{0}^{2}e^{2\left(  \Lambda-u\right)  }\right)  ^{2}}\left[  Z\left(
Q_{1}^{2}+Q_{2}^{2}\right)  +2y_{0}\sqrt{1-y_{0}^{2}}e^{\Lambda-2u}\left(
Q_{1}^{2}-Q_{2}^{2}\right)  \right] \nonumber\\
&  -\frac{y_{0}^{6}\left(  Q_{1}^{2}+Q_{2}^{2}\right)  }{\left(  1-y_{0}%
^{2}e^{2\left(  \Lambda-u\right)  }\right)  ^{2}}\left[  \frac{2\left(
1-y_{0}^{2}\right)  Z\left(  1-G\right)  +Z^{3}}{\left[  G+y_{0}^{2}\left(
1-G\right)  \right]  ^{2}}\right]  . \label{L1E9h}%
\end{align}

\bigskip

Summarizing, we have transformed the original problem (\ref{S4E3}),
(\ref{S4E4}), (\ref{S7E2a}), (\ref{S7E7}) into the system of equations
(\ref{F1E1})-(\ref{F1E4}) with $\Delta$ as in (\ref{L1E8d}), $\frac{\Phi}{Z}$
as in (\ref{L1E9h}), $\zeta_{i}$ as in (\ref{L1E1}) and $\Lambda,u$ given by
(\ref{L1E6a}), (\ref{L1E7a}). The initial data for $\left(  Q_{1}%
,Q_{2},G,Z\right)  $ are as in (\ref{L1E9e}).

Some of the forms that we have derived for the ODE problems above are more
convenient for describing the solutions in different regions of the phase
space. We will change freely between the different groups of equivalent
variables in the following.

\subsection{Local existence of the curves $\gamma_{1},\;\gamma_{2}.$}

\bigskip

With the reformulation of the problem obtained in the previous subsection the
existence of the curves $\gamma_{1},\;\gamma_{2}$ in a neighbourhood of the
point $\left(  y_{0},V_{0}\right)  $ can be obtained using standard ODE theory.

\bigskip

\begin{proposition}
\label{loctraj}For any $y_{0}\in\left(  0,1\right)  $ and any $\beta_{0}>0$
there exist $\delta>0$ and two curves $\gamma_{1},\;\gamma_{2}$ that can be
parametrized as%
\begin{equation}
\gamma_{i}=\left\{  \left(  y,V\right)  :y_{0}<y<y_{0}+\delta\;,\;V=V_{i}%
\left(  y\right)  \right\}  \;\;,\;\;i=1,2 \label{J1E1}%
\end{equation}
with the functions $V_{1}\left(  y\right)  ,\;V_{2}\left(  y\right)  $ as in
(\ref{S7E3}), (\ref{S7E4}) satisfying (\ref{S7E1}), (\ref{S7E2}), (\ref{S7E6})
the functions $U,\;\Lambda$ satisfying (\ref{S4E3}), (\ref{S4E4}) and
(\ref{S7E6a}), (\ref{S7E6b}) with $\tilde{\rho},\;\tilde{p}$ as in
(\ref{S7E10a}), (\ref{S7E10b}) and $\sigma_{1},\;\sigma_{2}$ solving
(\ref{S7E7}), (\ref{S7E8}).
\end{proposition}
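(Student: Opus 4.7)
The plan is to exploit the reformulation carried out in Subsection~\ref{reformul}, which replaces the original system near $y=y_0$ (where the natural independent variable $s$ degenerates) by the autonomous system (\ref{F1E1})--(\ref{F1E4}) in the new independent variable $\chi$, together with the initial data (\ref{L1E9e}). The task then reduces to a standard local existence statement for a smooth ODE system near a regular point, plus a translation back to the variables $(y,V,U,\Lambda,\sigma_i)$.

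First I would check that the right-hand sides of (\ref{F1E1})--(\ref{F1E4}) are analytic in a neighbourhood of $(Q_1,Q_2,G,Z)=(1,1,1,0)$. The only potentially singular ingredient is the factor $\Phi/Z$ hidden in $\Delta$, but the identity (\ref{L1E9h}) exhibits $\Phi/Z$ explicitly as an analytic function of $(Q_1,Q_2,G,Z)$, all of its terms being proportional to $Z$, to $(G-1)$, or to $(Q_1^2-Q_2^2)$ after the algebraic manipulation performed in Subsection~\ref{reformul}. The functions $\zeta_1,\zeta_2$ given by (\ref{L1E1}) are likewise analytic on a neighbourhood of the initial point, because the denominator $1-y_0^2 e^{2(\Lambda-u)}$ equals $1-y_0^2>0$ there. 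The Picard--Lindel\"of theorem then yields a unique analytic solution $(Q_1(\chi),Q_2(\chi),G(\chi),Z(\chi))$ on some interval $[0,\chi_0)$.

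Next I would extract the key derivatives at $\chi=0$. At the initial point $\Delta=0$, since every term in (\ref{L1E8d}) contains a factor of $(1-G)$, $(Q_1^2-Q_2^2)$, $Z$, or $\Phi/Z$ (which vanishes by (\ref{L1E9h})), so (\ref{F1E4}) gives $dZ/d\chi|_{\chi=0}=3\cdot 1-1-0=2$. Hence $Z(\chi)=2\chi+O(\chi^{2})$, and in particular $Z>0$ for small $\chi>0$, justifying the choice of $\zeta_1,\zeta_2$ as the two distinct roots of (\ref{D1E6}). Integrating (\ref{L1E9aa}) gives $s(\chi)=2\int_0^\chi G(\chi')Z(\chi')d\chi'=2\chi^{2}+O(\chi^{3})$, so the map $\chi\mapsto s$ is an analytic diffeomorphism of a half-neighbourhood of $0$ onto another, with the inverse analytic in $\sqrt{s}$. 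Combining with $s=\log(y/y_0)\sim (y-y_0)/y_0$ we conclude that $\chi$ is an analytic function of $\sqrt{y-y_0}$ near $y_0$.

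Once this square-root structure is established, conditions (\ref{S7E6}), (\ref{S7E6a}), (\ref{S7E6b}) are read off directly. Using the $\mp Z$ splitting in (\ref{L1E1}) together with $V_i=\zeta_i/y$ gives
\[
V_i(y)-V_0=\frac{\mp Z(\chi)}{1-y_0^{2}}\cdot\frac{1}{y_0}+O(y-y_0),
\]
so dividing by $\sqrt{y-y_0}$ and using $\chi\sim\sqrt{(y-y_0)/(2y_0)}$ yields limits $K_1<K_2$ of opposite signs. The limits $\theta_1,\theta_2$ follow analogously from $\Lambda=-\tfrac{1}{2}\log G$ and $U=s+u$, where $u$ is obtained from (\ref{L1E7a}); both $G-1$ and $u$ are $O(\chi)=O(\sqrt{y-y_0})$. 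Finally, $\sigma_i$ is recovered from $\sigma_i=\log Q_i+s$ (using (\ref{D1E1})), which satisfies (\ref{S7E7}) and (\ref{S7E8}), and the densities $\tilde{\rho},\tilde{p}$ are then given by (\ref{S7E10a}), (\ref{S7E10b}). The only nontrivial step in this whole plan is the verification of regularity of the right-hand side at $Z=0$, and that step has been absorbed into the algebraic identity (\ref{L1E9h}) derived in the previous subsection; the remainder of the argument is a routine application of ODE theory.
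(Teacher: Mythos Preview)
Your proposal is correct and follows essentially the same approach as the paper: reduce to local existence for the analytic system (\ref{F1E1})--(\ref{F1E4}) at $(1,1,1,0)$, observe $Z\sim 2\chi$ and hence $s\sim 2\chi^2\sim (y-y_0)/y_0$, and read off the square-root asymptotics (\ref{S7E6})--(\ref{S7E6b}) from (\ref{L1E1}), (\ref{L1E6a}), (\ref{L1E7a}). Your version is simply more explicit about why $\Phi/Z$ is regular and about the form of the constants $K_i$.
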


\noindent

\begin{proof}
The arguments in Subsection \ref{reformul} show that the proposition follows
from proving local existence and uniqueness for (\ref{F1E1})-(\ref{F1E4}) with
initial data (\ref{L1E9e}). Since the right-hand side of (\ref{F1E1}%
)-(\ref{F1E4}) is analytic in a neighbourhood of $\left(  Q_{1},Q_{2}%
,G,Z\right)  =\left(  1,1,1,0\right)  $ it follows that there exists a unique
solution of (\ref{L1E9e}), (\ref{F1E1})-(\ref{F1E4}) on an interval of the
form $0<\chi<\delta_{0}$ for some $\delta_{0}>0.$ Moreover, for such a
solution $\Delta\rightarrow0$ as $\chi\rightarrow0^{+},$ whence $Z\sim2\chi$
as $\chi\rightarrow0^{+}.$ Therefore (\ref{L1E9aa}) yields:
\[
s\sim2\chi^{2}\;\;\text{as\ \ }\chi\rightarrow0^{+}\;\;,\;\;\chi\sim
\sqrt{\frac{s}{2}}\;\;\text{as\ \ }s\rightarrow0^{+},
\]%
\begin{equation}
Z\sim\sqrt{2s}\;\;\mathrm{as}\ \ s\rightarrow0^{+}. \label{J2E1}%
\end{equation}
Using (\ref{D1E1}) it follows that:%
\begin{equation}
s\sim\frac{y-y_{0}}{y_{0}}\;\;\text{as\ \ }y\rightarrow y_{0}^{+}.
\label{J2E2}%
\end{equation}
Combining then (\ref{D1E1}) and (\ref{L1E1}) we obtain (\ref{S7E6}). The
asymptotics (\ref{S7E6a}), (\ref{S7E6b}) follows from the asymptotics for
$G,\;Z$ in an analogous way.
\end{proof}

\bigskip

\noindent Moreover, we can prove Proposition \ref{Prop1} in a similar way.

\noindent

\begin{proof}
[Proof of Proposition \ref{Prop1}]It follows from (\ref{S5E3}), (\ref{D1E1}),
(\ref{J2E1}), (\ref{J2E2}).
\end{proof}

\bigskip

\noindent We notice for further reference that we have also proved the
following result:

\bigskip

\begin{proposition}
\label{local}There exists a unique solution of the system (\ref{D1E7}%
)-(\ref{D1E9}) with $\zeta_{i}$ as in (\ref{D1E6}) and initial data $\left(
Q_{1},Q_{2},\Lambda,u\right)  =\left(  1,1,0,0\right)  $ as $s\rightarrow
0^{+}.$
\end{proposition}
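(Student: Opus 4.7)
The proof of Proposition \ref{local} is essentially a translation of the local existence established for the $\chi$-system in Proposition \ref{loctraj} back to the original $s$-variables. My plan is as follows.

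For existence, I would invoke the unique analytic solution of the system (\ref{F1E1})--(\ref{F1E4}) with initial data (\ref{L1E9e}) that was constructed in the proof of Proposition \ref{loctraj}. The key point is that the manipulations leading to (\ref{L1E8d}) and (\ref{L1E9h}) make the right-hand side of (\ref{F1E4}) analytic at $(Q_1,Q_2,G,Z)=(1,1,1,0)$ even though the denominators in (\ref{D1E8})--(\ref{D1E9}) naively blow up there. Standard Cauchy--Lipschitz theory therefore supplies a unique solution on some interval $[0,\delta_0)$ in $\chi$. From $Z \sim 2\chi$ and $G\sim 1$ as $\chi\to 0^+$, the map $s(\chi)=\int_0^\chi 2G(\chi')Z(\chi')\,d\chi'$ is a strictly increasing homeomorphism from $[0,\delta_0)$ onto some interval $[0,\delta_1)$, smooth away from $\chi=0$. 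Inverting this map and recovering $\Lambda=-\tfrac{1}{2}\log G$ and $u$ from (\ref{L1E7a}), I would define $(Q_1(s),Q_2(s),\Lambda(s),u(s))$ and verify directly, using the chain rule and the relation (\ref{L1E9aa}), that this yields a solution of (\ref{D1E7})--(\ref{D1E9}) with $\zeta_i$ as in (\ref{D1E6}) satisfying $(Q_1,Q_2,\Lambda,u)\to(1,1,0,0)$ as $s\to 0^+$.

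The main obstacle is uniqueness, because the system (\ref{D1E7})--(\ref{D1E9}) is genuinely singular at $s=0$: the denominators $y_0 e^{\Lambda}\sqrt{\zeta_i^2+1}+\zeta_i e^u$ vanish there, so classical Cauchy--Lipschitz does not apply. To handle this I would take any solution of the $s$-system matching the prescribed limit and first pin down its leading-order behaviour near $s=0$. From (\ref{D1E6}) one sees that $\zeta_i=O(\sqrt{s})$, and the requirement that the right-hand sides of (\ref{L1E5}) and (\ref{L1E6}) stay integrable forces $\Lambda,u=O(\sqrt{s})$ and $Z\sim c\sqrt{s}$ with a definite constant $c$, matching (\ref{J2E1}). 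A bootstrap using the structure of (\ref{L1E8d}) and (\ref{L1E9h}) shows that this behaviour is uniquely determined by the initial data.

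Once this leading behaviour is in hand, the integral $\chi(s)=\int_0^s \frac{ds'}{2G(s')Z(s')}$ is well-defined and finite, and reversing the change of variables I would produce functions of $\chi$ that satisfy the analytic system (\ref{F1E1})--(\ref{F1E4}) with initial data (\ref{L1E9e}). Uniqueness for the analytic system, already used in the existence step, then transfers back and gives uniqueness of the solution of the $s$-system. This closes the argument.
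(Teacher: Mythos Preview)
Your proposal is correct and takes essentially the same approach as the paper: the paper simply remarks that Proposition \ref{local} has already been established in the course of proving Proposition \ref{loctraj}, via the desingularizing change of variables (\ref{L1E9aa}) to the analytic $\chi$-system (\ref{F1E1})--(\ref{F1E4}), for which classical Cauchy--Lipschitz applies at $(1,1,1,0)$. You spell out more than the paper does on the uniqueness side---the paper treats the equivalence between the $s$- and $\chi$-formulations as evident from Subsection \ref{reformul} and does not separately discuss how an arbitrary $s$-solution with the prescribed limit must have $Z\sim\sqrt{2s}$ so that the inverse change of variables is well defined---but the underlying strategy is identical.
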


\subsection{Steady states for the system (\ref{F1E1})-(\ref{F1E4}).}

In order to study the steady states of (\ref{F1E1})-(\ref{F1E4}) it is more
convenient to use the form of the equations in (\ref{D1E6})-(\ref{D1E9}). Then
the steady states are characterized by:%
\begin{align}
Q_{i}\zeta_{i}  &  =0\;\;i=1,2,\label{Z1E1}\\
-e^{-2\Lambda}+1  &  =\frac{\theta}{2}\left[  \frac{Q_{1}^{2}}{\left|
\zeta_{1}e^{u}+y_{0}e^{\Lambda}\sqrt{\zeta_{1}^{2}+1}\right|  }\left[
\zeta_{1}^{2}+1\right]  \right. \nonumber\\
&  \left.  +\frac{Q_{2}^{2}}{\left|  \zeta_{2}e^{u}+y_{0}e^{\Lambda}%
\sqrt{\zeta_{2}^{2}+1}\right|  }\left[  \left(  \zeta_{2}\right)
^{2}+1\right]  \right]  ,\label{Z1E2}\\
3e^{-2\Lambda}-1  &  =\frac{\theta}{2}\left[  \frac{Q_{1}^{2}\left(  \zeta
_{1}\right)  ^{2}}{\left|  \zeta_{1}e^{u}+y_{0}e^{\Lambda}\sqrt{\zeta_{1}%
^{2}+1}\right|  }+\frac{Q_{2}^{2}\left(  \zeta_{2}\right)  ^{2}}{\left|
\zeta_{2}e^{u}+y_{0}e^{\Lambda}\sqrt{\zeta_{2}^{2}+1}\right|  }\right]  .
\label{Z1E3}%
\end{align}
The first and third equations imply:%
\begin{equation}
3e^{-2\Lambda}-1=0. \label{Z1E4}%
\end{equation}
Then, the second equation reduces to:%
\begin{equation}
\frac{2}{3}=\frac{\theta}{2}\left[  \frac{Q_{1}^{2}}{\left|  \zeta_{1}%
e^{u}+y_{0}e^{\Lambda}\sqrt{\zeta_{1}^{2}+1}\right|  }+\frac{Q_{2}^{2}%
}{\left|  \zeta_{2}e^{u}+y_{0}e^{\Lambda}\sqrt{\zeta_{2}^{2}+1}\right|
}\right]  . \label{Z1E5}%
\end{equation}
Notice that (\ref{Z1E5}) implies that at least one of the variables
$Q_{1},\;Q_{2}$ is different from zero at the steady state. Suppose that both
of them are different from zero. Then $\zeta_{1}=\zeta_{2}=0,$ whence, using%
\[
e^{u}\sqrt{\zeta_{i}^{2}+1}+y_{0}\zeta_{i}e^{\Lambda}=\sqrt{1-y_{0}^{2}%
}\;\;,\;\;i=1,2\;\;,\;\;\zeta_{1}\leq\zeta_{2}%
\]
it follows that:%
\begin{equation}
e^{u}=\sqrt{1-y_{0}^{2}} \label{Z1E6}%
\end{equation}
and (\ref{Z1E5}) reduces to:%
\[
\left(  Q_{1}^{2}+Q_{2}^{2}\right)  =\frac{4y_{0}e^{\Lambda}}{3\theta}%
=\frac{4y_{0}\sqrt{3}}{3\theta}.
\]

This defines a family of steady states. Local analysis near these solutions
indicates that they are reached for finite values of $y.$ Since we are
interested in solutions defined for arbitrarily large values of $y>y_{0}$ a
more detailed analysis of these solutions will not be pursued here. We will
then restrict our analysis to the solutions for which $Q_{1}Q_{2}=0$.

Suppose that $Q_{1}\neq0$. Then $\zeta_{1}=0$. (\ref{Z1E6}) implies:
\begin{align*}
\sqrt{\zeta_{2}^{2}+1}+\frac{y_{0}\zeta_{2}e^{\Lambda}}{\sqrt{1-y_{0}^{2}}}
&  =1,\\
\zeta_{2}  &  =\frac{\sqrt{1-y_{0}^{2}}}{y_{0}e^{\Lambda}}\left[
1-\sqrt{\zeta_{2}^{2}+1}\right]  <0.
\end{align*}
This contradicts $\zeta_{1}\leq\zeta_{2}.$ Therefore for solutions with
$Q_{1}Q_{2}=0$ we must have $Q_{2}\neq0$ whence $\zeta_{2}=0.$ Then
(\ref{Z1E6}) is satisfied and (\ref{Z1E5}) yields:%
\[
Q_{2}=\sqrt{\frac{4\sqrt{3}}{3\theta}y_{0}}=\frac{2\sqrt{y_{0}}}{3^{\frac
{1}{4}}\sqrt{\theta}}.
\]
We remark that for this solution:%
\[
\zeta_{1}=-\frac{2he^{\Lambda_{\infty}}}{\left(  h^{2}-e^{2\Lambda_{\infty}%
}\right)  }=-\frac{2y_{0}\sqrt{3\left(  1-y_{0}^{2}\right)  }}{1-4y_{0}^{2}}.
\]
In order to have $\zeta_{1}<\zeta_{2}=0$ we need $y_{0}\in\left(  0,\frac
{1}{2}\right)  .$

Summarizing, for each $y_{0}\in\left(  0,\frac{1}{2}\right)  $ the system
(\ref{F1E1})-(\ref{F1E4}) has the following steady state:%
\begin{align}
Q_{1}  &  =Q_{1,\infty}=0,\label{Z1E7a}\\
Q_{2}  &  =Q_{2,\infty}=\frac{2\sqrt{y_{0}}}{3^{\frac{1}{4}}\sqrt{\theta}%
,}\label{Z1E7b}\\
\Lambda &  =\Lambda_{\infty}=\frac{\log\left(  3\right)  }{2},\label{Z1E7c}\\
u  &  =u_{\infty}=\log\left(  \sqrt{1-y_{0}^{2}}\right)  . \label{Z1E7d}%
\end{align}
We also introduce the following notation for further reference:%

\begin{align}
\zeta_{1,\infty}  &  =-\frac{2he^{\Lambda_{\infty}}}{\left(  h^{2}%
-e^{2\Lambda_{\infty}}\right)  }=-\frac{2y_{0}\sqrt{3\left(  1-y_{0}%
^{2}\right)  }}{1-4y_{0}^{2}},\label{Z1E8a}\\
\zeta_{2,\infty}  &  =0. \label{Z1E8b}%
\end{align}

\subsection{Linearization near the equilibrium.}

\bigskip The main result that we prove in this subsection is the following:

\begin{theorem}
\label{linear}For each $y_{0}\in\left(  0,\frac{1}{2}\right)  $\thinspace the
point $P_{1}=\left(  Q_{1,\infty},Q_{2,\infty},\Lambda_{\infty},u_{\infty
}\right)  $ defined by (\ref{Z1E7a})-(\ref{Z1E7d}) is an unstable hyperbolic
point of the system (\ref{D1E6})-(\ref{D1E9}). The corresponding stable
manifold of the point $\left(  Q_{1,\infty},Q_{2,\infty},\Lambda_{\infty
},u_{\infty}\right)  $ that will be denoted by $\mathcal{M}_{\theta}$ is
three-dimensional and it is tangent at this point to the subspace generated by
the vectors%
\begin{equation}
\left\{  \left(
\begin{array}
[c]{c}%
1\\
0\\
0\\
0
\end{array}
\right)  ,\;\left(
\begin{array}
[c]{c}%
0\\
-\frac{\left(  1-y_{0}^{2}\right)  }{3^{\frac{5}{4}}y_{0}^{\frac{3}{2}}%
\sqrt{\theta}}\\
-\frac{2}{3}\\
1
\end{array}
\right)  ,\;\left(
\begin{array}
[c]{c}%
0\\
-\frac{2\sqrt{1-y_{0}^{2}}}{3^{\frac{5}{4}}y_{0}^{\frac{3}{2}}\sqrt{\theta}}\\
-\frac{\sqrt{1-y_{0}^{2}}}{3y_{0}}\\
1
\end{array}
\right)  \right\}  .\label{LS1}%
\end{equation}

\end{theorem}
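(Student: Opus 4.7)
The approach is the standard one for hyperbolic fixed points: compute the Jacobian of the reduced ODE system at $P_1$, determine its spectrum, and then invoke the stable manifold theorem. The payoff of the algebraic work in Subsection \ref{reformul} is that (D1E6) lets us regard $\zeta_1,\zeta_2$ as implicit functions of $(u,\Lambda)$, so the genuine unknowns at the linear level are $(\delta Q_1,\delta Q_2,\delta\Lambda,\delta u)$.

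First I would compute $\partial\zeta_i/\partial u$ and $\partial\zeta_i/\partial\Lambda$ at $P_1$ by implicit differentiation of $F(\zeta_i,u,\Lambda)\equiv e^u\sqrt{\zeta_i^2+1}+y_0\zeta_i e^{\Lambda}-\sqrt{1-y_0^2}=0$. At the branch $\zeta_{2,\infty}=0$ the $\Lambda$-derivative of $F$ vanishes, yielding the clean relation $\delta\zeta_2=-\dfrac{\sqrt{1-y_0^2}}{\sqrt{3}\,y_0}\,\delta u$; for the branch $\zeta_{1,\infty}=-\dfrac{2y_0\sqrt{3(1-y_0^2)}}{1-4y_0^2}$ both derivatives are nonzero and give linear expressions for $\delta\zeta_1$ in $(\delta u,\delta\Lambda)$. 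Using $\sqrt{\zeta_{1,\infty}^2+1}=\dfrac{1+2y_0^2}{1-4y_0^2}$ and the identity $y_0e^{\Lambda_\infty}\sqrt{\zeta_{1,\infty}^2+1}+\zeta_{1,\infty}e^{u_\infty}=-y_0\sqrt{3}$ (valid for $y_0\in(0,\tfrac12)$), one checks that every denominator appearing in (D1E7)--(D1E9) is nonzero at $P_1$, so the linearization is well-defined.

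The key structural observation is that $\delta Q_1$ decouples. Because $Q_{1,\infty}=0$, every entry of the Jacobian row for $Q_1$ except the diagonal vanishes identically, and a direct computation of $dQ_1/ds$ linearized gives
\begin{equation*}
\frac{d(\delta Q_1)}{ds}=-\frac{2(1-y_0^2)}{1-4y_0^2}\,\delta Q_1,
\end{equation*}
which is strictly negative on $(0,\tfrac12)$. This produces the first stable direction $(1,0,0,0)^T$ in (LS1). The remaining work is to analyze the $3\times 3$ Jacobian $A$ acting on $(\delta Q_2,\delta\Lambda,\delta u)$. Plugging $\zeta_{2,\infty}=0$, $Q_{2,\infty}=2\sqrt{y_0}/(3^{1/4}\sqrt{\theta})$, $e^{\Lambda_\infty}=\sqrt{3}$, $e^{u_\infty}=\sqrt{1-y_0^2}$, together with the linear expression for $\delta\zeta_2$ above, into the derivatives of (D1E7)$_{i=2}$, (D1E8), (D1E9), yields an explicit $3\times 3$ matrix whose entries are elementary in $y_0$.

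I then compute the characteristic polynomial of $A$ and show it has exactly one positive root and two negative real roots for $y_0\in(0,\tfrac12)$; the two stable eigenvectors, solved for directly from $(A-\mu I)v=0$, should come out proportional to the second and third columns in (LS1), the scaling $-(1-y_0^2)/(3^{5/4}y_0^{3/2}\sqrt{\theta})$ and $-2\sqrt{1-y_0^2}/(3^{5/4}y_0^{3/2}\sqrt{\theta})$ in the $\delta Q_2$-slot reflecting precisely the factor $Q_{2,\infty}$ that multiplies the $(\delta u,\delta\Lambda)$-terms coming through $\delta\zeta_2$. Combined with the $\delta Q_1$ direction, this gives a three-dimensional stable subspace and a one-dimensional unstable subspace, so $P_1$ is hyperbolic and unstable. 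The conclusion on $\mathcal{M}_\theta$ then follows from the Hartman--Grobman / stable manifold theorem applied to the analytic right-hand side of (D1E6)--(D1E9).

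The main obstacle is purely computational: carrying out the $3\times 3$ eigenvalue/eigenvector calculation cleanly enough to match the precise constants in (LS1). The risk is that algebraic cancellations (such as $y_0e^{\Lambda_\infty}\sqrt{\zeta_{1,\infty}^2+1}+\zeta_{1,\infty}e^{u_\infty}=-y_0\sqrt{3}$ above) have to be spotted to keep the formulas tractable; without them the characteristic polynomial looks forbidding but, once those identities are used, it factors over $\mathbb{R}$ with signs that depend only on elementary inequalities in $y_0\in(0,\tfrac12)$.
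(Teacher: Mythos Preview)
Your approach is correct and essentially identical to the paper's: linearize (\ref{D1E6})--(\ref{D1E9}) at $P_1$, observe that the $\delta Q_1$-equation decouples (yielding the eigenvalue $-\tfrac{2(1-y_0^2)}{1-4y_0^2}$ and eigenvector $(1,0,0,0)^T$), analyze the remaining $3\times 3$ block in $(\delta Q_2,\delta\Lambda,\delta u)$, and then invoke the stable manifold theorem. The paper simply carries the ``tedious but mechanical'' linearization through to the explicit eigenvalues $\gamma_2=-2$, $\gamma_{3,4}=\mp\tfrac{\sqrt{1-y_0^2}}{y_0}$ and their eigenvectors rather than leaving the $3\times 3$ spectral computation as a plan, but the structure, the decoupling observation, and the algebraic identities you single out are exactly what is used.
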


\noindent

\begin{proof}
The key ingredient in the proof of this theorem is the linearization of the
system (\ref{D1E6})-(\ref{D1E9}) around the point $\left(  Q_{1,\infty
},Q_{2,\infty},\Lambda_{\infty},u_{\infty}\right)  .$ Let us write:%
\begin{align*}
\Lambda &  =\Lambda_{\infty}+L,\\
u  &  =u_{\infty}+\nu,\\
Q_{1}  &  =Q_{1,\infty}+q_{1}=q_{1},\\
Q_{2}  &  =Q_{2,\infty}+q_{2}.
\end{align*}
Neglecting terms quadratic in $\left\vert L\right\vert +\left\vert
\nu\right\vert +\left\vert q_{1}\right\vert +\left\vert q_{2}\right\vert $ we
obtain, after some tedious, but mechanical computations, the following
linearized problem:%
\begin{align}
\frac{dq_{1}}{ds}  &  =-\frac{2h^{2}}{\left(  h^{2}-3\right)  }q_{1}%
=-\frac{2\left(  1-y_{0}^{2}\right)  }{\left(  1-4y_{0}^{2}\right)  }%
q_{1},\label{X1E1}\\
\frac{dq_{2}}{ds}  &  =\frac{2\left(  1-y_{0}^{2}\right)  }{3^{\frac{5}{4}%
}\sqrt{\theta}y_{0}^{\frac{3}{2}}}\nu,\label{X1E2}\\
L_{s}  &  =\frac{3^{\frac{1}{4}}\sqrt{\theta}}{\sqrt{y_{0}}}q_{2}%
+\frac{\left(  1-y_{0}^{2}\right)  }{3y_{0}^{2}}\nu-2L,\label{X1E3}\\
\nu_{s}  &  =3L. \label{X1E4}%
\end{align}
Looking for solutions of the linearized problem with the form:%
\[
e^{\gamma s}\left(
\begin{array}
[c]{c}%
A_{1}\\
A_{2}\\
A_{3}\\
A_{4}%
\end{array}
\right)
\]
we obtain the following possible values of $\gamma$ with their corresponding
eigenvectors:%
\[
\gamma_{1}=-\frac{2\left(  1-y_{0}^{2}\right)  }{\left(  1-4y_{0}^{2}\right)
}\;\leftrightarrow\left(
\begin{array}
[c]{c}%
A_{1}\\
A_{2}\\
A_{3}\\
A_{4}%
\end{array}
\right)  =\left(
\begin{array}
[c]{c}%
1\\
0\\
0\\
0
\end{array}
\right)  ,
\]%
\[
\gamma_{2}=-2\;\leftrightarrow\left(
\begin{array}
[c]{c}%
A_{1}\\
A_{2}\\
A_{3}\\
A_{4}%
\end{array}
\right)  =\left(
\begin{array}
[c]{c}%
0\\
-\frac{\left(  1-y_{0}^{2}\right)  }{3^{\frac{5}{4}}y_{0}^{\frac{3}{2}}%
\sqrt{\theta}}\\
-\frac{2}{3}\\
1
\end{array}
\right)  ,
\]%
\[
\gamma_{3}=-\frac{\sqrt{\left(  1-y_{0}^{2}\right)  }}{y_{0}}\;\leftrightarrow
\left(
\begin{array}
[c]{c}%
A_{1}\\
A_{2}\\
A_{3}\\
A_{4}%
\end{array}
\right)  =\left(
\begin{array}
[c]{c}%
0\\
-\frac{2\sqrt{1-y_{0}^{2}}}{3^{\frac{5}{4}}y_{0}^{\frac{3}{2}}\sqrt{\theta}}\\
-\frac{\sqrt{1-y_{0}^{2}}}{3y_{0}}\\
1
\end{array}
\right)  ,
\]%
\[
\gamma_{4}=\frac{\sqrt{\left(  1-y_{0}^{2}\right)  }}{y_{0}}\;\leftrightarrow
\left(
\begin{array}
[c]{c}%
0\\
\frac{2\sqrt{1-y_{0}^{2}}}{3^{\frac{5}{4}}y_{0}^{\frac{3}{2}}\sqrt{\theta}}\\
\frac{\sqrt{1-y_{0}^{2}}}{3y_{0}}\\
1
\end{array}
\right)  .
\]
The theorem then follows from standard results for stable manifolds (cf. for
instance \cite{CLW}, \cite{P}).
\end{proof}

\subsection{Reformulation of the solution in the original
variables.\label{refTraj}}

\bigskip

Our goal now is to obtain a trajectory connecting the point $\left(
Q_{1},Q_{2},\Lambda,u\right)  =\left(  1,1,0,0\right)  $ at $s=0$ with the
point $P_{1}$ at $s=\infty$ for a suitable value of $\theta$ (or equivalently
$\beta_{0}$). Let us remark that such a trajectory would satisfy the
requirements in Theorem \ref{Th1}. Indeed, notice that such a trajectory
behaves near the point $\left(  y_{0},V_{0}\right)  $ as stated in Theorem
\ref{Th1} due to Proposition \ref{loctraj}. On the other hand, such a
trajectory would belong to the stable manifold of the point $P_{1}$ and
therefore its asymptotic behaviour as $s\rightarrow\infty$ would be given by:%
\[
\left(
\begin{array}
[c]{c}%
Q_{1}\\
Q_{2}\\
\Lambda\\
u
\end{array}
\right)  \sim\left(
\begin{array}
[c]{c}%
Q_{1,\infty}\\
Q_{2,\infty}\\
\Lambda_{\infty}\\
u_{\infty}%
\end{array}
\right)  +C_{1}e^{-2s}\left(
\begin{array}
[c]{c}%
0\\
\frac{\left(  1-y_{0}^{2}\right)  }{3^{\frac{5}{4}}\sqrt{\theta}y_{0}%
^{\frac{3}{2}}}\\
\frac{2}{3}\\
-1
\end{array}
\right)  +C_{2}e^{-\frac{2\left(  1-y_{0}^{2}\right)  }{\left(  1-4y_{0}%
^{2}\right)  }s}\left(
\begin{array}
[c]{c}%
1\\
0\\
0\\
0
\end{array}
\right)  +...
\]
for sufficiently small $y_{0}$ (cf. \cite{CLW}). Notice that the smallness of
$y_{0}$ guarantees that the last term yields a contribution larger for
$s\rightarrow\infty$ than the first quadratic corrections if $C_{2}\neq0.$

Using (\ref{D1E1}) we obtain the following asymptotics for the original set of
variables $U,\;\Lambda,\;\sigma_{i},\;V_{i},$ $i=1,2:$%

\begin{align*}
U  &  =\log\left(  \frac{y}{y_{0}}\right)  +u\sim\log\left(  \frac{y}{y_{0}%
}\right)  +\log\left(  \sqrt{1-y_{0}^{2}}\right)  +o\left(  1\right)
\;\text{as\ \ }y\rightarrow\infty,\\
\Lambda &  \rightarrow\log\left(  \sqrt{3}\right)  \;\text{as\ \ }%
y\rightarrow\infty,\\
e^{\sigma_{1}}  &  \sim C_{2}\left(  \frac{y}{y_{0}}\right)  ^{-\frac
{1+2y_{0}^{2}}{\left(  1-4y_{0}^{2}\right)  }}\;\text{as\ \ }y\rightarrow
\infty,\\
e^{\sigma_{2}}  &  \sim Q_{2,\infty}\left(  \frac{y}{y_{0}}\right)
\;\text{as\ \ }y\rightarrow\infty,\\
V_{1}  &  \sim\frac{\zeta_{1,\infty}}{y}=-\frac{2y_{0}\sqrt{3\left(
1-y_{0}^{2}\right)  }}{\left(  1-4y_{0}^{2}\right)  y}\;\;\text{as\ \ }%
y\rightarrow\infty,\\
V_{2}  &  \sim-\frac{\sqrt{1-y_{0}^{2}}}{\sqrt{3}y_{0}}\frac{C_{1}}{y}\left(
\frac{y_{0}}{y}\right)  ^{2}\;\;\text{as\ \ }y\rightarrow\infty.
\end{align*}
in particular these formulas prove Theorem \ref{Th3}.

\bigskip

\bigskip

\subsection{The shooting argument: Approximation of the stable manifold
$\mathcal{M}_{\theta}$ for small $y_{0}.$}

\label{ManiApp}

\bigskip

Since the stable manifold $\mathcal{M}_{\theta}$ is three-dimensional we
cannot expect the point $\left(  Q_{1},Q_{2},\Lambda,u\right)  =\left(
1,1,0,0\right)  $ to belong to $\mathcal{M}_{\theta}$ for generic values of
$\theta.$ The intuitive idea of the proof which follows is to show that the 
manifold $\mathcal{M}_{\theta}$ divides the set $\left\{  0<G<1,\;Z>0,\;Q_{i}%
>0\;,\;i=1,2\right\}  $ into two different regions. If the point $\left(
1,1,0,0\right)  $ lies on different sides of $\mathcal{M}_{\theta}$ for
different values of $\theta$ then by continuity there must exist a value
$\theta^{\ast}$ of $\theta$ such that $\left(  1,1,0,0\right)  \in
\mathcal{M}_{\theta}.$ In the rest of the paper we will obtain approximations
to the manifold $\mathcal{M}_{\theta}$ for $y_{0}$ small that will show that
the point $\left(  1,1,0,0\right)  $ lies on different sides of $\mathcal{M}%
_{\theta}$ for large positive values of $\theta$ and small positive values of
$\theta.$ More precisely, the main result of this subsection is the following:

\begin{theorem}
\label{shootingTheorem} There exists $\bar{y}_{0}$ small enough such that, for
any $y_{0}$ in the interval $[0,\bar y_{0}]$ there exists $\theta^{\ast
}=\theta^{\ast}\left(  y_{0}\right)  >0$ such that $\left(  1,1,0,0\right)
\in\mathcal{M}_{\theta^{\ast}}.$
\end{theorem}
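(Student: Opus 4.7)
The plan is to set up a one-parameter shooting argument in $\theta$, for each fixed $y_{0}\in(0,\bar y_{0})$ small. By Proposition~\ref{local}, each $\theta>0$ gives a unique local solution $\xi(s;\theta)=(Q_{1},Q_{2},\Lambda,u)(s;\theta)$ with initial data $\xi(0^{+};\theta)=(1,1,0,0)$, and analyticity of the right-hand side of (D1E6)--(D1E9) extends this trajectory as long as $Z>0$ and the denominators in (L1E3)--(L1E6) stay bounded away from zero. I want to choose $\theta^{*}$ so that $\xi(s;\theta^{*})\to P_{1}$ as $s\to\infty$, which by Theorem~\ref{linear} is equivalent to having the trajectory lie in the three-dimensional stable manifold $\mathcal{M}_{\theta^{*}}$ of $P_{1}$. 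Once the trajectory enters a small enough neighbourhood of $P_{1}$, one can project it onto the one-dimensional unstable eigendirection (with eigenvalue $\gamma_{4}=\sqrt{1-y_{0}^{2}}/y_{0}$) provided by the same linearization, and the vanishing of this projection characterizes membership in $\mathcal{M}_{\theta}$.

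Accordingly I would define a shooting functional $\Psi(\theta)$ as that unstable-direction component evaluated at some sufficiently large $s_{0}=s_{0}(y_{0})$, with the intention of applying the intermediate value theorem to $\Psi$. The key analytic task is then to exhibit $\theta_{-}<\theta_{+}$ with $\Psi(\theta_{-})\,\Psi(\theta_{+})<0$. This is where the small-$y_{0}$ asymptotic reduction advertised in the introduction enters. In the limit $y_{0}\to 0$, the system (F1E1)--(F1E4) together with the algebraic expressions (L1E1), (L1E7a), (L1E9h) degenerates in a controlled way: the $Q_{1}$-equation partially decouples from the triple $(Q_{2},G,Z)$, and $(Q_{2},G,Z)$ itself satisfies a three-dimensional limiting system whose orbit structure can be analysed semi-explicitly using the Hamiltonian $H$ of (S5E3) and the conserved-type identities of Section~\ref{SelfSimD}. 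On the invariant set $\{H=h\}$ this should admit a further two-dimensional reduction that can be understood almost by hand.

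Working on these reduced systems I expect to identify two regimes. For $\theta$ very small the matter contribution on the right-hand sides of (D1E8)--(D1E9) is too weak to drive $\Lambda$ up to its equilibrium value $\Lambda_{\infty}=\log\sqrt{3}$, so the trajectory falls short of $P_{1}$ and $\Psi(\theta)$ has a definite sign. For $\theta$ sufficiently large the matter contribution overshoots, the trajectory is pushed through $\mathcal{M}_{\theta}$ on the other side, and $\Psi(\theta)$ acquires the opposite sign. Continuity of $\theta\mapsto\Psi(\theta)$ follows from smooth dependence on parameters for the analytic system, and the intermediate value theorem then yields the desired $\theta^{*}=\theta^{*}(y_{0})$ with $\Psi(\theta^{*})=0$, i.e.\ $(1,1,0,0)\in\mathcal{M}_{\theta^{*}}$.

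The main obstacle is to make this sign analysis rigorous and uniform in $y_{0}$. Two points require particular care. First, one must guarantee that $\xi(\cdot;\theta)$ exists on an interval $[0,s_{0}]$ long enough for $\Psi$ to be well defined, ruling out finite-$s$ breakdowns such as $G\to 0$ (geometrically, trapped-sphere formation) that would destroy the shooting setup. Second, one must quantify the error between the full four-dimensional trajectory and its $y_{0}\to 0$ limiting counterpart, uniformly in $\theta$ across the two sign regimes; this uniform matching is precisely what justifies transferring the limiting sign information back to positive $y_{0}$ and is where the bulk of the remaining analytical work in the paper presumably lies.
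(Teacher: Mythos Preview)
Your outline has the right spirit---shoot in $\theta$, analyse the $y_{0}\to 0$ limit, and perturb---but the implementation diverges from the paper in two substantive ways, and one of them is a genuine gap.

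The gap is in the definition of your shooting functional $\Psi(\theta)$. You propose to evaluate the unstable-eigendirection component of $\xi(s_{0};\theta)$ at some large $s_{0}$, but this is only meaningful if the trajectory has entered a neighbourhood of $P_{1}$ by time $s_{0}$. For the values of $\theta$ that you need in order to get a sign change, the trajectory typically does \emph{not} approach $P_{1}$: it blows up with $Z\to\pm\infty$ at or before finite $s$. So $\Psi$ is undefined precisely on the $\theta$-intervals you want to use for the intermediate value theorem. The paper sidesteps this by not defining a scalar functional at all: it classifies forward trajectories from the initial point by whether $Z\to+\infty$ or $Z\to-\infty$ (the sets $U_{1}$, $U_{2}$), shows both sets are open and nonempty, and uses connectedness of the parameter interval to produce a value belonging to neither, which is then shown (Lemma~\ref{LeT1}) to lie on the stable manifold. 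This topological shooting needs no projection near $P_{1}$ and is robust against finite-time blow-up.

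The second difference is a pair of simplifications you miss that make the $y_{0}=0$ analysis tractable. At $y_{0}=0$ the system (\ref{T1})--(\ref{T4}) is invariant under rotations in the $(Q_{1},Q_{2})$-plane, so one passes to the three-dimensional system (\ref{Q5a})--(\ref{Q5c}) in $Q=\sqrt{\tfrac{1}{2}(Q_{1}^{2}+Q_{2}^{2})}$. More importantly, the rescaling $Q=q/\sqrt{\theta}$ eliminates $\theta$ entirely, giving the single $\theta$-independent system (\ref{R1E2})--(\ref{R1E4}) with centre-stable manifold $\widetilde{\mathcal{N}}$. The whole shooting then reduces to showing that the line $\{G=1,\,Z=0\}$ meets $\widetilde{\mathcal{N}}$ at some $q_{0}^{\ast}>0$, and $\theta^{\ast}$ is recovered as $(q_{0}^{\ast})^{2}$. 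This rescaling also gives for free the transversality (``nonzero velocity'') needed to invoke the implicit function theorem and push the result to small $y_{0}>0$, replacing the uniform-in-$y_{0}$ error estimates you anticipate having to prove.
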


\noindent

\begin{proof}
In order to prove Theorem \ref{shootingTheorem} it is convenient to use the
coordinates $\left(  Q_{1},Q_{2},G,Z\right)  $ (cf. (\ref{L1E2}),
(\ref{L1E6a})). These variables satisfy the system of equations (\ref{F1E1}%
)-(\ref{F1E4}). The steady state $P_{1}=P_{1}\left(  y_{0}\right)  $ is given
in these coordinates by:%
\begin{equation}
P_{1}=\left(  Q_{1,\infty},Q_{2,\infty},G_{\infty},Z_{\infty}\right)  =\left(
0,\frac{2\sqrt{y_{0}}}{3^{\frac{1}{4}}\sqrt{\theta}},\frac{1}{3},\sqrt
{\frac{3y_{0}^{2}}{\left(  1-y_{0}^{2}\right)  }}\right)  .\label{V1E1}%
\end{equation}
The point $P_{1}$ depends continuously on $y_{0}$ if $y_{0}\in\left[
0,\frac{1}{2}\right]  .$ If $y_{0}=0$ the system (\ref{F1E1})-(\ref{F1E4}) becomes:%

\begin{align}
\frac{dQ_{1}}{d\zeta} &  =-2GZQ_{1},\label{T1}\\
\frac{dQ_{2}}{d\zeta} &  =-2GZQ_{2},\label{T2}\\
\frac{dG}{d\zeta} &  =2G\left[  Z\left(  1-G\right)  -\frac{\theta\left[
Z^{2}+1\right]  ^{\frac{3}{2}}}{2}\left(  Q_{1}^{2}+Q_{2}^{2}\right)  \right]
,\label{T3}\\
\frac{dZ}{d\zeta} &  =\left(  3G-1-\frac{\theta e^{-u}}{2}Z\left(  Q_{1}%
^{2}+Q_{2}^{2}\right)  \right)  \left(  Z^{2}+1\right)  .\label{T4}%
\end{align}
Theorem \ref{linear} shows that the point $P_{1}\left(  y_{0}\right)  $ is
hyperbolic for $y_{0}\in\left(  0,\frac{1}{2}\right]  $ with a
three-dimensional stable manifold $\mathcal{M}_{\theta}=\mathcal{M}_{\theta
}\left(  y_{0}\right)  .$ On the other hand two of the eigenvalues associated
to the linearization around $P_{1}$ of the system (\ref{F1E1})-(\ref{F1E4})
degenerate for $y_{0}=0.$ More precisely, let us write $G=\frac{1}{3}+g.$
Since $P_{1}\left(  0\right)  =\left(  0,0,\frac{1}{3},0\right)  $ we obtain
the following linearization of (\ref{T1})-(\ref{T4}) near $P_{1}\left(
0\right)  $:
\[
\frac{dQ_{1}}{d\zeta}=0\ \ ,\ \ \frac{dQ_{2}}{d\zeta}=0\ \ ,\ \ \frac
{dG}{d\zeta}=\frac{4Z}{9}\ \ ,\ \ \frac{dZ}{d\zeta}=3g.
\]
The corresponding eigenvalues are $\left\{  0,0,-\frac{2\sqrt{3}}{3}%
,\frac{2\sqrt{3}}{3}\right\}  $ and the corresponding eigenvectors are
$\left\{  \left(
\begin{array}
[c]{c}%
1\\
0\\
0\\
0
\end{array}
\right)  ,\left(
\begin{array}
[c]{c}%
0\\
1\\
0\\
0
\end{array}
\right)  ,\left(
\begin{array}
[c]{c}%
0\\
0\\
-\frac{2\sqrt{3}}{9}\\
1
\end{array}
\right)  ,\left(
\begin{array}
[c]{c}%
0\\
0\\
\frac{2\sqrt{3}}{9}\\
1
\end{array}
\right)  \right\}  .$ Standard results (cf. \cite{CLW}) show the existence of
a centre-stable manifold that will be denoted by $\mathcal{M}_{\theta}\left(
0\right)  $ that is invariant under the flow defined by the system
(\ref{T1})-(\ref{T4}) and is tangent at $P_{1}\left(  0\right)  $ to the plane
spanned by $\left\{  \left(
\begin{array}
[c]{c}%
1\\
0\\
0\\
0
\end{array}
\right)  ,\left(
\begin{array}
[c]{c}%
0\\
1\\
0\\
0
\end{array}
\right)  ,\left(
\begin{array}
[c]{c}%
0\\
0\\
-\frac{2\sqrt{3}}{9}\\
1
\end{array}
\right)  \right\}  .$

Classical results (cf. \cite{CLW}) then show that it is possible to obtain a
continuously differentiable four-dimensional manifold $\mathcal{M}%
_{\theta,{\rm ext}}\subset\left[  0,\frac{1}{2}\right]  \times\mathbb{R}^{4},$ 
with
$\left(  y_{0},Q_{1},Q_{2},G,Z\right)  \in\mathcal{M}_{{\rm ext}}$ such that:%
\begin{equation}
\mathcal{M}_{\theta,{\rm ext}}\cap\left\{  y_{0}=b\right\}  =\mathcal{M}_{\theta
}\left(  b\right)  \label{T7}%
\end{equation}
for any $b\in\left(  0,\frac{1}{2}\right)  .$ Indeed, the manifold
$\mathcal{M}_{\theta,{\rm ext}}$ is any centre-stable manifold at the point 
$\left(
y_{0},Q_{1},Q_{2},G,Z\right)  =\left(  0,0,0,\frac{1}{3},0\right)  $
associated to the system (\ref{F1E1})-(\ref{F1E4}) complemented with the
additional equation
\begin{equation}
\frac{dy_{0}}{d\zeta}=0.\label{T4a}%
\end{equation}
More precisely, we make use of the fact that the dynamical system of interest
has a smooth extension to an open neighbourhood of the stationary point
under consideration. The manifold $\mathcal{M}_{\theta,{\rm ext}}$ is
the intersection of a centre-stable manifold for the extended system with the 
subset defined by the inequality $y_0\ge 0$.
The manifold $\mathcal{M}_{{\rm ext}}$ contains all the points of the form 
$\left(
y_{0},P_{1}\left(  y_{0}\right)  \right)  $ with $y_{0}\in\left[  0,\frac
{1}{2}\right]  $ since they remain in a neighbourhood of $\left(
0,0,0,\frac{1}{3},0\right)  $ for arbitrary times. Moreover, the manifolds
$\mathcal{M}_{\theta,{\rm ext}}\cap\left\{  y_{0}=b\right\}  $ are invariant under
the flow (\ref{F1E1})-(\ref{F1E4}) and since they are formed by points that
remain in a neighbourhood of $\left(  0,0,0,\frac{1}{3},0\right)  $ for
arbitrarily long times, it follows from (\ref{T4a}) that the points in
$\mathcal{M}_{\theta,{\rm ext}}\cap\left\{  y_{0}=b\right\}  $ are contained in 
the
stable manifold associated to the point $P_{1}\left(  y_{0}\right)  .$ The
uniqueness of the stable manifold then implies $\mathcal{M}_{\theta}\left(
b\right)  \subset\mathcal{M}_{\theta,{\rm ext}}\cap\left\{  y_{0}=b\right\}  .$
Moreover, the form of the tangent space to $\mathcal{M}_{\theta,{\rm ext}}$ at 
the
point $\left(  0,0,0,\frac{1}{3},0\right)  $ implies that the dimension of
$\mathcal{M}_{\theta,{\rm ext}}\cap\left\{  y_{0}=b\right\}  $ is three for small
$b.$ Since this is also the dimension of $\mathcal{M}_{\theta}\left(
b\right)  $ the relation (\ref{T7}) follows. The continuity of $\mathcal{M}%
_{{\rm ext}}$ then implies that the centre-stable manifold $\mathcal{M}_{\theta
}\left(  0\right)  $ can be uniquely obtained as limit of the manifolds
$\mathcal{M}_{\theta}\left(  y_{0}\right)  $ as $y_{0}\rightarrow0^{+}.$ In
particular the manifold $\mathcal{M}_{\theta}\left(  0\right)  $ is unique.

The properties of the manifold $\mathcal{M}_{\theta}\left(  0\right)  $ can be
analysed in more detail. We remark that the curve:%

\begin{equation}
\sqrt{\left(  Z^{2}+1\right)  }\sqrt{G}\left(  1-G\right)  =\frac{2}%
{3^{\frac{3}{2}}}\ \ ,\ \ Q_{1}=Q_{2}=0 \label{T6}%
\end{equation}
belongs to $\mathcal{M}_{\theta}\left(  0\right)  $ since the hyperplane
$\left\{  Q_{1}=Q_{2}=0\right\}  $ is invariant under the dynamics induced by
(\ref{T1})-(\ref{T4}). On the other hand, the invariance of (\ref{T1}%
)-(\ref{T4}) under rotations in the $\left(  Q_{1},Q_{2}\right)  $-plane
allows the problem to be reduced to one with smaller dimensionality. More
precisely, defining $Q=\sqrt{\frac12 (Q_{1}^{2}+Q_{2}^{2})}$ leads to the
system:
\begin{align}
\frac{dQ}{d\zeta}  &  =-2GZQ,\label{Q5a}\\
\frac{dG}{d\zeta}  &  =2G\left[  Z\left(  1-G\right)  -\theta\left[
Z^{2}+1\right]  ^{\frac{3}{2}}Q^{2}\right]  ,\label{Q5b}\\
\frac{dZ}{d\zeta}  &  =\left(  3G-1-\theta ZQ^{2}\sqrt{\left(  Z^{2}+1\right)
}\right)  \left(  Z^{2}+1\right)  . \label{Q5c}%
\end{align}
We will denote by $\mathcal{N}_{\theta}$ the (two-dimensional) invariant
manifold associated to the system (\ref{Q5a})-(\ref{Q5c}) that is obtained
from ${\mathcal{M}}_{\theta}$ by taking the quotient by rotations in the
$Q_{i}$ and which contains the curve (\ref{T6}).

Our goal is to show the existence for any $y_{0}$ sufficiently small of a
value $\theta^{\ast}=\theta^{\ast}\left(  y_{0}\right)  $ of $\theta$ such
that the manifold $\mathcal{M}_{\theta^{\ast}}\left(  y_{0}\right)  $ contains
the point $Q_{1}=Q_{2}=1,\ G=1,\ Z=0.$ This will be done by showing that the
corresponding statement holds in the case $y_{0}=0$ and then doing a
perturbation argument. The statement about the manifold $\mathcal{M}%
_{\theta^{\ast}}\left(  0\right)  $ is equivalent to the statement that
${\mathcal{N}}_{\theta^{\ast}}$ contains the point $(1,1,0)$. It will be shown
that the latter statement is true and, moreover, that when $\theta$ is varied
through the value $\theta^{*}$ the manifold ${\mathcal{N}}_{\theta}$ moves
through $(1,1,0)$ with non-zero velocity. It then follows that $\mathcal{M}%
_{\theta}(0)$ moves through $(1,1,1,0)$ with non-zero velocity. Note that 
the coefficients of the system extend smoothly to an open neighbourhood of 
the manifold $\mathcal{M}_{\theta^{\ast}}\left(  0\right)$. As a consequence 
the manifold  $\mathcal{M}_{\theta,{\rm {\rm ext}}}$ extends smoothly to small
negative values of $y_0$. The desired
statement concerning $\mathcal{M}_{\theta}(y_{0})$ is a consequence of the
implicit function theorem. In more detail, the statement that $\mathcal{M}%
_{\theta}$ depends on $\theta$ and $y_{0}$ in a way which is continuously
differentiable means that there is a $C^{1}$ mapping $\Psi$ from the product
of a neighbourhood of $(0,\theta^{*})$ in ${\mathbb{R}}^{2}$ with
$\mathcal{M}_{\theta}(0)$ into a neighbourhood of $(1,1,1,0)$ with the
properties that its restriction to $y_{0}=0$ and $\theta=\theta^{*}$ is the
identity and that the image of $\{(y_{0},\theta)\}\times\mathcal{M}%
_{\theta^{*}}(0)$ under $\Psi$ is $\mathcal{M}_{\theta}(y_{0})$. The condition
that the manifold moves with non-zero velocity implies that if $x_{0}$ denotes
the point of ${\mathcal{M}}_{\theta^{*}}(0)$ with coordinates $(1,1,1,0)$ the
linearization of $\Psi$ at the point $(0,\theta^{*},x_{0})$ with respect to
the last four variables is an isomorphism. This allows the implicit function
theorem to be applied.

In order to check the existence of $\theta^{*}$ it is enough to study the
behaviour of the manifolds $\mathcal{N}_{\theta}$ for $\theta\rightarrow0^{+}$
and $\theta\rightarrow\infty.$ These manifolds are two-dimensional manifolds
in the three-dimensional space $\left(  Q,G,Z\right)  .$ Notice that the
structure of the manifolds $\mathcal{N}_{\theta}$ can be easily understood
using the fact that the parameter $\theta$ can be rescaled out of the system
(\ref{Q5a})-(\ref{Q5c}) using the change of variables:%

\begin{equation}
Q=\frac{1}{\sqrt{\theta}}q. \label{R1E1}%
\end{equation}
Then (\ref{Q5a})-(\ref{Q5c}) becomes:%
\begin{align}
\frac{dq}{d\zeta}  &  =-2GZq,\label{R1E2}\\
\frac{dG}{d\zeta}  &  =2G\left[  Z\left(  1-G\right)  -\left[  Z^{2}+1\right]
^{\frac{3}{2}}q^{2}\right]  ,\label{R1E3}\\
\frac{dZ}{d\zeta}  &  =\left(  3G-1-Zq^{2}\sqrt{\left(  Z^{2}+1\right)
}\right)  \left(  Z^{2}+1\right)  . \label{R1E4}%
\end{align}

Let us denote by $\widetilde{\mathcal{N}}$ the centre-stable manifold at the
point $\left(  q,G,Z\right)  =\left(  0,\frac{1}{3},0\right)  $ for the
dynamics (\ref{R1E2})-(\ref{R1E4}). The manifold $\widetilde{\mathcal{N}}$
contains the curve $\left\{  \left(  Z^{2}+1\right)  G\left(  1-G\right)
^{2}=\frac{4}{3^{3}},\;q=0\right\}  .$ Notice that:%
\[
\left(  Q,G,Z\right)  \in\mathcal{N}_{\theta}\Longleftrightarrow\left(
\sqrt{\theta}Q,G,Z\right)  \in\widetilde{\mathcal{N}}.
\]
Therefore the family of manifolds $\mathcal{N}_{\theta}$ can be obtained from
the manifold $\widetilde{\mathcal{N}}$ by means of the rescaling (\ref{R1E1})
while keeping the same value of the variables $G,\ Z.$ In order to check if
$\left(  Q,G,Z\right)  =\left(  1,1,0\right)  \in{\mathcal{N}}_{\theta}$ we
just need to describe in detail the intersection of the manifold
$\widetilde{\mathcal{N}}$ with the line $\left\{  G=1,Z=0\right\}  .$ Once the
existence of a value $\theta^{*}$ of $\theta$ for which the manifold
${\mathcal{N}}_{\theta^{*}}$ contains the point $(1,1,0)$ has been shown the
statement that the manifold ${\mathcal{N}}_{\theta}$ moves through this point
with non-zero velocity follows immediately from the rescaling property.

Notice that the plane $\left\{  q=0\right\}  $ is invariant for the system of
equations (\ref{R1E2})-(\ref{R1E4}). The analysis of the trajectories of
(\ref{R1E2})-(\ref{R1E4}) in this plane can be done using phase portrait
arguments. There is a unique equilibrium point at $\left(  G,Z\right)
=\left(  \frac{1}{3},0\right)  $ with stable manifold $\left\{  \left(
Z^{2}+1\right)  G\left(  1-G\right)  ^{2}=\frac{4}{3^{3}}\right\}  .$ This
manifold splits the plane $\left(  G,Z\right)  $ in two connected regions. The
trajectories starting their motion in the region that contains the point
$\left(  G,Z\right)  =\left(  0,0\right)  $ reach the line $Z=0$ for a finite
value of $\zeta$ if $Z>0$ initially and eventually develop a singularity where
$Z$ approaches $-\infty$ at a finite value of $\zeta.$ On the other hand, the
trajectories starting their motion in the region containing the point $\left(
G,Z\right)  =\left(  1,0\right)  $ move in the direction of increasing $Z$
towards $Z=\infty,\ G=1,$ a value that is achieved for a finite value of
$\zeta.$

Notice that the solutions of (\ref{R1E2})-(\ref{R1E4}) starting their dynamics
in the set $\left\{  0\leq G\leq1,\ Z\geq0\right\}  $ can only evolve in two
different ways. Either the trajectory remains in the region where $Z\geq0$ for
arbitrarily large values of $\zeta$ or the trajectory enters the region
$\left\{  Z<0\right\}  .$ In the second case this can only happen through the
set $G\leq\frac{1}{3}.$ Since $G$ is decreasing it remains in the set
$\left\{  Z<0\right\}  $ for larger values of $\zeta$ and eventually it
approaches $Z=-\infty$ for some finite value of $\zeta.$

Suppose otherwise that the trajectory remains in the region where $Z\geq0$ for
arbitrary values of $\zeta.$ Then $q$ decreases to zero and the behaviour of
the trajectories is then similar to the ones in the plane $\left\{
q=0\right\}  .$ We now claim that either this trajectory belongs to the stable
manifold $\widetilde{\mathcal{N}}$ or it satisfies $\lim_{\zeta\rightarrow
\zeta^{\ast}}Z\left(  \zeta\right)  =\infty$ for some $\zeta^{\ast}\leq
\infty.$ In order to avoid breaking the continuity of the argument we will
prove this result in Lemma \ref{LeT1} in Section \ref{TechLemm}.

We will show that there exists a point of the line $\left\{  G=1,Z=0\right\}
$ in the manifold $\widetilde{\mathcal{N}}.$ The points of this line enter the
region $\left\{  0<G<1,\ Z>0\right\}  $ due to the form of the vector field
associated to (\ref{R1E2})-(\ref{R1E4}). If $q\left(  0\right)  >0$ is small,
Lemma \ref{LeT2} shows that $Z$ approaches $Z=\infty$ for a finite value of
$\zeta.$ Suppose now that $q\left(  0\right)  >0$ is sufficiently large. Then
the trajectory enters the region $\left\{  Z<0\right\}  $ for a finite value
of $\zeta$ as the following argument shows. A solution which starts at
$(q_{0},1,0)$ with $q_{0}$ large immediately enters the region $Z>0$, $G<1$.
The inequality $Z\le1$ will hold for at least a time $\frac14$ since
$\frac{dZ}{d\zeta}\le4$ as long as $Z\le1$. The aim is to show that for
$q_{0}$ sufficiently large $Z$ will become negative within the interval
$[0,\frac14]$. From now on only that interval is considered. Integrating the
equation for $q$ gives the inequality $q(\zeta)\ge e^{-\frac12}q_{0}$. The
equation for $G$ then shows that $G(\zeta)\le e^{-\alpha(q_{0})\zeta}$ where
$\alpha(q_{0})=q_{0}^{2}e^{-1}-1$. Choose $q_{0}$ large enough so that
$e^{-\frac1{40} \alpha(q_{0})}\le\frac16$. When $\zeta=\frac1{40}$ the
inequality $Z\le\frac1{10}$ still holds. Under the given circumstances $G$ is
decreasing on the whole interval $[0,\frac14]$. The equation for $Z$ shows
that by the time $\zeta=\frac9{40}$ at the latest $Z$ has reached zero.

Let $U_{1}$ be the set of positive real numbers $q_{0}$ for which the solution
starting at $(q_{0},1,0)$ is such that $Z\rightarrow-\infty$ as $\zeta
\rightarrow\zeta^{\ast}$, where $\zeta^{\ast}$ denotes the maximal time of
existence, and let $U_{2}$ be the set of positive real numbers $q_{0}$ for
which the solution starting at $(q_{0},1,0)$ is such that $Z\rightarrow
+\infty$ as $\zeta\rightarrow\zeta^{\ast}$. It follows from Lemma \ref{LeT3}
that $U_{2}$ is open. We also know that $U_{1}$ is open. Moreover, it has been 
proved that both $U_1$ and $U_2$ are non-empty. By connectedness of
the interval $(0,\infty)$ it follows that there must be a value of $q_{0}$ for
which the solution starting at $(q_{0},1,0)$ is neither in $U_1$ or $U_2$.
For that solution $Z$ is non-negative and does not tend to infinity and thus, 
by Lemma \ref{LeT1}, it is the desired solution which lies on 
$\tilde{\mathcal{N}}$. 

The equivalence between the existence of the self-similar solution described
in Section \ref{SelfSimD} and the existence of a trajectory connecting the
points $\left(  Q_{1},Q_{2},G,Z\right)  =\left(  1,1,1,0\right)  $ and
$\left(  Q_{1,\infty},Q_{2,\infty},\Lambda_{\infty},u_{\infty}\right)  $
proved in Subsection \ref{refTraj} concludes the proof of Theorem
\ref{shootingTheorem}. Theorem \ref{Th1} is just a Corollary of Theorem
\ref{shootingTheorem}.
\end{proof}

\bigskip

\section{\label{TechLemm}Some auxiliary lemmas used in the analysis of
(\ref{R1E2})-(\ref{R1E4}).}

\bigskip

\begin{lemma}
\label{LeT1} Suppose that a solution of (\ref{R1E2})-(\ref{R1E4}) is defined
for $\zeta\in\left[  \zeta_{\ast},\zeta^{\ast}\right)  ,$ where $\zeta^{\ast}$
is the maximal time of existence. Suppose that $Z\left(  \zeta\right)  >0$ for
any $\zeta\in\left[  \zeta_{\ast},\zeta^{\ast}\right)  $ and also that
$G\left(  \zeta_{\ast}\right)  \in\left(  0,1\right)  ,\ q\left(  \zeta_{\ast
}\right)  >0.$ Then, either the curve $\left\{  \left(  q\left(  \zeta\right)
,G\left(  \zeta\right)  ,Z\left(  \zeta\right)  \right)  :\zeta\in\left(
\zeta_{\ast},\zeta^{\ast}\right)  \right\}  $ is contained in the stable
manifold $\widetilde{\mathcal{N}}$ or $\lim_{\zeta\rightarrow\zeta^{\ast}%
}Z\left(  \zeta\right)  =\infty.$
\end{lemma}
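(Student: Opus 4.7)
The plan is to combine the monotonicity of $q$ with a LaSalle-type $\omega$-limit analysis and, in the remaining case, a perturbative comparison with the two-dimensional $\{q=0\}$ flow. The crucial monotonicity is that $\frac{dq}{d\zeta}=-2GZq<0$ whenever $q,G,Z>0$, while $\{G=0\}$, $\{G=1\}$ and $\{q=0\}$ are invariant boundaries and at $G=1$ one has $\frac{dG}{d\zeta}=-2(Z^2+1)^{3/2}q^2<0$. Hence $q$ and $G$ remain in the compact intervals $(0,q(\zeta_*)]$ and $(0,1)$ throughout $[\zeta_*,\zeta^*)$, and $q$ admits a limit $q_\infty\in[0,q(\zeta_*)]$ as $\zeta\to\zeta^*$; only the behaviour of $Z$ remains to be pinned down, and the two alternatives in the statement correspond to $Z$ staying bounded versus not.

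I would then split the analysis into three cases. If $\zeta^*<\infty$, then since the right-hand side of (\ref{R1E2})--(\ref{R1E4}) is smooth and $(q,G)$ already live in a compact set, the standard continuation criterion forces $|Z|\to\infty$, and $Z>0$ gives the second alternative of the lemma. If $\zeta^*=\infty$ and $Z$ is bounded, the orbit has a non-empty compact invariant $\omega$-limit set $\omega$. LaSalle's principle applied to the monotone $q$ yields $GZq_\infty\equiv 0$ on $\omega$. Inspecting the signs of $\frac{dZ}{d\zeta}$ on $\{G=0\}$ (strictly negative) and on $\{Z=0\}$ (where it equals $3G-1$, vanishing only at $G=\tfrac13$, at which point $\frac{d^2Z}{d\zeta^2}=-2q_\infty^2<0$ as soon as $q_\infty>0$), combined with the forward-and-backward invariance of $\omega$ inside $\{Z\ge 0\}$, rules out $q_\infty>0$. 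Hence $\omega\subset\{q=0,\,Z\ge 0\}$, where Poincar\'e--Bendixson together with the facts that $(G,Z)=(\tfrac13,0)$ is the unique equilibrium in that region and that no periodic or homoclinic orbits exist there (both readable from the phase portrait described in Subsection \ref{ManiApp}) forces $\omega=\{(0,\tfrac13,0)\}$. The orbit therefore converges to the equilibrium and, being forward-bounded in a neighbourhood of it, lies on $\widetilde{\mathcal N}$ by the (centre-)stable manifold theorem (cf.\ \cite{CLW}, \cite{P}).

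The remaining case, $\zeta^*=\infty$ with $\limsup_{\zeta\to\infty}Z(\zeta)=+\infty$, is where I expect the principal obstacle: upgrading $\limsup Z=\infty$ to $\lim Z=\infty$. The easier half is to show $q_\infty=0$: if instead $q_\infty>0$, then picking $M$ so large that $Mq_\infty^2\sqrt{M^2+1}>3$ forces $\frac{dZ}{d\zeta}<0$ throughout $\{Z\ge M\}$ via equation (\ref{R1E4}), so $Z$ is forward-bounded by $\max(Z(\zeta_*),M)$, contradicting unboundedness. The harder half is to rule out oscillations of $Z$ when $q$ decays to zero; the guiding intuition is that the orbit should be a small-$q$ perturbation of the two-dimensional $\{q=0\}$ flow, whose trajectories in $\{Z\ge 0,\,G\in(0,1)\}$ either run off to $Z=+\infty$ with $G\to 1$, or reach $Z=0$ in finite time (and are subsequently forced to $Z=-\infty$), or else lie on the one-dimensional separatrix $(Z^2+1)G(1-G)^2=\tfrac{4}{27}$ converging to $(\tfrac13,0)$. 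Since the true orbit maintains $Z>0$ throughout, the middle branch is forbidden once $q$ is sufficiently small; either the orbit tracks the first alternative and $Z\to+\infty$, or it asymptotically tracks the separatrix, reaches a neighbourhood of $(0,\tfrac13,0)$, and lands on $\widetilde{\mathcal N}$ via the centre-stable manifold theorem. Making this perturbative comparison rigorous on the non-compact interval $[\zeta_*,\infty)$, where standard continuous-dependence estimates are insufficient, is the main technical challenge; I expect it to be carried out by controlling the drift of $F=(Z^2+1)G(1-G)^2$ along the full system to argue that $F$ admits a limit $F_\infty$, combined with a tailored forward-invariant-region argument in a set of the form $\{G>\tfrac13+\varepsilon,\,q<\varepsilon\}$ to capture the $Z\to\infty$ alternative.
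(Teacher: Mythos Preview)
Your decomposition into the three cases $\zeta^*<\infty$, $\{\zeta^*=\infty,\ Z\ \text{bounded}\}$, and $\{\zeta^*=\infty,\ \limsup Z=\infty\}$ is sound, and your treatment of the first two is correct. In fact your handling of the bounded case via LaSalle's principle and Poincar\'e--Bendixson is a legitimate and arguably cleaner alternative to the paper's direct argument, which instead integrates $\int GZq\,d\zeta<\infty$ to force $GZ\to 0$, then pushes $G\to 0$ and finally $Z<0$ to reach a contradiction when $q_\infty>0$; both routes end at $q_\infty=0$ and then at $\omega=\{(0,\tfrac13,0)\}$.

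The gap is in your third case. You correctly obtain $q_\infty=0$, but the two tools you propose for upgrading $\limsup Z=\infty$ to $\lim Z=\infty$ do not work as stated. The $q^2$--perturbations in (\ref{R1E3}) and (\ref{R1E4}) are not of size $O(q^2)$ but of size $(Z^2+1)^{3/2}q^2$ and $Z(Z^2+1)^{3/2}q^2$; consequently the drift of your proposed quantity $F=(Z^2+1)G(1-G)^2$ along the full flow is of order $q^2Z^5$ for large $Z$, and there is no a priori reason for this to be integrable (or even bounded) on an oscillating trajectory. For the same reason your region $\{G>\tfrac13+\varepsilon,\ q<\varepsilon\}$ is \emph{not} forward-invariant: on the face $G=\tfrac13+\varepsilon$ the term $-2G(Z^2+1)^{3/2}q^2$ in $\tfrac{dG}{d\zeta}$ can overwhelm $2GZ(1-G)$ once $Z$ is large, even though $q<\varepsilon$.

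The missing idea, which the paper supplies, is to control the product $Zq$ rather than $q$ alone or $F$. A short computation gives
\[
\frac{d(Zq)}{d\zeta}\;\le\; q^{-1}\bigl[(3G-1)q^2-(Zq)^4\bigr],
\]
so that once $q$ is small the sublevel sets $\{Zq<\varepsilon\}$ are forward-invariant; combined with the oscillation hypothesis (which produces times with $Z$ bounded and hence $Zq$ small), this forces $Zq\to 0$. Only then do the effective perturbations $\delta_1=Zq^2\sqrt{Z^2+1}$ and $\delta_2=(Z^2+1)^{3/2}q^2/(Z+1)$ tend to zero uniformly, after which the trapping region $\{G\ge\tfrac13+\varepsilon_0,\ Z\ge 1\}$ genuinely is invariant and the blow-up $\tfrac{dZ}{d\zeta}\ge\varepsilon_0(Z^2+1)$ follows. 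The paper then closes the argument by locating, between each oscillation, a time where $Z$ crosses a fixed level with nonnegative derivative (forcing $G\gtrsim\tfrac13$ there) and showing $G$ climbs past $\tfrac13+\varepsilon_0$ on a short subsequent interval, so that the trajectory enters the trapping region.
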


\noindent

\begin{proof}
The plane $\left\{  G=0\right\}  $ is invariant under the flow associated to
(\ref{R1E2})-(\ref{R1E4}). On the other hand, the vector field on the
right-hand side of (\ref{R1E2})-(\ref{R1E4}) points into the region $\left\{
G<1\right\}  $ if $q\neq0.$ Therefore the region $\left\{
0<G<1,\ q>0\right\}  $ is invariant for the flow defined by (\ref{R1E2}%
)-(\ref{R1E4}) and we can assume that the inequalities $0<G\left(
\zeta\right)  <1,\ q\left(  \zeta\right)  >0$ hold for any $\zeta\in\left[
\zeta_{\ast},\zeta^{\ast}\right)  .$ We now have two possibilities:%
\begin{align}
\limsup_{\zeta\rightarrow\zeta^{\ast}}Z\left(  \zeta\right)   &
<\infty\text{ },\label{Alt1}\\
\limsup_{\zeta\rightarrow\zeta^{\ast}}Z\left(  \zeta\right)   &  =\infty.
\label{Alt2}%
\end{align}
Suppose first that (\ref{Alt1}) holds. Then, there exists $M>0$ such that
\begin{equation}
Z\left(  \zeta\right)  \leq M\ \ \ \text{for any \ }\zeta\in\left[
\zeta_{\ast},\zeta^{\ast}\right)  . \label{BoundZ}%
\end{equation}
We claim that in this case the trajectory $\left\{  \left(  q\left(
\zeta\right)  ,G\left(  \zeta\right)  ,Z\left(  \zeta\right)  \right)
:\zeta\in\left(  \zeta_{\ast},\zeta^{\ast}\right)  \right\}  $ is contained in
$\widetilde{\mathcal{N}}$. Notice that in this case, the boundedness of
$\left\vert \left(  q,G,Z\right)  \right\vert $ implies that $\zeta^{\ast
}=\infty.$ Since $\left(  GZq\right)  \left(  \zeta\right)  >0$ for $\zeta
\in\left[  \zeta_{\ast},\infty\right)  $ it follows from (\ref{R1E2}) that
$q\left(  \zeta\right)  $ is decreasing. Therefore
$q_{\infty}=\lim_{\zeta\rightarrow\infty}q\left(  \zeta\right)$ exists and is
non-negative. Suppose
that $0<q_{\infty}.$ Then $0<q_{\infty}<q\left(  \zeta\right)  $ for any
$\zeta\in\left[  \zeta_{\ast},\infty\right)  .$ Integrating (\ref{R1E2}) we
obtain $\int_{\zeta_{\ast}}^{\infty}\left(  GZq\right)  \left(  \zeta\right)
d\zeta<\infty,$ whence
\begin{equation}
\int_{\zeta_{\ast}}^{\infty}\left(  GZ\right)  \left(  \zeta\right)
d\zeta<\infty. \label{Alt3}%
\end{equation}
Since $\frac{dG}{d\zeta},\ \frac{dZ}{d\zeta}$ are bounded, (\ref{Alt3})
implies $\lim_{\zeta\rightarrow\infty}\left(  GZ\right)  \left(  \zeta\right)
=0.$ Then (\ref{R1E3}) implies:%
\[
\frac{dG}{d\zeta}\leq-q_{\infty}^{2}G
\]
for $\zeta\geq\zeta_{0}$ sufficiently large. Therefore $\lim_{\zeta
\rightarrow\infty}G\left(  \zeta\right)  =0.$ Equation (\ref{R1E4}) then
yields:
\[
\frac{dZ}{d\zeta}\leq-\frac{1}{2}%
\]
for $\zeta\geq\zeta_{0}$ large enough. Then $Z\left(  \zeta\right)  <0$ for
large $\zeta,$ but this contradicts the hypothesis of the lemma. It then
follows that $q_{\infty}=0.$

Due to (\ref{BoundZ}) and since $\lim_{\zeta\rightarrow\infty}q\left(
\zeta\right)  =0$ we can approximate the trajectories associated to
(\ref{R1E2})-(\ref{R1E4}) for large values of $\zeta$ using the corresponding
trajectories associated to (\ref{R1E2})-(\ref{R1E4}) for $q=0.$ The study of
the trajectories associated to (\ref{R1E2})-(\ref{R1E4}) that are contained in
$\left\{  q=0\right\}  \cap\left\{  0<G<1\right\}  $ reduces to a
two-dimensional phase portrait. These trajectories can have only three
different behaviours. Either they are contained in $\widetilde{\mathcal{N}%
}\cap\left\{  q=0\right\}  ,$ or they reach the plane $\left\{  Z=0\right\}
,$ with $G<\frac{1}{3},$ entering $\left\{  Z<0\right\}  ,$ or they become
unbounded. The continuous dependence of the trajectories with respect to the
initial values as well as the fact that $\lim_{\zeta\rightarrow\infty}q\left(
\zeta\right)  =0$ implies then that either $\lim_{\zeta\rightarrow\infty
}\mathrm{dist}\left(  \left(  q\left(  \zeta\right)  ,G\left(  \zeta\right)
,Z\left(  \zeta\right)  \right)  ,\widetilde{\mathcal{N}}\cap\left\{
q=0\right\}  \right)  =0,$ or $Z\left(  \zeta\right)  <0$ for some
$\zeta<\infty$, or $Z\left(  \zeta\right)  \geq M+1$ for some $\zeta<\infty.$
The second alternative contradicts the hypothesis of the lemma. The third
alternative contradicts (\ref{BoundZ}) and therefore only the first
alternative is left. However, in that case $\lim_{\zeta\rightarrow\infty}$
$\left(  q\left(  \zeta\right)  ,G\left(  \zeta\right)  ,Z\left(
\zeta\right)  \right)  =\left(  0,\frac{1}{3},0\right)  $ and the trajectory
is contained in $\widetilde{\mathcal{N}}$ as claimed.

Suppose then that (\ref{Alt2}) holds. We claim that in this case $\lim
_{\zeta\rightarrow\zeta^{\ast}}Z\left(  \zeta\right)  =\infty.$ Notice that
the monotonicity of $q\left(  \zeta\right)  $ implies that $\lim
_{\zeta\rightarrow\zeta^{\ast}}q\left(  \zeta\right)  =q_{\infty}$ exists. We
will first prove that $q_{\infty}=0.$ Suppose that, on the contrary,
$q_{\infty}>0.$ Then $q\left(  \zeta\right)  >q_{\infty}>0$ for any $\zeta
\in\left[  \zeta_{\ast},\zeta^*\right)  .$ Equation (\ref{R1E4}) as well as
$G<1$ yields:%
\[
\frac{dZ}{d\zeta}<\left(  2-Zq_{\infty}^{2}\sqrt{Z^{2}+1}\right)  \left(
Z^{2}+1\right)
\]
for any $\zeta\in\left[  \zeta_{\ast},\zeta^*\right)  .$ This inequality
implies $\frac{dZ}{d\zeta}<0$ for $Z>Z_{\infty}=Z_{\infty}\left(  q_{\infty
}\right)  .$ Therefore $Z\left(  \zeta\right)  <Z_{\infty}$ for $\zeta
\in\left[  \zeta_{\ast},\zeta^*\right)  $ and this contradicts (\ref{Alt2}).
From now on take $q_{\infty}=0.$ We can then assume (\ref{Alt2}) and
\begin{equation}
\lim_{\zeta\rightarrow\zeta^{\ast}}q\left(  \zeta\right)  =0.\label{limQ}%
\end{equation}
Suppose also that $\liminf_{\zeta\rightarrow\zeta^{\ast}}Z\left(
\zeta\right)  <\infty.$ This is equivalent to the existence of $0<M<\infty$
and a subsequence \ $\left\{  \zeta_{n}\right\}  $ with $\zeta_{n}%
\rightarrow\zeta^{\ast}$ as $n\rightarrow\infty$ such that:%
\begin{equation}
Z\left(  \zeta_{n}\right)  \leq M.\label{In1}%
\end{equation}
We now claim that:%
\begin{equation}
\lim_{\zeta\rightarrow\zeta^{\ast}}\left[  Z\left(  \zeta\right)  q\left(
\zeta\right)  \right]  =0.\label{E3}%
\end{equation}
To prove (\ref{E3}) we argue as follows. Combining (\ref{R1E2}), (\ref{R1E4})
we obtain:
\begin{equation}
\frac{d}{d\zeta}\left(  Zq\right)  =qZ^{2}\left(  G-1\right)  +q\left(
3G-1\right)  -Zq^{3}\sqrt{\left(  Z^{2}+1\right)  ^{3}}.\label{dZq}%
\end{equation}
We now use the inequality $Z\sqrt{\left(  Z^{2}+1\right)  ^{3}}\geq Z^{4}$ for
$Z>0.$ Then, using also the inequality $G<1:$%
\begin{equation}
\frac{d}{d\zeta}\left(  Zq\right)  \leq q^{-1}\left[  \left(  3G-1\right)
q^{2}-\left(  Zq\right)  ^{4}\right]  .\label{dZq1}%
\end{equation}
It follows from this inequality, as well as (\ref{limQ}) that for any
$\varepsilon>0,$ every trajectory satisfying the hypothesis of Lemma
\ref{LeT1} and entering any of the regions $\left\{  \left(  q,G,Z\right)
:Zq<\varepsilon\text{ }\right\}  $ for $\zeta$ sufficiently close to
$\zeta^{\ast}$ remains in such a region for later times. If $\zeta^{\ast
}=\infty,$ the meaning of sufficiently close is large enough. Due to
(\ref{limQ}) and (\ref{In1}), for any $\varepsilon>0,$ there exist $\zeta_{n}$
arbitrarily close to $\zeta^{\ast}$ such that $\left(  Zq\right)  \left(
\zeta_{n}\right)  <\varepsilon.$ Then $\left(  Zq\right)  \left(
\zeta\right)  <\varepsilon$ for any $\zeta\in\left(  \zeta_{n},\zeta^{\ast
}\right)  .$ Since $\varepsilon$ is arbitrary we obtain (\ref{E3}).

Combining (\ref{limQ}) and (\ref{E3}) it follows that:%
\begin{equation}
\lim_{\zeta\rightarrow\zeta^{\ast}}\delta_{1}\left(  \zeta\right)
=\lim_{\zeta\rightarrow\zeta^{\ast}}\delta_{2}\left(  \zeta\right)  =0
\label{E6}%
\end{equation}
where:%
\[
\delta_{1}\left(  \zeta\right)  =Zq^{2}\sqrt{Z^{2}+1}\ \ \ ,\ \ \ \delta
_{2}\left(  \zeta\right)  =\frac{\left(  Z^{2}+1\right)  ^{\frac{3}{2}}q^{2}%
}{Z+1}.
\]
We can then rewrite (\ref{R1E2}), (\ref{R1E4}) as:
\begin{align}
\frac{dq}{d\zeta}  &  =-2GZq,\label{EDO1}\\
\frac{dG}{d\zeta}  &  =2G\left[  Z\left(  1-G\right)  -\left(  Z+1\right)
\delta_{2}\left(  \zeta\right)  \right]  ,\label{ED02}\\
\frac{dZ}{d\zeta}  &  =\left(  3G-1-\delta_{1}\left(  \zeta\right)  \right)
\left(  Z^{2}+1\right)  . \label{ED03}%
\end{align}
We now claim the following. Given any $\varepsilon_{0}$ belonging to the
interval $\left(  0,\frac23\right)  $ suppose that the trajectory under
consideration enters the set:%
\[
\Omega_{\varepsilon_{0}}=\left\{  G\geq\frac{1}{3}+\varepsilon_{0}%
\ ,\ Z\geq1\right\}
\]
for some $\zeta<\zeta^{\ast}$ sufficiently close to $\zeta^{\ast}.$ Then
$\lim_{\zeta\rightarrow\zeta^{\ast}}Z\left(  \zeta\right)  =\infty$ and
$\zeta^{\ast}<\infty.$ The proof as the follows. Due to (\ref{E6}) the set
$\Omega_{\varepsilon_{0}}$ is invariant for (\ref{EDO1})-(\ref{ED03}) if
$\zeta$ is close to $\zeta^{\ast}.$ Then, for $\zeta$ close to $\zeta^{\ast}$
we have:%
\[
\frac{dZ}{d\zeta}\geq\varepsilon_{0}\left(  Z^{2}+1\right)
\]
and this implies $\lim_{\zeta\rightarrow\zeta^{\ast}}Z\left(  \zeta\right)
=\infty$ and $\zeta^{\ast}<\infty.$

Therefore, to complete the proof of Lemma \ref{LeT1} it only remains to prove
that the trajectory enters $\Omega_{\varepsilon_{0}}$ for values of $\zeta$
sufficiently close to $\zeta^{\ast}.$ Due to (\ref{Alt2}) and (\ref{In1})
there exists a sequence $\left\{  \bar{\zeta}_{n}\right\}  $ with $\zeta
_{n}<\bar{\zeta}_{n}<\zeta^{\ast}$ such that:%
\[
Z\left(  \bar{\zeta}_{n}\right)  =2M\ \ \ \text{and\ \ \ }\frac{dZ}{d\zeta
}\left(  \bar{\zeta}_{n}\right)  \geq0.
\]
Due to (\ref{ED03}) this implies:%
\begin{equation}
\limsup_{n\rightarrow\infty}G\left(  \bar{\zeta}_{n}\right)  \geq\frac{1}{3}.
\label{E8}%
\end{equation}
On the other hand, a Gronwall type of argument applied to (\ref{ED03}) implies
the existence of $\alpha_{M}>0$, depending only on $M$ such that:%
\begin{equation}
0<\frac{M}{2}\leq Z\left(  \zeta\right)  \leq4M\ \ \ \text{for\ \ \ }\zeta
\in\left[  \bar{\zeta}_{n},\bar{\zeta}_{n}+\alpha_{M}\right]  . \label{E9}%
\end{equation}
Comparing the solution of the equation (\ref{ED02}) with the solution of the
equation $\frac{dG}{d\zeta}=2GZ\left(  1-G\right)  $ with the same initial
datum at $\zeta=\bar{\zeta}_n$ and taking into account (\ref{E8}), (\ref{E9}) it
then follows that, for $n$ large enough $\left(  q\left(  \bar{\zeta}%
_{n}\right)  ,G\left(  \bar{\zeta}_{n}\right)  ,Z\left(  \bar{\zeta}%
_{n}\right)  \right)  \in\Omega_{\varepsilon_{0}}.$ Therefore $\lim
_{\zeta\rightarrow\zeta^{\ast}}Z\left(  \zeta\right)  =\infty.$ This
contradicts (\ref{In1}) and the lemma follows.
\end{proof}

\begin{lemma}
\label{LeT2} There exists $\delta>0$ sufficiently small such that, the
solution of (\ref{R1E2})-(\ref{R1E4}) with initial value $\left(  q\left(
0\right)  ,G\left(  0\right)  ,Z\left(  0\right)  \right)  =\left(
q_{0},1,0\right)  $ and $0<q_{0}<\delta$ satisfies $\lim_{\zeta\rightarrow
\zeta^{\ast}}Z\left(  \zeta\right)  =\infty,$ where $\zeta^{\ast}$ denotes the
maximal time of existence of the trajectory.
\end{lemma}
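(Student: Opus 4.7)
My proof plan starts from identifying the reference solution at $q_0=0$. Since the hyperplane $\{q=0\}$ is invariant under (\ref{R1E2})--(\ref{R1E4}) and at $(G,Z)=(1,0)$ one has $\frac{dG}{d\zeta}=2GZ(1-G)=0$ together with $\frac{dZ}{d\zeta}=2(Z^2+1)$, the unique solution issuing from $(0,1,0)$ is $q\equiv 0$, $G\equiv 1$, $Z(\zeta)=\tan(2\zeta)$, defined on $[0,\pi/4)$ and blowing up at $\zeta=\pi/4$. The goal is to show that this finite-time blow-up of $Z$ persists under a small perturbation $q_0>0$.

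The main analytical tool will be the Lyapunov functional $V(\zeta):=q(\zeta)\sqrt{Z(\zeta)^2+1}$. A direct computation using (\ref{R1E2})--(\ref{R1E4}) gives
\[
\frac{dV}{d\zeta}=Zq\sqrt{Z^2+1}\,\bigl[G-1-Zq^2\sqrt{Z^2+1}\bigr].
\]
Since $\frac{dG}{d\zeta}\big|_{G=1}=-2(Z^2+1)^{3/2}q^2\le 0$, the half-space $\{G\le 1\}$ is forward-invariant, so $G\le 1$ throughout, and hence $\frac{dV}{d\zeta}\le 0$ whenever $Z\ge 0$. This yields the a priori bound $q\sqrt{Z^2+1}\le q_0$, and hence the two smallness estimates $Zq^2\sqrt{Z^2+1}\le q_0^2$ and $(Z^2+1)^{3/2}q^2=V^2\sqrt{Z^2+1}\le q_0^2\sqrt{Z^2+1}$.

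It remains to run a bootstrap argument that maintains the two inequalities $Z>0$ and $G\ge 2/3$ on the entire maximal existence interval $[0,\zeta^\ast)$. Granted these, the first smallness estimate substituted into (\ref{R1E4}) gives
\[
\frac{dZ}{d\zeta}\ge (3G-1-q_0^2)(Z^2+1)\ge\tfrac{1}{2}(Z^2+1)
\]
whenever $q_0^2\le\tfrac12$, which implies $Z(\zeta)\ge\tan(\zeta/2)$ and thus $Z\to\infty$ at some finite $\zeta^\ast\le\pi$, yielding the conclusion of the lemma. The invariance of $\{G\ge 2/3\}$ is verified using the second smallness estimate: at $G=2/3$ one computes $\frac{dG}{d\zeta}\ge\tfrac{4}{3}\bigl[Z/3-q_0^2\sqrt{Z^2+1}\bigr]$, which is strictly positive as soon as $Z$ is bounded below by a fixed positive constant and $q_0$ is small enough, so $G$ cannot cross the threshold $2/3$ downwards in this regime.

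The main obstacle is the short initial phase where $Z$ is still small and the preceding sign condition on $\frac{dG}{d\zeta}$ at $G=2/3$ can fail. Here one exploits that $G(0)=1$ lies strictly above the threshold: combining the crude bound $\frac{dG}{d\zeta}\ge-2(Z^2+1)^{3/2}q^2\ge-2q_0^2\sqrt{Z^2+1}$ with the upper envelope $\frac{dZ}{d\zeta}\le 2(Z^2+1)$ (giving $Z(\zeta)\le\tan(2\zeta)$), integration on a fixed compact interval $[0,T_0]\subset[0,\pi/4)$ produces a bound of the form $G(\zeta)\ge 1-Cq_0^2$ on $[0,T_0]$ with $C$ depending only on $T_0$. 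Choosing $T_0$ so that $\tan(2T_0)$ exceeds the threshold required in the previous paragraph and then taking $\delta$ small enough, $G$ stays comfortably above $2/3$ throughout $[0,T_0]$ while $Z$ rises past that threshold; from $T_0$ onward the invariant-region argument takes over, the bootstrap closes, and the desired finite-time blow-up of $Z$ follows.
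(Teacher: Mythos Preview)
Your proof is correct. The key novelty with respect to the paper's argument is your Lyapunov functional $V=q\sqrt{Z^2+1}$. A direct check confirms
\[
\frac{dV}{d\zeta}=Zq\sqrt{Z^2+1}\,\bigl[G-1-Zq^2\sqrt{Z^2+1}\bigr]\le 0
\]
as long as $Z\ge 0$ and $G\le 1$, and this single monotone quantity yields both smallness estimates $Zq^{2}\sqrt{Z^{2}+1}\le V^{2}\le q_{0}^{2}$ and $(Z^{2}+1)^{3/2}q^{2}=V^{2}\sqrt{Z^{2}+1}\le q_{0}^{2}\sqrt{Z^{2}+1}$ in one stroke. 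With these in hand, the bootstrap keeping $Z>0$ and $G\ge 2/3$ closes exactly as you describe: on the initial phase $[0,T_{0}]$ the upper envelope $Z\le\tan(2\zeta)$ and the lower bound $\tfrac{dG}{d\zeta}\ge -2q_{0}^{2}\sqrt{Z^{2}+1}$ give $G\ge 1-Cq_{0}^{2}$, while after $T_{0}$ the lower bound $Z\ge\tan((1-q_{0}^{2})\zeta)$ (which follows from $\tfrac{dZ}{d\zeta}\ge(1-q_{0}^{2})(Z^{2}+1)$ on the bootstrap region) makes the sign of $\tfrac{dG}{d\zeta}$ at the threshold $G=2/3$ positive for small $q_{0}$. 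One minor wording slip: the ``threshold'' you want $Z$ to exceed at time $T_{0}$ should be compared against the \emph{lower} bound on $Z$, not the upper bound $\tan(2T_{0})$; but since any fixed $T_{0}\in(0,\pi/4)$ gives $Z(T_{0})\ge\tan((1-q_0^2)T_{0})>0$ and the threshold needed is $O(q_{0}^{2})$, the argument goes through.

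The paper proceeds differently. It does not use your functional $V$; instead it first establishes a short-time regime $[0,1/4]$ on which $Z\le 1$ and $G\ge e^{-q_{0}^{2}}$ by crude estimates, and then controls the nonlinear terms by proving that the set $\{Zq\le\varepsilon\}$ is forward-invariant via the auxiliary inequality (\ref{dZq1}) derived in Lemma~\ref{LeT1}, combined with the monotonicity $q\le q_{0}$. The subsequent bootstrap is run with the threshold $G\ge 1/2$ rather than $G\ge 2/3$. Your route is more self-contained (no appeal to (\ref{dZq1})) and conceptually tidier, since a single decreasing quantity replaces the paper's combination of $q\le q_{0}$ and the $Zq$-invariant region; the paper's route, on the other hand, reuses machinery already developed for Lemma~\ref{LeT1} and avoids the need to integrate $\sec(2\zeta)$ on the initial phase.
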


\noindent

\begin{proof}
The trajectory enters the region $\left\{  Z>0\right\}  $ and as long as it
remains there, the function $q\left(  \zeta\right)  $ is decreasing. The
inequality $\frac{\partial Z}{\partial\zeta}\le 4$ holds as long as $Z\le 1$.
It follows that $Z\le 1$ on the interval $\left[0,\frac14\right]$. On that
interval the inequality 
$\frac{\partial (\log G)}{\partial\zeta}\ge -2^{\frac52}q_0^2$ holds and
hence $G\ge e^{-q_0^2}$. Furthermore 
\begin{equation}
\frac{\partial Z}{\partial\zeta}\ge 3e^{-q_0^2}-1-\sqrt{2}q_0^2
=\beta(q_0).   
\end{equation}
Choose $\delta$ sufficiently small that $\beta(\delta)> 1$ and 
$e^{-\delta^2}>\frac12$. Then $Z\left(\frac14\right)>\frac14$ and 
$G>\frac12$ on $\left[0,\frac14\right]$. Choose $\epsilon>0$ and suppose
that $2\delta^2<\epsilon^4$. Then it follows from (\ref{dZq1}) that the set 
defined by the inequality $Zq\le\epsilon$ is invariant. Thus the solution
remains in that region on its whole interval of existence. Now
$\delta_1(\zeta)\le\epsilon \sqrt{\epsilon^2+\delta^2}$ and
$\delta_2(\zeta)\le (\epsilon^2+\delta^2)$. Let $[0,\zeta_1)$ be the longest 
interval on which $G\ge\frac12$. From what has been shown already
$\zeta_1\ge \frac14$. Reduce the size of $\epsilon$ if necessary so that
$\epsilon\sqrt{\epsilon^2+\delta^2}<\frac12$. Then it follows from 
(\ref{ED03}) that $Z$ is increasing on $[0,\zeta_1)$ and hence is greater than
$\frac14$ for $\zeta\ge\zeta_1$. Putting this information into
(\ref{ED02}) shows that provided $\epsilon^2+q_0^2<\frac1{16}$ then 
$G$ cannot decrease. For $\delta$ sufficiently small this gives a 
contradiction unless $\zeta_1=\zeta^*$. In particular there is a positive 
lower bound for $Z$ at late times. Furthermore (\ref{ED03}) implies that
$\lim_{\zeta\rightarrow\zeta^{\ast}}Z\left(  \zeta\right)  =\infty$ and the 
lemma follows.
\end{proof}

\begin{lemma}
\label{LeT3} Suppose that a solution satisfying the hypotheses of Lemma
\ref{LeT1} with $\zeta_*=0$ has the property that 
$\lim_{\zeta\to\zeta^{*}}Z(\zeta)=\infty$. Then any solution starting 
sufficiently close to the given solution for $\zeta=0$ also has the property 
that $Z$ tends to infinity on its maximal interval of existence.
\end{lemma}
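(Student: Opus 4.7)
The plan is to construct an explicit forward-invariant open set $\Omega$ in the phase space of (\ref{R1E2})--(\ref{R1E4}) such that every trajectory entering $\Omega$ has $Z\to\infty$ on its maximal interval of existence, to show that the given solution $\bar\gamma=(\bar q,\bar G,\bar Z)$ enters $\Omega$ at some time $\zeta_{0}<\zeta^{\ast}$, and then to conclude via continuous dependence of the ODE solution on the initial data over the compact interval $[0,\zeta_{0}]$.

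\textbf{Step 1 (Invariant region).} Fix $\varepsilon_{0}\in(0,1/3)$ and $\kappa_{0}>0$ with $\kappa_{0}^{2}<\min\{3\varepsilon_{0}/2,\,2/3-\varepsilon_{0}\}$, and choose $R_{0}$ large enough that $\kappa_{0}R_{0}>\sqrt{2}$. Define
\[
\Omega=\left\{(q,G,Z):\;Z>R_{0},\;G>\tfrac{1}{3}+\varepsilon_{0},\;Zq<\kappa_{0}\right\}.
\]
Writing $P=Zq$, direct estimates on (\ref{R1E2})--(\ref{R1E4}) using $Zq^{2}(Z^{2}+1)^{3/2}\geq P^{2}Z^{2}$ for $Z\geq 1$ and the bound $G\leq 1$ (which is preserved by the flow, since $G'\leq 0$ at $G=1$) show that the vector field points strictly into $\Omega$ along each face of $\partial\Omega$; the most delicate face is $P=\kappa_{0}$, on which $P'\leq (P/Z)(2-\kappa_{0}^{2}Z^{2})<0$ because $\kappa_{0}^{2}Z^{2}>\kappa_{0}^{2}R_{0}^{2}>2$. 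Hence $\Omega$ is forward-invariant. Moreover, within $\Omega$ the estimate $Z'\geq(\varepsilon_{0}/2)(Z^{2}+1)$ holds, so every solution starting in $\Omega$ has $Z\to\infty$ in finite time.

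\textbf{Step 2 (The given trajectory enters $\Omega$).} I would show that $\bar P=\bar Z\bar q\to 0$ and $\bar G\to 1$ as $\zeta\to\zeta^{\ast}$. Since $\bar q'=-2\bar G\bar Z\bar q\leq 0$ the limit $\bar q_{\infty}=\lim\bar q$ exists, and $\bar q_{\infty}>0$ would force $\bar Z\bar q^{2}\sqrt{\bar Z^{2}+1}\to\infty$ and hence $\bar Z'\to-\infty$, contradicting $\bar Z\to\infty$; thus $\bar q\to 0$. Combining (\ref{R1E2}) and (\ref{R1E4}) one derives
\[
P'=\frac{P}{Z}\Bigl[(3G-1)(Z^{2}+1)-\frac{P^{2}(Z^{2}+1)^{3/2}}{Z}-2GZ^{2}\Bigr]\leq\frac{P}{Z}\bigl[\,2-P^{2}Z^{2}\bigr],
\]
where the last inequality uses $\bar G\leq 1$; once $\bar Z\bar P\geq\sqrt{2}$ this yields $(P^{-2})'\geq\bar Z$. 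A crude upper estimate $\bar Z'\leq 2(\bar Z^{2}+1)$ with the change of variables $d\zeta=dZ/\bar Z'$ gives $\int^{\zeta^{\ast}}\bar Z\,d\zeta\geq\tfrac{1}{4}\log(\bar Z^{2}+1)\to\infty$, so $\bar P\to 0$. Feeding $\bar P\to 0$ into $\bar G'=2\bar G\bar Z[(1-\bar G)-\bar P^{2}(1+1/\bar Z^{2})^{3/2}]$ forces $\bar G\to 1$: if $\bar G\leq 1-\eta$ on a late interval where also $\bar P^{2}<\eta/4$, then $\bar G'\geq(\eta/2)\bar G\bar Z$, and $\log\bar G$ would diverge, contradicting $\bar G\leq 1$. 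Combining $\bar Z\to\infty$, $\bar G\to 1$ and $\bar P\to 0$, one picks $\zeta_{0}<\zeta^{\ast}$ with $\bar\gamma(\zeta_{0})$ in the (open) interior of $\Omega$.

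\textbf{Step 3 (Continuous dependence) and main obstacle.} The right-hand side of (\ref{R1E2})--(\ref{R1E4}) is smooth in an open neighbourhood of the compact arc $\{\bar\gamma(\zeta):0\leq\zeta\leq\zeta_{0}\}$, so the solution map at time $\zeta_{0}$ is continuous in the initial data; hence there is a neighbourhood of $\bar\gamma(0)$ such that every solution starting there is defined on $[0,\zeta_{0}]$ and lies in the open set $\Omega$ at time $\zeta_{0}$. The forward-invariance and blow-up property of $\Omega$ from Step 1 then force $Z\to\infty$ for every such solution on its maximal interval of existence, which is exactly the claim of the lemma. The main technical obstacle lies in Step 2: the asymptotics $\bar P\to 0$ and $\bar G\to 1$ are not completely formal and must be extracted from the nonlinear structure via the differential inequality $P'\lesssim -\bar Z P^{3}$ together with the divergence of $\int\bar Z\,d\zeta$, in the spirit of the estimates used in the proof of Lemma \ref{LeT1}; once these are in hand, forward-invariance and continuous dependence complete the argument routinely.
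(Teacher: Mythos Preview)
Your three-step architecture (forward-invariant trapping region, asymptotics of the reference orbit, continuous dependence on a compact time interval) is exactly the paper's strategy, and your region $\Omega$ is essentially the paper's region~(\ref{ine1}) with a slightly different parametrisation. So the proposal is correct in outline and matches the paper's approach.

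Two remarks on Step~2, where you rightly locate the real work. First, your justification of $\int^{\zeta^{\ast}}\bar Z\,d\zeta=\infty$ via $\bar Z'\le 2(\bar Z^{2}+1)$ and the substitution $d\zeta=dZ/\bar Z'$ tacitly assumes $\bar Z$ is eventually monotone, which you do not yet know (that requires $\bar G>1/3$ and $\bar P$ small, i.e.\ what you are trying to prove). A clean replacement is to integrate (\ref{R1E2}): $(\log\bar q)'=-2\bar G\bar Z$ and $\bar q\to 0$ give $\int \bar G\bar Z\,d\zeta=\infty$, hence $\int\bar Z\,d\zeta=\infty$ since $\bar G\le 1$. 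Second, the threshold for your inequality $(P^{-2})'\ge \bar Z$ is $PZ\ge 2$ rather than $\sqrt 2$ (with $PZ\ge\sqrt 2$ you only get $(P^{-2})'\ge 0$); this is harmless. The paper handles $\bar Z\bar q\to 0$ differently: it first extracts $\zeta^{\ast}<\infty$ from the analysis in Lemma~\ref{LeT1}, then argues by contradiction that if $Zq\to L>0$ then $q\le C(\zeta^{\ast}-\zeta)$ (from $|q'|\le 2G\cdot Zq$ bounded) and the cubic term in (\ref{dZq}) is $\sim -q^{-1}L^{4}$, which is non-integrable near $\zeta^{\ast}$. Both routes work; yours is slightly more direct once the integral divergence is fixed. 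Similarly, your argument for $\bar G\to 1$ as written only yields $\limsup\bar G=1$; the paper closes this by first noting $\bar G$ is eventually increasing (from (\ref{ED02}) once $\bar P$ is small) so that the limit exists, and then comparing $d\log\tilde Z/dG$ to see that a limit $<1$ would keep $Z$ bounded.
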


\begin{proof}
To start with a number of further consequences of the hypotheses of Lemma
\ref{LeT1} will be derived. The assumption on the initial condition only
plays a role towards the end of the proof.
It has been shown in the proof of Lemma \ref{LeT1} that $\lim_{\zeta
\rightarrow\zeta^{\ast}}q\left(  \zeta\right)  =0$. We now claim that
(\ref{E3}) holds. Suppose that it is not true. Then we claim that the limit
$\lim_{\zeta\rightarrow\zeta^{\ast}}\left(  Zq\right)  \left(  \zeta\right)
=L$ exists and that $L>0$. Indeed, notice first that $\liminf_{\zeta
\rightarrow\zeta^{\ast}}\left(  Zq\right)  \left(  \zeta\right)  >0.$
Otherwise there would exist a sequence $\left\{  \zeta_{n}\right\}  $ such
that $\lim_{n\rightarrow\infty}\zeta_{n}=\zeta^{\ast}$ with $\lim
_{n\rightarrow\infty}\left(  Zq\right)  \left(  \zeta_{n}\right)  =0.$
Combining this with the fact that $q\rightarrow0$ and (\ref{dZq1}) we would
obtain (\ref{E3}), a contradiction. Thus $\liminf_{\zeta\rightarrow\zeta^{\ast
}}\left(  Zq\right)  \left(  \zeta\right)  >0.$ Using again the fact that
$q\rightarrow0$ and (\ref{dZq1}) it follows that $\left(  Zq\right)  $ is
monotone decreasing for $\zeta$ close to $\zeta^{\ast},$ whence the limit
$\lim_{\zeta\rightarrow\zeta^{\ast}}\left(  Zq\right)  \left(  \zeta\right)
=L$ exists. Moreover we have obtained also in this case that $\left(
Zq\right)  \left(  \zeta\right)  >L$ for $\zeta$ close to $\zeta^{\ast}.$

It follows from the proof of Lemma \ref{LeT1} that $\zeta^*<\infty$. 
By the boundedness of the right hand side of (\ref{R1E2}) it follows by
integrating this equation between $\zeta$ and $\zeta^*$ that
$q(\zeta)\le a^{-1}(\zeta^*-\zeta)$ for a positive constant $a$. Hence
$q^{-1}(\zeta)\ge a(\zeta^*-\zeta)^{-1}$. This can be used together with the
limiting behaviour of $Zq$ to estimate the right hand side of (\ref{dZq})
from above. The first term is negative and can be discarded. The second
term tends to zero as $\zeta\to\zeta^*$. The third term can be written in a 
suggestive form as $-q^{-1}[(Zq)\sqrt{((Zq)^2+q^2)^3}]$. The expression in 
square 
brackets tends to a positive limit as $\zeta\to\zeta^*$. Thus the right hand
side of (\ref{dZq}) fails to be integrable, contradicting the fact that
$Zq$ is positive.
This contradiction
completes the proof that 
$\lim_{\zeta\rightarrow\zeta^{\ast}}\left(  Zq\right)  \left(
\zeta\right)  =0$.

We now use some arguments analogous to the ones used in the proof of Lemmas
\ref{LeT1} and \ref{LeT2}. As a next step we prove that
$G(\zeta)$ tends to a limit as $\zeta\to\zeta^*$ and that this limit is
greater than $\frac13$. We first claim that:%
\begin{equation}
S=\limsup_{\zeta\rightarrow\zeta^{\ast}}G\left(  \zeta\right)  \geq\frac
{1}{3}. \label{B9}%
\end{equation}
Indeed, suppose first that $S=\limsup_{\zeta\rightarrow\zeta^{\ast}}G\left(
\zeta\right)  <\frac{1}{3}.$ Since $\lim_{\zeta\rightarrow\zeta^{\ast}}\left(
Zq\right)  \left(  \zeta\right)  =0$ we can approximate (\ref{Q5a}%
)-(\ref{Q5c}) by the system (\ref{EDO1})-(\ref{ED03}). Using (\ref{ED03}) 
it follows that $Z\left(  \zeta\right)  $ is
decreasing for $\zeta$ close to $\zeta^{\ast}.$ This contradicts (\ref{Alt2})
and then (\ref{B9}) follows.
On the other hand (\ref{ED02}) implies that $G$ is increasing if $G>\frac
{1}{4}$ for $\zeta$ close to $\zeta^{\ast}.$ Using (\ref{B9}) it then follows
that $G$ increases for $\zeta$ close to $\zeta^{\ast}.$ Therefore the limit
$\lim_{\zeta\rightarrow\zeta^{\ast}}G\left(  \zeta\right)  $ exists and:%
\[
\lim_{\zeta\rightarrow\zeta^{\ast}}G\left(  \zeta\right)  \geq\frac{1}{3}.%
\]

Since $G$ is monotonically increasing we can parametrize $Z$ as a function of 
$G.$ Let us denote the corresponding function by 
$Z=\tilde{Z}\left(  G\right)  .$ Then by (\ref{ED02}) and (\ref{ED03}):
\begin{equation}
\frac{d (\log\tilde Z)}{dG}
=\frac{(3G-1-\delta_1(\zeta))(1+\tilde Z^{-2})}
{2G[(1-G)-(1+\tilde Z^{-1})\delta_2(\zeta)]}.
\end{equation}
If the limit of $G$ were less than one the right hand side of this expression
would be bounded and it would follow that $Z$ was bounded, a contradiction.
Hence $\lim_{\zeta\to\zeta^*}G(\zeta)=1$.

To complete the proof the condition on the initial data in the hypotheses of
the lemma will be used. Since $\lim_{\zeta
\rightarrow\zeta^{\ast}}Z\left(  \zeta\right)  =\infty,\ \lim_{\zeta
\rightarrow\zeta^{\ast}}q\left(  \zeta\right)  =0,$ $\lim_{\zeta
\rightarrow\zeta^{\ast}}\left(  Zq\right)  \left(  \zeta\right)  =0$ and
$\lim_{\zeta\rightarrow\zeta^{\ast}}G\left(  \zeta\right)  >\frac{1}{3}$ it
follows that for any sufficiently small $\delta>0$ and for any solution 
$\left(  \bar{q},\bar{G},\bar
{Z}\right)  $ that is sufficiently close  to $\left(
q,G,Z\right)  $ at $\zeta=0$ we have for some $\zeta_{0}<\zeta^{\ast}:$%
\begin{equation}
\bar{q}\left(  \zeta_{0}\right)  \leq\delta^3\ \ ,\ \ \bar{G}\left(  \zeta
_{0}\right)  \geq\frac{1}{3}+\delta\ \ ,\ \ \left(  \bar{Z}\bar{q}\right)
\left(  \zeta_{0}\right)  \leq\delta\ \ ,\ \ \bar{Z}\left(  \zeta_{0}\right)
\geq\frac{1}{\delta}.\label{ine1}%
\end{equation}
It will now be shown that for $\delta$ sufficiently small the region defined 
by these four inequalities is invariant. On the part of the boundary of the 
region where $\bar q=\delta^3$ we have $\frac{d \bar q}{d\zeta}<0$. On the 
part of the boundary where $\bar Z\bar q=\delta$ assuming that 
$\delta<3^{-\frac13}$ suffices to show, using (\ref{dZq1}), 
that the derivative of $\bar Z\bar q$ is negative. On the part
with $\bar G=\frac13+\delta$ the following inequality holds:
\begin{equation}
\frac{\partial\bar G}{\partial\zeta}\ge\frac23\left[\frac{2}{3\delta}-1
-(\delta^{\frac12}+\delta^{\frac52})^2-(\delta+\delta^3)^2\right].
\end{equation}
Choosing $\delta$ sufficiently small implies that the right hand side of this
inequality is positive. On the whole region
\begin{equation}
\frac{d\bar Z}{d\zeta}\ge\delta(3-\sqrt{\delta^2+\delta^6}).
\end{equation}
If $\delta$ is small enough then this quantity is positive.
%It then follows from (\ref{EDO1}) that $\bar{q}\left(  \zeta\right)
%\leq\delta$ for $\zeta\geq\zeta_{0}$ as long as $\bar Z$ remains positive.
%Moreover, using (\ref{dZq1}) it follows
%that $\left(  \bar{Z}\bar{q}\right)  \left(  \zeta\right)  \leq\left(
%2\right)  ^{\frac{1}{4}}\delta^{\frac{1}{2}}$ for $\zeta\geq\zeta_{0}$  Then
%(\ref{ED02}) implies that $\bar G$ is increasing if $\delta$ is chosen 
%sufficiently small. Using (\ref{ine1}) 
Putting these facts together shows that the solution starts in the region of 
interest when $\zeta=\zeta_0$ and stays there. In particular
$\bar{G}\left(  \zeta\right)\geq\frac{1}{3}+\delta$ for $\zeta\geq\zeta_{0}.$ 
Therefore $\bar{Z}$ blows up in finite time due to (\ref{ED03}) and Lemma 
\ref{LeT3} follows. 
\end{proof}

\bigskip

\noindent
\textit{Acknowledgements}: JJLV is grateful to J. M.
Mart\'{\i}n-Garc\'{\i}a for interesting discussions concerning the analogies
and differences between the solutions of this paper and those in
\cite{martingarcia}. JJLV acknowledges support of the Humboldt Foundation, the
Max Planck Institute for Gravitational Physics (Golm), the Max Planck
Institute for Mathematics in the Sciences (Leipzig), the Humboldt University
in Berlin and DGES Grant MTM2007-61755 and Universidad Complutense. Both
authors are grateful to the Erwin Schr\"{o}dinger Institute in Vienna, where
part of this research was carried out, for support.

\bigskip

\bigskip

\end{document}